\renewcommand\footnotetextcopyrightpermission[1]{} 
\newcommand{\notforarxiv}[1]{}
\newcommand{\blueit}[1]{#1}
\newcommand{\bluevar}[1]{#1}
\newcommand{\move}{}
\newcommand{\remove}{}
\newcommand{\redit}[1]{#1}
\newcommand{\redr}[1]{#1}
\newcommand{\redvar}[1]{#1}
\newcommand{\newremove}{}
\newlength{\unit}
\newlength{\gap}
\newlength{\labelHeight}
\newtheorem{definition}{Definition}[section]            
\newtheorem{corollary}{Corollary}[section]            
\definecolor{clrAG}{HTML}{7B3F00}                      
\definecolor{clrFigBox1}{RGB}{43,57,144}
\definecolor{clrFigBox2}{RGB}{0,165,171}
\definecolor{clrFigBox3}{RGB}{69,181,57}
\newcommand{\figColBoxA}[1]{{\color{clrFigBox1} #1}}
\newcommand{\figColBoxB}[1]{{\color{clrFigBox2} #1}}
\newcommand{\figColBoxC}[1]{{\color{clrFigBox3} #1}}
\newcommand{\cc}{}
\newcommand{\gr}{}
\newcommand{\oh}{
\begingroup\color{lightgray}0\endgroup}
\newcommand\scalemath[2]{\scalebox{#1}{\mbox{\ensuremath{\displaystyle #2}}}}
\author{Abdalla G. M. Ahmed}
\affiliation{%
  \institution{KAUST}
  \country{KSA}}
\email{abdalla_gafar@hotmail.com}
\author{Mikhail Skopenkov}
\affiliation{%
  \institution{KAUST}
  \country{KSA}}
\email{mikhail.skopenkov@gmail.com}
\author{Markus Hadwiger}
\affiliation{%
  \institution{KAUST}
  \country{KSA}}
\email{markus.hadwiger@kaust.edu.sa}
\author{Peter Wonka}
\affiliation{%
  \institution{KAUST}
  \country{KSA}}
\email{pwonka@gmail.com}
\title{Analysis and Synthesis of Digital Dyadic Sequences}
\date{February 2021}
\begin{abstract}
We explore the space of matrix-generated $(0, m, 2)$-nets and $(0, 2)$-sequences in base 2, also known as digital dyadic nets and sequences.
\blueit{In computer graphics, they are arguably leading the competition for use in rendering.}
We provide a complete characterization of the design space and count the possible number of constructions with and without considering possible reorderings of the point set. 
Based on this analysis, we then show that every digital dyadic net can be reordered into a sequence, together with a corresponding algorithm.
Finally, we present a novel family of self-similar digital dyadic sequences, to be named $\xi$-sequences, that spans a subspace with fewer degrees of freedom.
Those $\xi$-sequences are extremely efficient to sample and compute, and we demonstrate their advantages over the classic Sobol $(0, 2)$-sequence.
\end{abstract}
\keywords{sampling, nets, digital nets, dyadic nets, Sobol sequence, Faure sequence, quasi-Monte Carlo, low-discrepancy sequences, self-similar}
\begin{document}

\maketitle


\section{Introduction\label{sec:introduction}}

Sampling is a fundamental process in computer graphics, underlying Monte Carlo integration in rendering, halftoning, stippling, generative modeling, and object distributions.
A wide range of sampling strategies have been developed, and none is conclusively considered the best so far. Low-discrepancy patterns, however, introduced by Shirley \shortcite{Shirley91Discrepancy} to graphics and popularized mostly by Keller and collaborators, are arguably leading the competition for use in rendering.
Of special interest are so-called dyadic nets and sequences, mainly the Hammersley net and the Sobol sequence. These combine exceptionally high production rates with excellent convergence behavior when used in Monte Carlo integration, for which they are specifically designed. They are very versatile, as described in the discussion by Keller~\shortcite{Keller13Nutshell}. Furthermore, they are easy to understand thanks to their modular geometric ``multi-stratified'' structure \cite{Pharr16PBRT}.
In a nutshell, a dyadic net is concerned with a ``fair'' 2D distribution of \mbox{$N=2^m$} points,
so that specifically constructed rectangles of equal area 
contain exactly one point. These rectangles are obtained by partitioning the unit square $[0,1)^2$ 
into rectangular cells of sizes $\frac{1}{2^m}\times 1$, $\frac{1}{2^{m-1}}\times \frac{1}{2}$, $\ldots$, $1\times \frac{1}{2^m}$
(see Fig.~\ref{fig:example 4-net}).
A dyadic sequence is a sequence of sample points such that all leading sub-sequences of points of size $2^m$, for all~$m$, are dyadic nets as well as all subsequent sub-sequences of size $2^m$ (see Fig.~\ref{fig:sequence tree}). We would like to remark that dyadic nets and sequences fulfill the weaker $n$-rooks and Latin hypercube conditions so that they are proper subsets.
In computer graphics, the research effort in the area of dyadic nets and sequences was spearheaded by Keller and collaborators, who early on used and analyzed dyadic distributions in rendering \cite{Kollig02Efficient,Keller06Myths,Gruenschlos08Nets,Gruenschlos09Nets}.

In this paper, we set out to analyze the design space of matrix-generated, or \emph{digital}, dyadic nets and sequences. One possible reason why this design space is still not fully explored is that the focus of the Monte Carlo community might have been on proving discrepancy bounds and not the actual construction of dyadic sequences and dyadic nets.
Therefore, important contributions in the area of sample construction were made even recently. Examples include a compact parametrization of the whole dyadic space by Ahmed and Wonka \shortcite{Ahmed2021Optimizing}, a compact parametrization of a sub-space comprising dyadic sequences, along with a very efficient generation algorithm, by Helmer et al. \shortcite{Helmer2021Stochastice}, and an all-new construction of a high-dimensional distribution that has pair-wise two-dimensional projections by Paulin et al. \shortcite{Paulin2021Cascaded}.

Our paper makes the following contributions:
\begin{itemize}
\item a comprehensive analysis of 
\blueit{digital} dyadic nets and sequences. We introduce the missing ingredient (Thm.~\ref{th-digital dyadic sequence}) that enables us to map the analysis of matrices generating \blueit{digital} dyadic sequences to the language of linear algebra.

\item 
\redit{the computation of the exact dimension} of the design space for both digital dyadic nets and sequences. 

\item an algorithm that reorders any digital dyadic net into a digital dyadic sequence. For example, we can reorder the Hammersley net or the Larcher-Pillichshammer net into digital dyadic sequences.

\item the discovery of self-similar dyadic sequences, also called $\xi$-sequences. They are more efficient to construct than the Sobol sequence, and they are easily invertible. The subspace of $\xi$-sequences is large enough to contain examples with low discrepancy and a large minimal distance between two sample points.

\end{itemize}

\begin{figure}[tb]
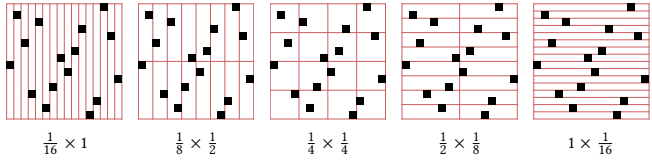

\setlength{\unit}{0.18\columnwidth}
{\centering\scriptsize
  \begin{tabular*}{1\columnwidth}{@{}c@{\extracolsep{\fill}}c@{\extracolsep{\fill}}c@{\extracolsep{\fill}}c@{\extracolsep{\fill}}c@{}}
    \includegraphics[width=1\unit]{images/4-net/0.pdf}&%
    \includegraphics[width=1\unit]{images/4-net/1.pdf}&%
    \includegraphics[width=1\unit]{images/4-net/2.pdf}&%
    \includegraphics[width=1\unit]{images/4-net/3.pdf}&%
    \includegraphics[width=1\unit]{images/4-net/4.pdf}\\[1mm]
    \bluevar{$\frac{1}{16}\times 1$} & 
    \bluevar{$\frac{1}{8}\times \frac{1}{2}$} &
    \bluevar{$\frac{1}{4}\times \frac{1}{4}$} &
    \bluevar{$\frac{1}{2}\times \frac{1}{8}$} &
    \bluevar{$1\times \frac{1}{16}$}
  \end{tabular*}}
  \vspace{-3mm}
    \caption{\label{fig:example 4-net}
        An example 4-bit 16-point dyadic net (i.e., $m=4$, $N=16$) showing all $m + 1 = 5$ possible stratifications: 
        \bluevar{$\frac{1}{2^4}\times 1$, $\frac{1}{2^3}\times \frac{1}{2}$, $\frac{1}{2^2}\times \frac{1}{2^2}$, $\frac{1}{2}\times \frac{1}{2^3}$, $1\times \frac{1}{2^4}$.}
    }
  \vspace{-4mm}
\end{figure}

\begin{figure}[t]
\setlength{\unit}{0.18\columnwidth}
{\centering\scriptsize
    \includegraphics[width=0.85\columnwidth]{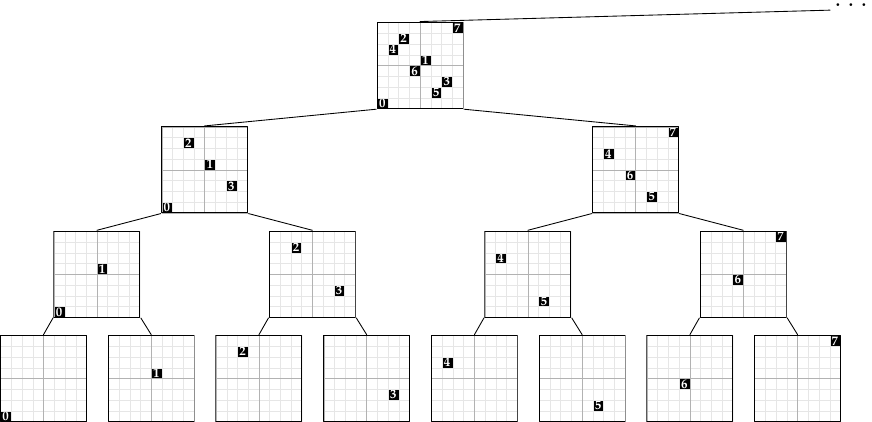}}
    \vspace{-2.5mm}
    \caption{\label{fig:sequence tree}
        The hierarchical structure of a dyadic sequence is visualized as a tree. 
        Note that the individual points also represent (0, 0, 2)-nets (i.e., 1-point dyadic nets). All sub-sequences corresponding to an internal (or leaf) node in the tree are required to be dyadic nets. \bluevar{Here the gray midlines are shown for better appearance.}
    }
    \vspace{-3.5mm}
\end{figure}

\blueit{The rest of the paper is organized as follows.
In Section~\ref{sec-related-work}, we highlight the most relevant literature.
In Section~\ref{sec:digital sequences}, we define and analyze digital dyadic nets and sequences, \redit{recall some known} and state our new theoretical results: 
compute the exact dimension of \redit{the design space for both} 
and give an algorithm that reorders any digital dyadic net into a digital dyadic sequence. In Section~\ref{sec:synthetic sequences}, we apply our algorithm for several examples: we reorder the Hammersley, the Larcher--Pillichshammer, and the Gray nets into digital dyadic sequences. In Section~\ref{sec:xi-sequences}, we introduce and study new self-similar dyadic sequences and give an explicit parametrization of their design space.}


\vspace{-2mm}
\section{Related Work}
\label{sec-related-work}

Our work has interesting connections to different fields of study in computer graphics and mathematics.
In this section, we try to highlight the most relevant literature, organized \bluevar{into} three categories.


\subsection{Low-Discrepancy Sampling\label{sec:LD sampling}}

The study of uniform sample distributions is much older than computer graphics.
An early 1D sequence was presented by van der Corput \shortcite{vanderCorput1935}, who established the digit-reversal principle that underlies almost all Low-Discrepancy (LD) constructions.
Roth~\shortcite{Roth1954OnIrregularities} and Hammersley~\shortcite{Hammersley1960MonteCarlo}, respectively, presented techniques for extending the van der Corput principle to two and higher dimensions.

In a parallel line of study, Sobol~\shortcite{sobol1967} pioneered the \emph{digital} construction of LD sets and sequences, generalized by Niederreiter \shortcite{Niederreiter87Point,Niederreiter92Random}, who coined the terms $(t, m, s)$-nets and $(t, s)$-sequences, and initiated a new field of study of these constructions.
We defer definitions to Section~\ref{sec:digital sequences}, but point out that nets are finite LD sets, while sequences are extensible sets that admit more points while maintaining a low discrepancy.
Most of these constructions are \emph{digital}, i.e. sample coordinates are derived from the sequence number via binary matrix vector multiplication and vector addition modulo~$2$. 
A thorough study of common construction methods is given by Dick and Pillichshammer~\shortcite{Dick10Digital}.
Not all nets and sequences, however, are necessarily digital, and there are known algorithms for constructing non-digital nets \cite{Gruenschlos09Nets,Ahmed2021Optimizing}.

Aside from direct construction techniques of nets and sequences, Owen~\shortcite{Owen95Randomly,Owen98Scrambling} presented a powerful technique to randomize $(t, m, s)$-nets and $(t, s)$-sequences while preserving their favorable properties; in fact improving their quality.
The original concept by Owen, while quite simple, is very memory intensive, consuming as many random bits---per dimension---as the number of samples.
Much work followed Owen's, mainly aimed at simplifications for more efficient implementation \cite{Friedel02Fast,Kollig02Efficient,Burley2020Scrambling}, at the cost of different compromises.
However, there are also variants that are conceptually different; most notably the techniques of Faure and Tezuka \shortcite{Faure02Another} aimed at shuffling the order of the samples rather than scrambling their locations.

Owen's scrambling works only on existing nets and sequences, so a method for generating the underlying sets is still required.
Ahmed and Wonka~\shortcite{Ahmed2021Optimizing} presented a technique for the direct construction of arbitrary 2D nets, but their method can only produce nets and consumes even more random bits than Owen's scrambling.


\subsection{Recursive Tiling Techniques}

Ostromoukhov \shortcite{Ostromoukhov07Polyominoes} is credited for importing the digit-reversal principle of LD constructions (Section~\ref{sec:LD sampling}) and using it to distribute blue-noise samples over a self-similar set of tiles.
The idea was followed up by Wachtel et al. \shortcite{Wachtel14Fast}, and matured in Ahmed et al. \shortcite{Ahmed17ART}, whose tile set, ART, has a low granularity and tuneable properties, presenting a comparable blue-noise competitor to LD sequences \cite{Christensen18Progressive,Ahmed2021Optimizing,Kopfetal2006}.


\subsection{Low-Discrepancy Blue Noise}

A recent trend of research tries to inject blue noise properties into LD sets and/or sequences. Early ideas along this line of research may be traced back to Keller~\shortcite{Keller06Myths}, who speculated on the superiority of LD sets that bear a large minimum distance between points, showcasing the very-low-discrepancy family of nets by Larcher and Pillichshammer~\shortcite{Larcher03Sums} that also attain a large minimum distance between points. Gr{\"u}nschlo{\ss} and colleagues subsequently presented examples of small-sized digital nets with maximized minimum distance between points obtained via a search algorithm \shortcite{Gruenschlos08Nets}, as well as non-digital constructions developed heuristically \shortcite{Gruenschlos09Nets}. The recent work of Christensen et al. \shortcite{Christensen18Progressive}, along with \cite{Pharr16PBRT,Helmer2021Stochastice}, may be seen as an extension of this search approach that targets sequences instead of nets.

A different line of research in this category was that of Ahmed et al. \shortcite{Ahmed16LDBN}, with an algorithm for proactively imposing spectral control over LD nets, at the cost of compromising the quality of the net. Perrier et al.~\shortcite{Perrier18Sequences} extended this idea to sequences. See a survey by Singh et al.~\shortcite{Singh2019} for details. Ahmed and Wonka~\shortcite{Ahmed2021Optimizing} demonstrated the possibility of incorporating spectral control into nets without compromising their quality but highlighted the difficulty of extending the idea to sequences.

\section{\bluevar{Analysis of Digital Dyadic Nets and Sequences}}
\label{sec:digital sequences}
\label{sec:background}

In this section, we define and analyze digital dyadic nets and sequences, and state our new theoretical results: 
computation of the exact dimension \redit{of the design space for both} 
and an algorithm that reorders any digital dyadic net into a digital dyadic sequence.

\subsection{Definitions}
\label{ssec:definitions}
There is a general notion of $(t, m, s)$-nets in base $b$. In this paper, we are focused on the particular case of 2D dyadic nets. They are known as $(0, m, 2)$-nets in base 2 (in short, $m$-nets, or $2^m$-point nets). Further, we specifically study dyadic sequences, also known as $(0, 2)$-sequences in base $2$.

\begin{definition}
\blueit{By a \emph{rectangle} $a\times b$ we mean the set of points $$\{\,(x,y)\in\mathbb{R}^2:x_0\le x<x_0+a,\, y_0\le y<y_0+b\,\} \quad\text{for some $x_0,y_0$.}$$}
Consider $m+1$ possible stratifications of \blueit{the unit square} $[0,1)^2$ into equal rectangles \blueit{$\frac{1}{2^m}\times 1$, $\frac{1}{2^{m-1}}\times \frac{1}{2}$, $\ldots$, $1\times \frac{1}{2^m}$
(\emph{strata}).}
See Fig.~\ref{fig:example 4-net}.
A $2^m$-point set in $[0,1)^2$ is a \emph{dyadic net} if it contains a single point per stratum in each of the $m+1$ stratifications.

A \emph{dyadic sequence} is a finite or infinite sequence of points in $[0,1)^2$ such that the first $2^m$ points, the next $2^m$ points, and all subsequent blocks of
$2^m$ points are dyadic nets for all $m = 0, 1, \ldots$.
\end{definition}

Thus, infinite dyadic sequences comprise sequences of dyadic nets of all sizes. Any dyadic sequence can be visualized as a binary tree, where each node corresponds to a dyadic net. 
See Fig.~\ref{fig:sequence tree}.




Now we turn to \emph{digital dyadic nets} \cite{Niederreiter87Point}. \move{} Hereafter all the computations use linear algebra over the Galois field $GF(2)$, also denoted by $\mathbb{Z}/2\mathbb{Z}$ or $\mathbb{Z}_2$: all vector and matrix entries $\in \{0,1\}$ and $1+1=0$.
Computation in $GF(2)$ is different from computing with binary matrices over the real numbers ($1+1 = 2$), or Boolean algebra ($1+1 =1$).
In what follows, a \emph{matrix} refers to an $m\times m$ matrix with the entries in $GF(2)$ unless otherwise indicated.

\begin{definition}
    \label{def-digital-dyadic-net}
    For two bit vectors $X=(x_1,\dots,x_m)$ and $Y=(y_1,\dots,y_m)$, denote $\rho(X,Y):=(0.x_1\dots x_m,0.y_1\dots y_m)\in [0,1)^2$.
    An ordered collection of $2^m$ distinct points $(X_1,Y_1)$, $(X_2,Y_2)$, \dots is called \emph{digital} if 
    \begin{equation}
    (X_i, Y_i) = \rho(C_xS_i, C_yS_i)\,,     \label{eq:XY S}
    \end{equation}
    where $(C_x, C_y)$ is a fixed pair of binary 
    $m\times m$ matrices, $S_1$, $S_2$, \dots enumerate all the binary $m$-bit column vectors in the order $(0,\dots,0)$, 
    $(1,0,\dots,0)$, $(0,1,\dots,0)$, \dots, $(1,\dots,1)$,
    and the products are modulo~$2$.
    A digital collection of points that is also a dyadic net (sequence) is called a \emph{digital dyadic net (sequence)}.
\end{definition}

Equivalently, based on the analysis from~\cite{Ahmed2021Optimizing}, a digital collection of points is a \emph{digital dyadic net}, if it contains a unique point $\rho(X,Y)$ with any given last $r$ bits of $X$ and last $m-r$ bits of $Y$, for each $r=0,1,\dots,m$. 
Likewise, a digital collection of points is a \emph{digital dyadic sequence}, if the first and all the subsequent blocks of $2^k$ points form a digital dyadic net after removing the last $m-k$ bits from both $X$ and $Y$, for each $k<m$.

\begin{table*}[t]
\footnotesize
\centering
\setlength{\aboverulesep}{1pt}
\setlength{\belowrulesep}{1pt}
\caption{\textbf{Overview of constructions} of different digital dyadic nets and sequences. $L_x, L_y$ are lower unitriangular matrices, $U_x, U_y$ are upper unitriangular matrices, $J$ is the anti-diagonal matrix \bluevar{with all ones on the anti-diagonal}, $P$ is the binary Pascal matrix, $M$ is any invertible matrix, $X_0, Y_0$ are binary bit vectors, and $S$ is the index of the generated point $(X,Y)$ in binary bit vector form. \bluevar{See Section~\ref{sec:synthetic sequences} for details.}}
\vspace{-2mm}
\begin{tabular}{@{}l|l|c|c}
    \toprule
     \textbf{Name of construction}&\textbf{Explicit construction}& \textbf{\# constructions (counting permutations)}   & \textbf{\# constructions (not counting permutations)}    \\
    \midrule
     \cc Digital dyadic nets       & \cc $(X,Y) = (M, L_y U_y J M) S$ & \cc $2^{ 3m (m - 1) / 2 } (2^1-1) (2^2-1) \cdots (2^m-1)$ & \cc $2^{ m (m - 1) }$ \\
     \cc GS-nets       & \cc $(X,Y) = (L_x M, L_y P M) S$ & \cc $2^{ 3m (m - 1) / 2 } (2^1-1) (2^2-1) \cdots (2^m-1)$ & \cc $2^{ m (m - 1) }$ \\
     \gr Digital dyadic sequences  & \gr $(X,Y) = (L_x U_x, L_y P U_x) S$ & \gr $2^{ 3m (m - 1) / 2 }$ & \gr $2^{ m (m - 1) }$ \\
    \cmidrule{1-4}
     \cc GFaure sequences          & \cc $(X,Y) = (L_x, L_y P) S$ & \cc $2^{ m (m - 1) }$   & \cc $2^{ m (m - 1) }$  \\
     \gr Hammersley-like nets       & \gr $(X,Y) = (J, L_yU_y) S$            & \gr $2^{ m (m - 1) }$   & \gr $2^{ m (m - 1) }$  \\
     \cmidrule{1-4}
     \cc Affine digital dyadic nets       & \cc $(X,Y) = (M, L_y U_y J M) S + ( X_0, Y_0 )$ 
     & \cc $2^{ m (3m + 1) / 2 } (2^1-1) (2^2-1) \cdots (2^m-1)$ & \cc $2^{ m^2 }$ \\
     \gr Affine digital dyadic sequences  & \gr $(X,Y) = (L_x U_x, L_y  {P} U_x) S + ( X_0, Y_0 )$ 
     & \gr $2^{ m (3m + 1) / 2 }$ & \gr $2^{ m^2 }$ \\
    \bottomrule
\end{tabular}
\vspace{-3mm}
\label{tab:methods_overview}
\end{table*}

%
%
\begin{table}[t]
\footnotesize
\centering
\setlength{\aboverulesep}{1pt}
\setlength{\belowrulesep}{1pt}
\caption{\textbf{Overview of constructions} of particular digital dyadic nets and sequences. $I$ and $J$ are the identity and anti-diagonal matrices, respectively. 
$ {P}$ is the binary Pascal matrix. \bluevar{$U_{\mathrm{LP}}$} and $L_{\bluevar{\mathrm{LP}}}$ are 
given by~\eqref{eq-C_LP} and~\eqref{eq-LP}. $S$ is the index of the generated point $(X,Y)$ in binary bit vector form. }
\vspace{-2mm}
\begin{tabular}{@{}l|l}
    \toprule
     \textbf{Name of construction}&\textbf{Explicit construction}  \\
    \midrule
     \cc Sobol (Faure) sequence            & \gr $(X,Y) = (I,  {P}) S$      \\
     \cc Hammersley net            & \cc $(X,Y) = (J, I) S$                \\
     \cc Hammersley sequence       & \cc $(X,Y) = (JPJ, PJ) S$                \\
     \cc LP net                    & \gr $(X,Y) = \bluevar{(J, U_{\bluevar{\mathrm{LP}}})} S$  \\
     \cc LP sequence                    & \gr $(X,Y) = (L_{\bluevar{\mathrm{LP}}}, PJ) S$  \\
    \bottomrule
\end{tabular}
\vspace{-3mm}
\label{tab:methods_overview_part2}
\end{table}

This notation enables us to express well-known sequences and nets in succinct form; the map ``$\rho$'' is usually omitted. Table~\ref{tab:methods_overview} summarizes explicit parametrizations of digital dyadic nets and sequences, and the analysis of how many different constructions are possible by each parametrization. The GFaure construction introduced by Tezuka \shortcite{Tezuka1994Generalization} describes a family of sequences obtained by setting $C_x = L_x$ and $C_y = L_y {P}$, with $L_x$ and $L_y$ arbitrary lower unitriangular matrices and ${P}$ the 
binary Pascal matrix. By a \emph{unitriangular} matrix we mean an upper or lower triangular matrix with all ones on the diagonal. The entries of the \emph{Pascal matrix} ${P}$ are $ {P}_{ij} = \binom{j-1}{i-1} \mod 2$, with $i$ row and $j$ column index, respectively:
\begin{equation}
     {P} =  \left(\begin{array}{ccccccccc}
         1      & 1      & 1      & 1      & 1      & 1      & 1      & 1      & \cdots \\
         \oh      & 1      & \oh      & 1      & \oh      & 1      & \oh      & 1      & \cdots \\
         \oh      & \oh      & 1      & 1      & \oh      & \oh      & 1      & 1      & \cdots \\
         \oh      & \oh      & \oh      & 1      & \oh      & \oh      & \oh      & 1      & \cdots \\
         \oh      & \oh      & \oh      & \oh      & 1      & 1      & 1      & 1      & \cdots \\
         \oh      & \oh      & \oh      & \oh      & \oh      & 1      & \oh      & 1      & \cdots \\
         \oh      & \oh      & \oh      & \oh      & \oh      & \oh      & 1      & 1      & \cdots \\
         \oh      & \oh      & \oh      & \oh      & \oh      & \oh      & \oh      & 1      & \cdots \\
         \vdots & \vdots & \vdots & \vdots & \vdots & \vdots & \vdots & \vdots & \ddots 
    \end{array}\right)\,.   \label{eq:pascal matrix}
\end{equation}
Hereafter a zero entry is \bluevar{in gray} for readability.

Table~\ref{tab:methods_overview_part2} summarizes our newly introduced \bluevar{(see Section~\ref{sec:synthetic sequences})} as well as known constructions for reference, e.g. the Hammersley net is obtained by $C_x =  {J}, C_y =  {I}$, with $ {J}$ \bluevar{the} anti-diagonal matrix \bluevar{with all ones on the anti-diagonal}, and the Larcher--Pillichshammer (LP) net uses 
\blueit{$C_x =  {J}, C_y =  U_{\bluevar{\mathrm{LP}}}$ with
\begin{equation}\label{eq-C_LP}
    U_{\bluevar{\mathrm{LP}}}=\left(\begin{array}{ccccc}
         1      & 1      & \cdots & 1      & 1     \\
         \oh      & 1      & \cdots & 1      & 1     \\
         \vdots & \vdots & \ddots & \vdots & \vdots\\
         \oh      & \oh      & \cdots & 1      & 1     \\
         \oh      & \oh      & \cdots & \oh      & 1      
    \end{array}\right)\,.
\end{equation}
Here $C_x =  {J}$ ensures that the points are ordered so that their $x$-coordinates are increasing.}
A clever construction of digital dyadic sequences has been suggested by \cite{sobol1967}; see a nice exposition by \cite{BratleyFox88}. The simplest and the most popular 2D Sobol sequence uses $C_x =  {I}, C_y =  {P}$. It was studied by \cite{Faure1982} and is often called simply the \emph{Sobol sequence}; see \cite{Pharr16PBRT}.

\remove{In the following, we first discuss digital dyadic nets and then digital dyadic sequences.} In order to determine if two pairs of matrices $(C_x, C_y)$ and $(C'_x, C'_y)$ generate the same set of points in a different order, we use the concept of the \emph{characteristic matrix}.
\begin{definition} For a pair of invertible matrices $(C_x, C_y)$,
the matrix  $ {C} = C_y C^{-1}_x$ is called the \emph{characteristic matrix} of the pair.
\end{definition}
If we want to check if two pairs of matrices generate the same set of points, we simply compare their characteristic matrices. The characteristic matrix directly associates the $x$-coordinate of a point with its $y$-coordinate, i.e $Y= {C}X$ when $X$ and $Y$ are written in appropriate binary form.
We also note that changing a pair of matrices $(C_x,C_y)$ to another pair $(C_xM,C_yM)$ with an invertible matrix $M$ preserves the characteristic matrix ${C}$ and enumerates all possible transformations that change the order of points but keep the final net the same. This transformation is equivalent to replacing $S$ by $MS$ in~\eqref{eq:XY S}, i.e., to a reordering of the bit vectors $S$.

In particular, if $M$ is upper unitriangular, then for each $k=1,\dots,m-1$ \bluevar{this} reordering preserves the decomposition of the sequence of $2^m$ possible bit vectors $S$ into blocks of $2^k$ consecutive vectors: only the vectors within each block and the whole blocks are reordered. Such reordering preserves the dyadic sequence property: a digital dyadic sequence is reordered to another digital dyadic sequence. Conversely, one can show that a matrix $M$ having this property must be upper unitriangular. 

We will also need the following definition for a type of matrix that is integral to understanding the digital construction.
\begin{definition}
    \label{def:progressive matrix}
    A \emph{progressive matrix} is a square matrix whose leading principal sub-matrices are invertible.
\end{definition}
In this context, we should also comment on an essential fact of linear algebra in $GF(2)$. A matrix is progressive if and only if it can be factored into the product of a lower unitriangular matrix~$L$ and an upper unitriangular matrix~$U$ \cite[Corollary 3.5.5]{Horn-Johnson-85}. The factorization is also unique.
So we only have $m(m-1)$ bits to determine a progressive matrix. This is $m$ bits less than required to determine an arbitrary $m\times m$ matrix. 
Finally, there are some interesting facts about the Pascal matrix that will be helpful: the Pascal matrix is upper unitriangular; $ {P}$ is its own inverse; mirroring both rows and columns by multiplying with $ {J}$ on each side gives $ {J} {P} {J}$, which is a lower unitriangular matrix; and $ {J} {P} {J}= {P} {J} {P}$ \redr{\cite[Lemma~2.5]{kajiura2018}}. See Appendix~\ref{app:proof of progressive pairs} for short proofs.

Also some comments on (numerical) linear algebra in $GF(2)$. Matrix inversion generally works by adapting the Gauss--Seidel algorithm. The LU-factorization algorithm can also be adapted. See Algorithm~\ref{alg:lu} in the appendix. The QR-factorization using Gram--Schmidt does not work, because vectors can be self-orthogonal. We make heavy use of determinant calculations, in particular, the Laplace expansion is employed in many proofs. As mathematics books treat linear algebra over general fields, we believe that a specific introduction to linear algebra over $GF(2)$ is more likely to be found in engineering textbooks, e.g.~\cite{Bard2009}. We also found packages for the numerical implementations in Python and C++, but ultimately reimplemented all algorithms from scratch.

\subsection{Digital Dyadic Nets}

We first would like to establish the conditions that need to be satisfied for a pair of matrices $(C_x, C_y)$ to form a digital dyadic net. We introduce the term \emph{dyadic pair of matrices} for the following discussion.
\begin{definition}
    A \emph{dyadic pair of matrices} $(C_x, C_y)$ is a pair of matrices such that the \emph{hybrid} matrices
\begin{equation}
    H_{r} = \left(\begin{array}{llll}
         a_{1, 1}   & a_{1, 2}   & \ldots & a_{1, m}   \\
         \vdots     & \vdots     & \ddots & \vdots        \\
         a_{m-r, 1} & a_{m-r, 2} & \ldots & a_{m-r, m} \\
         b_{1, 1}   & b_{1, 2}   & \ldots & b_{1, m}   \\
         \vdots     & \vdots     & \ddots & \vdots        \\
         b_{r, 1}   & b_{r, 2}   & \ldots & b_{r, m}   \\
    \end{array}\right)\,,   \label{eq:hybrid matrix}
\end{equation}
comprising the first $m-r$ rows from $C_x=(a_{ij})$ and the first $r$ rows from $C_y=(b_{ij})$, are invertible for all $r \le m$.
\end{definition}
It is well established \cite{Larcher01Walsh,Gruenschlos08Nets} that two binary matrices $(C_x, C_y)$ form a dyadic pair if and only if they produce a digital dyadic net with Eq.~\eqref{eq:XY S}. However, even though the conditions are known, they are not sufficient to derive an explicit construction or a complete description of the design space which we will \redit{now recall}.


\begin{theorem}\label{th-dyadic-pairs} \redvar{(See \cite[Lemma~2.6]{HoferSuzuki})}
A pair of matrices $(C_x,C_y)$ is dyadic if and only if $C_x$ is invertible, and $C_y=LU {J}C_x$ for some lower unitriangular matrix~$L$ and some upper unitriangular matrix~$U$. 
\end{theorem}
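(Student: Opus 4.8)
The plan is to characterize the dyadic condition purely in terms of the characteristic matrix $C = C_y C_x^{-1}$, and then translate the invertibility-of-hybrid-matrices condition into a statement about submatrices of $C$. First I would observe that since all hybrid matrices $H_r$ must be invertible, taking $r=0$ forces $C_x$ to be invertible; hence we may write $C_y = C C_x$ with $C$ invertible as well. The goal then becomes: $(C_x, C_y)$ is dyadic if and only if $C$ admits a factorization $C = LU J$ with $L$ lower unitriangular and $U$ upper unitriangular. Equivalently, $C J = LU$, i.e. $CJ$ is progressive in the sense of Definition~\ref{def:progressive matrix}, using the fact already recalled in the excerpt that a matrix is progressive iff it factors as (lower unitriangular)(upper unitriangular), uniquely.

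The key computational step is to re-express $\det H_r$ in terms of $C$ and $C_x$. Write $H_r = B_r C_x$, where $B_r$ is the $m\times m$ matrix whose first $m-r$ rows are $e_1^\top,\dots,e_{m-r}^\top$ (picking out the top rows of $C_x$) and whose last $r$ rows are the first $r$ rows of $C$ (since the bottom block of $H_r$ is the first $r$ rows of $C_y = C C_x$). Then $\det H_r = \det B_r \cdot \det C_x$, and since $C_x$ is invertible, $H_r$ is invertible iff $B_r$ is. Now $B_r$ is block-structured: its top-left $(m-r)\times(m-r)$ block is the identity and its top-right block is zero, so $\det B_r$ equals the determinant of the bottom-right $r\times r$ block, which is exactly the submatrix of $C$ consisting of its first $r$ rows and last $r$ columns. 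Denote this trailing-principal-type minor $D_r(C)$. So the dyadic condition is precisely: $D_r(C)\neq 0$ for all $r=1,\dots,m$ (the case $r=0$ being vacuous), together with $C_x$ invertible.

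Next I would identify $D_r(C)$ with a leading principal minor of $CJ$. Right-multiplication by $J$ reverses the order of the columns, so the first $r$ rows and last $r$ columns of $C$ become, after reversing columns, the first $r$ rows and first $r$ columns of $CJ$ — up to a column permutation within that block, whose determinant is $\pm 1 = 1$ in $GF(2)$. Hence $D_r(C)\neq 0$ for all $r$ iff every leading principal submatrix of $CJ$ is invertible, i.e. iff $CJ$ is progressive. By the uniqueness of the LU-type factorization recalled in the excerpt, this is equivalent to $CJ = LU$ for a unique pair $(L,U)$ with $L$ lower unitriangular and $U$ upper unitriangular, i.e. $C = LUJ$ (using $J^{-1}=J$ since $J^2 = I$), and therefore $C_y = C C_x = LUJ C_x$. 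Conversely, any $(C_x, C_y)$ of this form has characteristic matrix $C = LUJ$ with $CJ = LU$ progressive, so all $D_r(C)\neq 0$, so all $H_r$ are invertible, so the pair is dyadic. This closes the equivalence.

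The main obstacle — really the only delicate point — is bookkeeping the index reversals correctly: making sure that "first $m-r$ rows of $C_x$, first $r$ rows of $C_y$" produces exactly the trailing $r\times r$ minor of $C$ in rows $\{1,\dots,r\}$, columns $\{m-r+1,\dots,m\}$, and that multiplying by $J$ on the right converts this family of minors into the leading principal minors of $CJ$. A clean way to keep this straight is to verify the block form of $B_r$ by a direct Laplace expansion along the first $m-r$ rows (each of which is a standard basis vector), reducing $\det B_r$ to the determinant of the complementary block, and then to note that the permutation matrix $J$ restricted to the relevant columns is again an (anti-)identity with unit determinant over $GF(2)$. Everything else is a direct application of the progressive $\Leftrightarrow$ LU fact already granted in the text.
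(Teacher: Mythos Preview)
Your proposal is correct and follows essentially the same route as the paper: the paper first reduces to the pair $(I, C_yC_x^{-1})$ via the invariance of the dyadic property under right multiplication by $C_x^{-1}$ (its Lemma~\ref{l-5}), and then computes the hybrid determinants for $(I,C)$ to obtain the leading principal minors of $CJ$ (its Lemma~\ref{l-dyadic-minors}), which is exactly your factorization $H_r = B_r C_x$ followed by the block computation of $\det B_r$. The only cosmetic difference is that you fold the two lemmas into a single direct argument.
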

\remove{} 
This explicit parametrization enables the possibility to enumerate all possible constructions directly, and to sample from possible constructions without having to check that the condition for a dyadic pair is satisfied. While it is also possible to construct and sample from general invertible matrices $C_x$ over $GF(2)$, this introduces unwanted complexity that will be eliminated shortly. Luckily, for nets we do not need to distinguish between constructions that create the same points in a different order. Therefore, nets can be conveniently constructed in this canonical order, i.e. by ignoring $C_x$ or by using the parametrization of Hammersley-like nets.

We remark that the parametrization in Theorem~\ref{th-dyadic-pairs} is actually symmetric in $C_x$ and $C_y$: if $C_y=LUJC_x$, then $C_x=L'U'JC_y$ with $L'=JU^{-1}J$ and $U'=JL^{-1}J$. Here $L'$ and $U'$ are lower- and upper-unitriangular respectively because the left- and right-multiplication by $J$ reverses the order of rows and columns respectively.

Now we ask the question of \blueit{\emph{how many constructions are possible by the original construction}}. We can determine that there are
\begin{equation*}
2^{m(m-1)} \cdot 2^{m(m-1)/2} \cdot (2^1-1) \cdots (2^{m}-1)
\end{equation*}
possible constructions. The first factor of the product refers to the choices due to enumerating the matrices $L$ and $U$, and the rest to enumerating all invertible matrices $C_x$. 
\blueit{Such enumeration is given by the following known construction algorithm for invertible matrices over $GF(2)$: As the first column, pick any of the $2^{m}-1$ nonzero $m$-bit vectors. As the second column, pick any of the $2^{m}-2$ binary vectors not proportional to the first column. As the third column, pick any of the $2^{m}-2^2$ vectors not contained in the linear span of the first two columns, etc.} 

Next, we wish to answer the question of \bluevar{\emph{how many unique sets of points can be generated}}. This means we are interested in the number of constructions without considering (counting) permutations. 
We can determine that there are
\begin{equation}\label{eq-counting-characteristic-dyadic}
2^{m(m-1)}
\end{equation}
possible constructions by observing that the characteristic matrix has the form $ {C} = LU {J}$. The number of constructions directly follows from the number of bits we can choose for designing arbitrary unitriangular matrices $L$ and $U$. These results are included in Table~\ref{tab:methods_overview}. 
We would like to relegate the detailed proofs to \redr{Appendix~\ref{app:proof of dyadic pairs}}. While the proofs are not necessary to understand the main results in the paper, the ones for dyadic nets provide important insights for the case of dyadic sequences (the more difficult case and main focus of the paper).

\subsection{Digital Dyadic Sequences}
To discuss digital dyadic sequences we first need the following definition.

\begin{definition}
    \label{thm:progressive pair}
    A \emph{progressive pair} of matrices is a pair of matrices such that the $k\times k$ leading principal sub-matrices form dyadic pairs for all $k$. In other words, a pair $(C_x,C_y)=(a_{ij},b_{ij})$ is \emph{progressive}, if
    the \emph{hybrid} matrices
\begin{equation}
    H_{k,r} = \left(\begin{array}{llll}
         a_{1, 1}   & a_{1, 2}   & \ldots & a_{1, k}   \\
         \vdots     & \vdots     & \ddots & \vdots        \\
         a_{k-r, 1} & a_{k-r, 2} & \ldots & a_{k-r, k} \\
         b_{1, 1}   & b_{1, 2}   & \ldots & b_{1, k}   \\
         \vdots     & \vdots     & \ddots & \vdots        \\
         b_{r, 1}   & b_{r, 2}   & \ldots & b_{r, k}   \\
    \end{array}\right)\,,   \label{eq:hybrid matrix 2}
\end{equation}
are invertible for all $r\le k\le m$.
\end{definition}
Note that this definition implies that each of the two matrices is progressive. The concept of a progressive pair of matrices is the missing link that is required to fully map the analysis of digital dyadic sequences to the language of linear algebra. We establish the link as follows.

\begin{theorem}\label{th-digital dyadic sequence}
    A pair of matrices $(C_x, C_y)$ creates a digital dyadic sequence with~\eqref{eq:XY S} if and only if $(C_x, C_y)$ is a progressive pair. 
\end{theorem}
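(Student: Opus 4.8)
The plan is to reduce the digital dyadic sequence condition to a family of digital dyadic net conditions, one for each prefix length $k$, and then recognize the resulting conditions as exactly the invertibility of the hybrid matrices $H_{k,r}$. Recall from the discussion after Definition~\ref{def-digital-dyadic-net} that a digital collection is a digital dyadic sequence precisely when, for every $k \le m$, the first $2^k$ points $\rho(C_x S_i, C_y S_i)$ (with $S_i$ ranging over the vectors whose last $m-k$ entries vanish) form a digital dyadic net after the last $m-k$ bits are deleted from both $X$ and $Y$. So the first step is to make this ``prefix'' operation explicit at the level of matrices: restricting the index vectors $S$ to have zero in their last $m-k$ coordinates means that only the first $k$ columns of $C_x$ and $C_y$ are used, and deleting the last $m-k$ output bits of $X$ and $Y$ means keeping only the first $k$ rows of the outputs. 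Hence the $2^k$-point prefix, read as a digital collection in $[0,1)^2$ at resolution $k$, is generated by the pair of $k\times k$ matrices obtained by taking the top-left $k\times k$ block of $C_x$ and of $C_y$ respectively.

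The second step is to invoke the known characterization of digital dyadic nets: by the criterion recalled just before Theorem~\ref{th-dyadic-pairs} (the result of Larcher and Gr\"unschlo\ss), a pair of $k\times k$ matrices generates a digital dyadic net if and only if it is a dyadic pair, i.e.\ all of its hybrid matrices are invertible. Applying this to the top-left $k\times k$ blocks of $(C_x,C_y)$ for each $k$, one sees that $(C_x,C_y)$ generates a digital dyadic sequence if and only if, for every $k\le m$ and every $r\le k$, the matrix built from the first $k-r$ rows of the top-left $k\times k$ block of $C_x$ stacked on the first $r$ rows of the top-left $k\times k$ block of $C_y$ is invertible. But that matrix is exactly $H_{k,r}$ from Definition~\ref{thm:progressive pair}: truncating a matrix to its top-left $k\times k$ block and then taking its first $k-r$ rows gives the first $k-r$ rows and first $k$ columns of the original matrix. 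So the condition ``$(C_x,C_y)$ is a digital dyadic sequence'' is literally the condition ``all $H_{k,r}$ with $r\le k\le m$ are invertible,'' which is the definition of a progressive pair, finishing the proof.

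The main obstacle I anticipate is not the algebra but getting the bookkeeping of the prefix/truncation correspondence exactly right, and in particular justifying the claim that restricting to index vectors supported on the first $k$ coordinates and then discarding the last $m-k$ output bits genuinely corresponds to passing to the top-left $k\times k$ submatrices --- and that this is compatible with the requirement in the definition of a dyadic sequence that \emph{every} block of $2^k$ consecutive points (not just the leading one) be a net. For the leading blocks the reduction above is transparent; for a later block, say points indexed by $S_i$ with a fixed nonzero pattern in the last $m-k$ coordinates and arbitrary first $k$ coordinates, the generated points are $C_x S_i = C_x S_i^{(0)} + v_x$ where $S_i^{(0)}$ is supported on the first $k$ coordinates and $v_x$ is a fixed vector; the affine shift $v_x, v_y$ does not affect which strata are occupied (a translation by a constant within each coordinate permutes the strata of every stratification among themselves), so each later block is a net if and only if the leading block is. I would spell that translation-invariance out as a short lemma, since it is the one place where a reader could object that the argument only handled the prefix and not the general block. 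With that in hand, the equivalence is immediate from Definition~\ref{thm:progressive pair} and the known net criterion, and no determinant computation is needed here at all.
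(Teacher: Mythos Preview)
Your proposal is correct and follows essentially the same approach as the paper: reduce the sequence condition to the net condition on each top-left $k\times k$ block, identify the resulting hybrid matrices with $H_{k,r}$, and handle the non-leading blocks of $2^k$ points via the fixed XOR-offset (what the paper calls ``a property of XOR-scrambling''). The paper's own proof is a one-line remark to exactly this effect, so your write-up is simply a more explicit version of the same argument.
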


\begin{figure}
    \centering
    \includegraphics[width=0.9\columnwidth]{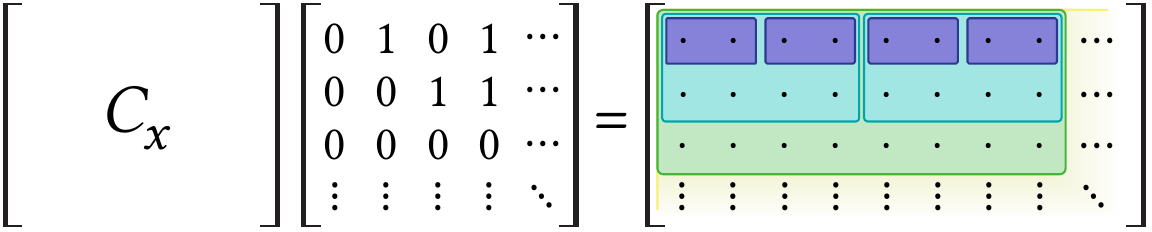}
    \vspace{-2mm}
    \caption{\label{fig:progressive theorem}%
        A visualization to support the proof of Theorem~\ref{th-digital dyadic sequence}. If an $m \times m$ generating matrix for a 1D digital dyadic sequence is multiplied with the matrix that encodes the sequence of integers $[0, 1, 2, 3, \ldots]$ in binary, the output matrix must ensure that specific blocks in the output form $2^1, 2^2, \ldots, 2^m$-point nets. However, to establish the net property for a $2^k$-point net, we only need to look at the first $k$ bits ($k$ rows) of the output. In the output we visualize the blocks of bits that need to be checked to verify the $2^1$-point ({\figColBoxA{blue color}}), $2^2$-point ({\figColBoxB{cyan color}}), and $2^3$-point ({\figColBoxC{green color}}) net property.
    }
    \vspace{-3mm}
\end{figure}

This \bluevar{new} theorem follows from our definitions: the last constraint in the paragraph after Definition~\ref{def-digital-dyadic-net} is a system of linear equations with the matrix being exactly hybrid matrix~\eqref{eq:hybrid matrix 2} from Definition~\ref{thm:progressive pair}. However, the resulting assertion is nontrivial; let us illustrate the point. 
We first start by analyzing the generation of a 1D digital dyadic sequence by a single matrix $C_x$. This matrix is then multiplied by a matrix $ {S}_{all}$ that consists of all possible inputs, i.e. all $m$ bit integers in sequence as binary numbers $[0,1,2,3,4,5,\ldots]$. Now we analyze the output matrix shown in Fig.~\ref{fig:progressive theorem}. In order for the matrix $C_x$ to generate a valid 1D dyadic sequence, the \figColBoxA{blue} boxes need to contain permutations of ${0,1}$, the \figColBoxB{cyan} permutations of ${00, 10, 01, 11}$, the \figColBoxC{green} boxes permutations of ${000, 100, 010, \ldots}$, and so on for all $k$ bit integers up to $m$. This analysis can be extended to a pair of generating matrices $(C_x, C_y)$ by observing that the first $2^k$ points need to form a $2^k$-point net. 
\blueit{All subsequent} blocks of $2^k$ points \blueit{are also dyadic nets; this is} a property of XOR-scrambling,
\blueit{which is} a bit harder to see. We can now analyze the design space by analyzing \blueit{the} properties of progressive pairs of matrices.

\redit{The known} explicit parametrization of the design space of digital dyadic sequences
\redvar{can be elegantly stated in terms of progressive pairs as follows.}
\begin{theorem}\label{th-progressive-pairs} \redvar{(See \cite[Theorem 1.2]{HoferSuzuki}.)}
A pair of matrices $(C_x,C_y)$ is progressive if and only if $C_x=L_xU$ 
and $C_y=L_y {P}U$ for some upper unitriangular matrix $U$ and some lower unitriangular matrices $L_x,L_y$.
\end{theorem}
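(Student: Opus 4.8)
The plan is to derive both directions from a single instance, namely that the Sobol/Faure pair $(I,P)$ is progressive, together with one elementary identity about top-row submatrices of $P$. The engine is three invariance observations obtained by inspecting the hybrid matrices~\eqref{eq:hybrid matrix 2}. If $M$ is upper unitriangular, then because $M$ acts on the first $k$ columns through its leading $k\times k$ block $M_{[k]}$ (which is invertible), one gets $H_{k,r}(C_xM,C_yM)=H_{k,r}(C_x,C_y)\,M_{[k]}$; if $L$ is lower unitriangular then, because $L$ produces the first $k-r$ rows through its leading $(k-r)\times(k-r)$ block, left-multiplying $C_x$ alone by $L$ multiplies only the $C_x$-block of $H_{k,r}$ by an invertible matrix, and left-multiplying $C_y$ alone by $L$ does the same to the $C_y$-block. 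In every case the extra factor is invertible, so ``progressive pair'' is preserved. Chaining the three (right-multiply both components by $U^{-1}$, then left-multiply the first by $L_x^{-1}$ and the second by $L_y^{-1}$), the pair $(L_xU,\,L_yPU)$ is progressive if and only if $(I,P)$ is; this will yield the ``if'' direction.

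The one genuinely special input is a Pascal identity: for $r\ge 1$, $c\ge 0$ with $r+c\le m$, the $r\times r$ submatrix of $P$ on rows $1,\dots,r$ and columns $c{+}1,\dots,c{+}r$ equals $M\,P_{[r]}$, where $P_{[r]}$ is the upper unitriangular leading $r\times r$ block and $M$ is the lower unitriangular matrix with $M_{it}=\binom{c}{i-t}\bmod 2$; this is Vandermonde's identity $\binom{c+j-1}{i-1}=\sum_t\binom{c}{i-t}\binom{j-1}{t-1}$ reduced mod $2$. Hence every such ``windowed'' top-row block of $P$ is invertible. For a pair of the form $(I,D)$ one checks directly, by clearing the identity block of $H_{k,r}(I,D)$ with row operations, that being a progressive pair is equivalent to: the $r\times r$ submatrix of $D$ on rows $1,\dots,r$ and columns $s{+}1,\dots,s{+}r$ is invertible for all $r\ge 1$, $s\ge 0$ with $r+s\le m$. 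Applying the identity with $D=P$ gives that $(I,P)$ is progressive, completing the ``if'' direction.

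For ``only if'', let $(C_x,C_y)$ be progressive. The conditions with $r=0$ say $C_x$ is a progressive matrix, so $C_x=L_xU$ uniquely by the $LU$ fact; set $D:=C_yU^{-1}$. By the invariance observations $(I,D)$ is progressive, so it remains to show $D=L_yP$ for some lower unitriangular $L_y$ (then $C_y=DU=L_yPU$). This amounts to showing that for each $r$ the row $D_r$ lies in the coset $P_r+V_{r-1}$, where $V_{r-1}:=\langle P_1,\dots,P_{r-1}\rangle$; I would prove it by induction on $r$, the case $r=1$ being immediate since the conditions force $D_{1,j}=1$ for all $j$. Assume $D_1,\dots,D_{r-1}\in V_{r-1}$. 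By the Pascal identity, $V_{r-1}$ restricts injectively to any $r-1$ consecutive columns and $V_r$ restricts onto $GF(2)^r$ on any $r$ consecutive columns; so for each window $W=\{s{+}1,\dots,s{+}r\}$ the condition ``the submatrix of $D$ on rows $1,\dots,r$, columns $W$ is invertible'' says $D_1|_W,\dots,D_r|_W$ is a basis of $GF(2)^r$, while $D_1|_W,\dots,D_{r-1}|_W$ span the hyperplane $V_{r-1}|_W$; hence $D_r|_W$ lies in the unique nontrivial coset of that hyperplane, which is the same coset containing $P_r|_W$. Thus $(D_r-P_r)|_W\in V_{r-1}|_W$ for every window. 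Since consecutive windows overlap in $r-1$ columns on which $V_{r-1}$ restricts injectively, the unique elements of $V_{r-1}$ representing $(D_r-P_r)|_W$ agree across all windows and glue to a single $v\in V_{r-1}$ with $v=D_r-P_r$ on all columns; so $D_r\equiv P_r\pmod{V_{r-1}}$, closing the induction.

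I expect the gluing step to be the crux. Everything else — the three invariance identities, the Vandermonde computation, the reformulation of progressivity for pairs $(I,D)$, and the base case — is routine bookkeeping, and the ``if'' direction is essentially immediate once $(I,P)$ is handled. The force of the gluing is that the Pascal row spaces are rigid in the sense that all of their windowed restrictions stay full rank; it is precisely this rigidity that pins $D$ down to the form $L_yP$ and that would fail for an arbitrary upper unitriangular matrix in place of $P$. A minor technical point to verify is that the windows $\{s{+}1,\dots,s{+}r\}$ for $s=0,\dots,m-r$ cover all columns and chain through their pairwise overlaps, which they do.
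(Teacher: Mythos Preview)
Your proof is correct and shares the same scaffolding as the paper's: the invariance lemma (right-multiplication by upper unitriangular $U$ and independent left-multiplication of each component by lower unitriangular matrices preserve progressivity --- the paper's Lemma~\ref{p-1}), the reformulation of ``$(I,D)$ is progressive'' as invertibility of every sliding top-row window of $D$ (the paper's Lemma~\ref{l-4}), and the verification via Vandermonde that $(I,P)$ is progressive (the paper cites \cite{GESSEL1985300} for the same computation). The two proofs diverge in the ``only if'' direction. The paper $LU$-decomposes \emph{both} $C_x=L_xU_x$ and $C_y=L_yU_y$ up front, so after applying the invariance it is left with $(I,\,U_yU_x^{-1})$ where $U_yU_x^{-1}$ is already upper unitriangular; it then proves separately (Lemma~\ref{p-2}, a short column-by-column induction using Laplace expansion) that $P$ is the \emph{unique} upper unitriangular matrix making $(I,\cdot)$ progressive, and concludes $U_yU_x^{-1}=P$. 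You instead $LU$-decompose only $C_x$ and argue directly that $D=C_yU^{-1}$ has the form $L_yP$ via the row-by-row gluing over overlapping windows. This is more work --- as you anticipated --- but it is sound: the crucial point, that $V_{r-1}$ restricts injectively to any $r{-}1$ consecutive columns, is exactly the Vandermonde factorisation you already established, and consecutive windows do overlap in $r{-}1$ columns, so the local witnesses $v_W\in V_{r-1}$ coincide and glue. The paper's route is shorter because decomposing $C_y$ too collapses the unknown to an upper unitriangular matrix, after which uniqueness can be read off one entry at a time; your route trades that second $LU$ step for the gluing argument, and in return gives a somewhat more geometric picture of why the Pascal row spaces are the only ones rigid enough to survive every window.
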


Equivalently, $(C_x,C_y)=(L_xU_x,L_yU_y)$ is progressive, if and only if $U_yU_x^{-1}=P$, where we $LU$-decompose each matrix. The latter is equivalent to $U_xU_y^{-1}=P$ because $P=P^{-1}$, thus the condition is symmetric in $C_x$ and $C_y$.

\redit{A new short} proof \redvar{of the theorem} is given in Appendix~\ref{app:proof of progressive pairs}.
Just like in the case of dyadic nets, this explicit parametrization enables the possibility to enumerate all possible constructions directly and to sample from possible constructions. 
Again, we ask the question of \blueit{\emph{how many constructions are possible}}. We can determine that there are
\begin{equation*}
2^{3m(m-1)/2}
\end{equation*}
possible constructions. The number of constructions directly follows from the number of bits we can choose for designing arbitrary unitriangular matrices $L_x$, $L_y$, and $U$. Next, we wish to answer the question of how many unique sets of points can be generated. This means we are interested in \blueit{\emph{the number of constructions without considering permutations}}. 
We can determine that there are
\begin{equation}\label{eq-counting-characteristic-progressive}
2^{m(m-1)}
\end{equation}
possible constructions by observing that the characteristic matrix has the form $ {C} = L_y {P}L_x^{-1}$. The number of constructions follows from the number of bits we can choose for designing arbitrary unitriangular matrices $L_x$ and $L_y$. 
We therefore propose to use the GFaure construction to explore the complete design space of digital dyadic sequences in case the order in which the points are generated is not important (the GFaure construction ensures that the canonical order is a sequence order, though). While the construction was proposed before, it was not clear however that this particular construction was able to enumerate all possible sequences up to the order of the points.
In addition, for the reduced parameterization to be valid, we need to ensure that distinct choices of the pair $(L_x,L_y)$ lead to distinct characteristic matrices $ {C}=L_y {P}L_x^{-1}$. This is a somewhat nontrivial property proved in Appendix~\ref{app:proof of progressive pairs}, along with the other results of this subsection.

\subsection{Converting Digital Dyadic Nets to Digital Dyadic Sequences}
\label{subsec:net2sequence}
An important discovery of our work is that all digital nets can be reordered to become digital sequences.
We think this is quite striking, since digital nets like the Hammersley net have generally been considered to be non-extensible \cite{Keller13Nutshell,Pharr16PBRT}, whereas they actually are. Also, important digital nets like the LP-net can actually be reordered to become a sequence, a much more useful order for sampling applications.

When comparing the number of unique point sets that can be generated by the digital dyadic net construction~\eqref{eq-counting-characteristic-dyadic} and the digital dyadic sequence construction~\eqref{eq-counting-characteristic-progressive}, we notice that they can generate the same number of unique point sets. Therefore, we can conclude that every digital net can be reordered into a digital sequence. We would like to note that this conclusion may be a bit more complex than it seems. The conclusion is only possible because our enumeration is proven to be exhaustive and proven not to contain any duplicates. This justifies the proofs given in the appendix and also requires the joint treatment of digital nets and sequences, even though the original goal was just to understand sequences.

We first describe the process in the form of linear algebra, and later comment more explicitly on an algorithmic implementation.

\begin{theorem}\label{th-characteristic}
For each dyadic pair $(C_x,C_y)$ there is an invertible matrix $M$ such that $(C_xM,C_yM)$ is a progressive pair. 
\end{theorem}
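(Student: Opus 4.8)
The plan is to use the two explicit parametrizations already established (Theorem~\ref{th-dyadic-pairs} for dyadic pairs and Theorem~\ref{th-progressive-pairs} for progressive pairs) and to exhibit $M$ directly. Starting from a dyadic pair $(C_x,C_y)$, Theorem~\ref{th-dyadic-pairs} gives $C_y = LUJ C_x$ for some lower unitriangular $L$ and upper unitriangular $U$. I want to find an invertible $M$ so that $(C_xM, C_yM) = (C_xM, LUJ C_xM)$ is progressive. By Theorem~\ref{th-progressive-pairs}, it suffices to arrange that $C_xM = L_x U'$ and $C_yM = L_y P U'$ for some common upper unitriangular $U'$ and some lower unitriangular $L_x, L_y$; equivalently, writing $C_y M (C_x M)^{-1} = C_y C_x^{-1} = LUJ$ for the characteristic matrix (which is unchanged by right multiplication by $M$), I need the characteristic matrix $C = LUJ$ to also admit the form $C = L_y P L_x^{-1}$, and simultaneously I need $C_xM$ itself to be \emph{progressive} (so that it has an $LU$-decomposition $L_xU'$). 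The first requirement is a statement purely about the characteristic matrix and does not involve $M$ at all.

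So I would split the argument into two independent pieces. First piece: show that every characteristic matrix of a dyadic pair, i.e.\ every matrix of the form $LUJ$ with $L$ lower unitriangular and $U$ upper unitriangular, can also be written as $L_y P L_x^{-1}$ with $L_x, L_y$ lower unitriangular. This is really the statement that the set of characteristic matrices of dyadic nets (which has been counted as $2^{m(m-1)}$ via the form $LUJ$) coincides with the set of characteristic matrices of dyadic sequences (also counted as $2^{m(m-1)}$ via the form $L_y P L_x^{-1}$); since both are subsets of the invertible matrices and the appendix proves each parametrization is a bijection onto its image, it would suffice to show one containment, and equality of cardinalities gives the other. Concretely I would try to solve $LUJ = L_y P L_x^{-1}$ for $L_x, L_y$: rearranging, $L_y^{-1} L U J L_x = P$, so I need lower unitriangular $L_x, L_y$ with $L_y^{-1}(LUJ)L_x = P$. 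Second piece: given that the characteristic matrix $C$ equals $L_y P L_x^{-1}$, set $U' $ to be any upper unitriangular matrix such that $L_x U'$ is a valid choice of $C_xM$ — i.e.\ choose $M = C_x^{-1} L_x U'$ for a suitable $U'$, which is automatically invertible; then $C_xM = L_x U'$ is progressive and $C_yM = C\,C_xM = L_y P L_x^{-1} L_x U' = L_y P U'$, which is exactly the Theorem~\ref{th-progressive-pairs} form, hence $(C_xM, C_yM)$ is progressive.

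Actually the cleanest route collapses these: once the first piece gives $C = C_y C_x^{-1} = L_y P L_x^{-1}$, simply take $M := C_x^{-1} L_x$. Then $C_x M = L_x$ is lower unitriangular (hence progressive, with trivial $U' = I$), and $C_y M = C\,C_x M = L_y P L_x^{-1} L_x = L_y P$, so $(C_xM, C_yM) = (L_x, L_y P)$ is manifestly a GFaure pair and therefore progressive by Theorem~\ref{th-progressive-pairs}. So the entire theorem reduces to the first piece: \emph{every matrix of the form $LUJ$ is also of the form $L_y P L_x^{-1}$.}

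I expect the main obstacle to be exactly this reduction — showing $\{LUJ\} \subseteq \{L_y P L_x^{-1}\}$ as sets of matrices. The counting argument ($2^{m(m-1)}$ on each side, with bijectivity of both parametrizations proven in the appendix) gives set equality once either inclusion is known, so I only need one inclusion, but establishing it seems to require genuine matrix manipulation over $GF(2)$. The natural attack is to use the Pascal matrix identities quoted earlier — $P^{-1}=P$, $JPJ = PJP$, and $JPJ$ lower unitriangular — to massage $LUJ$ into the desired shape. For instance, $LUJ = L\,(UJ)$; one can try to write $UJ = U J$ and relate $J$ to $P$ via $J = P(PJ)$ and $PJ = J(JPJ) \cdot(\text{unitriangular})$; the identity $JPJ = PJP$ lets one move $J$'s past $P$'s at the cost of extra $P$'s and unitriangular factors. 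Alternatively, since the appendix already contains the bijectivity proofs and a short proof of Theorem~\ref{th-progressive-pairs}, I would look there for a lemma of the form ``the LU-type decomposition of $JC_x$ interacts with $P$ in such-and-such way'' and bootstrap from it. If a direct algebraic identity proves elusive, the fallback is the pure counting/cardinality argument: invoke that the appendix proves both parametrizations are injective with images of size $2^{m(m-1)}$ inside the finite set of invertible matrices, prove the easier inclusion (whichever direction is easier — possibly $\{L_yPL_x^{-1}\}\subseteq\{LUJ\}$, using that $L_y P L_x^{-1} \cdot J$ should be factorable as lower times upper unitriangular, i.e.\ that $L_y P L_x^{-1} J$ is progressive), and conclude equality, hence the needed containment, by finiteness.
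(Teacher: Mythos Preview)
Your proposal is correct, and the counting argument you offer as the ``fallback'' is in fact a complete proof once you observe the trivial inclusion: every progressive pair is a dyadic pair (take $k=m$ in Definition~\ref{thm:progressive pair}), so $\{L_yPL_x^{-1}\}\subseteq\{LUJ\}$ follows immediately from Theorems~\ref{th-dyadic-pairs} and~\ref{th-progressive-pairs}; equal cardinalities $2^{m(m-1)}$ and the injectivity lemma in the appendix then force equality of the two sets of characteristic matrices, and your choice $M=C_x^{-1}L_x$ finishes. The paper actually invokes exactly this counting observation as motivation in the paragraph preceding the theorem, but then goes further and gives an \emph{explicit} $M$: from $C_y=LUJ\,C_x$ it sets $M=C_x^{-1}JU^{-1}PJ$, and using the Pascal identity $PJ=JPJ\cdot P$ (your hinted ``direct algebraic'' route) checks that $(C_xM,C_yM)=(JU^{-1}PJ,\; L\cdot JPJ\cdot P)=(L_x,L_yP)$ with both $L_x=JU^{-1}PJ$ and $L_y=L\cdot JPJ$ manifestly lower unitriangular. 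So the paper's approach is constructive --- it produces closed-form $L_x,L_y$ from the given $L,U$ and underpins Algorithm~\ref{alg:reordering} --- whereas yours is existential; both are valid, and your version is arguably cleaner as a pure existence proof while the paper's buys an algorithm.
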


In other words, the dyadic and progressive pairs have the same sets of characteristic matrices, and each digital dyadic net can be ordered to provide a digital dyadic sequence.

The matrix $M$ is explicitly constructed from the decomposition $C_y=LU {J}C_x$ provided by Theorem~\ref{th-dyadic-pairs}. A possible choice is
\begin{equation*}
    M=C_x^{-1} {J}U^{-1} {P} {J}.
\end{equation*}
The resulting pair has the form 
\begin{multline*}
(C_xM,C_yM)=( {J}U^{-1} {P} {J},L {P} {J})
=(
\underbrace{ {J}U^{-1} {P} {J}}_{L_x},
\underbrace{L {J} {P} {J}}_{L_y}\cdot {P})
\end{multline*}
by the identity $ {P} {J}= {J} {P} {J} {P}$.
Here $L_x= {J}U^{-1} {P} {J}$ and 
$L_y=L {J} {P} {J}$ are lower unitriangular matrices because for any upper unitriangular matrix $U'$ the matrix $ {J}U' {J}$ is lower unitriangular. So we arrive at the decomposition $(C_xM,C_yM)=(L_x,L_y {P})$ from Theorem~\ref{th-progressive-pairs} and therefore the pair is progressive. 

Based on this result, we propose the following construction for dyadic pairs, called \emph{GS-net}: 
\begin{equation}
(C_x,C_y)=(L_xM,L_y {P}M),
\end{equation}
where $L_x,L_y$ are arbitrary lower unitriangular matrices and $M$ is an arbitrary invertible matrix. This makes it clear that every digital dyadic net is simply a digital dyadic sequence reordered by an arbitrary invertible matrix.


We make the reordering explicit in Algorithm~\ref{alg:reordering}. 
It requires the computation of an inverse matrix and the computation of the LU-factorization in $GF(2)$. We implemented this ourselves, but libraries are available. Finally, we would like to recall that each dyadic sequence can be reordered by an upper unitriangular matrix $U$. Therefore, the reordering of a net into a sequence is not unique and the matrices returned by our proposed algorithm can still be right-multiplied by an arbitrary upper triangular matrix $U$.

\begin{algorithm} [tb]
    \caption{
        Converting a digital dyadic net into a digital dyadic sequence
    }
    \label{alg:reordering}
    \KwIn{
         a pair of matrices $(C_x, C_y)$ that are known to create a digital dyadic net with~\eqref{eq:XY S};\\
    }
    \KwOut{a pair of matrices $(C_{\textit{xnew}},C_{\textit{ynew}})$ that creates a digital dyadic sequence with the same points.
    }
    Compute $C_x^{-1}$ and the characteristic matrix $ {C} = C_yC_x^{-1}$\;
    Compute $ {C}' =  {C} {J}$, flipping the columns of $C$\; 
    Compute the LU-factorization of $ {C}'$ yielding $L$ and $U$ unitriangular matrices \bluevar{using Algorithm~\ref{alg:lu} in the appendix}\; 
    Return the two matrices $C_{\textit{xnew}} =  {J}U^{-1} {P} {J}$ and $C_{\textit{ynew}} = L {P} {J}$.
\end{algorithm}


\subsection{Affine Digital Dyadic Nets and Sequences}

Seeking to enlarge the space of digital dyadic nets and sequences, we apply XOR scrambling introduced by Kolleg and Keller \shortcite{Kollig02Efficient}. Namely, instead of~\eqref{eq:XY S}, we use equation
\begin{equation}
    (X, Y) = (C_x\bluevar{S}, C_y\bluevar{S}) + (X_0, Y_0)  \label{eq:xor scrambling}
\end{equation}
with some fixed bit vectors $X_0, Y_0$. This gives $2m$ additional degrees of freedom if we take the order of points into account.

Then we \blueit{rewrite \eqref{eq:xor scrambling} as}
$$
    Y
        = C_y C_x^{-1} (X + X_0) + Y_0
        = C_y C_x^{-1} X + Y_0^{\prime}
$$        
for some constant bit-vector $Y_0^{\prime}$.
This means that for any pair $(X_0, Y_0)$ of constant bit-vectors used to XOR-scramble a digital dyadic sequence, there exists a single bit vector $Y_0^{\prime}$ that would produce the same set of points by scrambling only along the $y$ axis.
This means that the actual dimension of xor-scrambling is only $m$.
Note that this $Y_0^{\prime}$ vector may not be absorbed within any matrix multiplication, so it represents a new degree of freedom.
Incidentally, it exactly compensates the $m$ diagonal bits we lost earlier.
This makes the dimension of affine digital dyadic sequences exactly $m^2$, allocated as $m(m-1)/2$ bits below the diagonal of each of the $C_x$ and $C_y$ matrices, and $m$ bits for XOR scrambling of $y$.



\begin{figure}
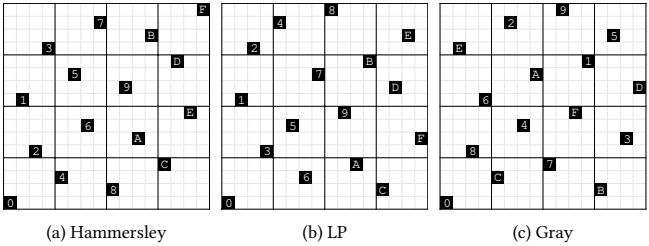

    \setlength{\unit}{(1\columnwidth - 4\gap)/3}
    \centering
\begin{tabular*}{1\columnwidth}{@{}c@{\extracolsep{\fill}}c@{\extracolsep{\fill}}c@{}} 
    \includegraphics[width=1\unit]{images/partitions/Hammersley.pdf}
&    \includegraphics[width=1\unit]{images/partitions/LP.pdf}
&    \includegraphics[width=1\unit]{images/partitions/gray4x4.pdf}
\\
{\footnotesize (a) Hammersley} &
{\footnotesize (b) LP} &
{\footnotesize (c) Gray}
\end{tabular*}
\vspace{-0.3cm}
    \caption{\label{fig:gray partition}%
        The first 16-point nets of 256-point (a) Hammersley (b) LP (c) Gray sequences, with possible indexings (in hexadecimal) of the points.
    }
    \vspace{-5mm}
\end{figure}

\section{Synthetic Sequences}
\label{sec:synthetic sequences}

In this section, we demonstrate examples of applying the knowledge gained through Section~\ref{sec:digital sequences} to synthesizing dyadic sequences with favorable properties, and in the following section we give a third demonstration of a whole class of sequences. 


\subsection{Designing Dyadic Nets and Sequences by Search and Optimization}

For applications of Monte Carlo integration, such as rendering, one often searches for digital dyadic nets and sequences that are optimal in a sense.
Common optimization targets include minimum distance between points and star discrepancy \cite{Keller06Myths}.
\emph{Star discrepancy} $D^*$ measures how the fraction $n(x,y)/2^m$ of points of an $m$-net in a rectangle $[0,x)\times[0,y)$ deviates from the area of the rectangle; formally, $D^*=\sup|xy-n(x,y)/2^m|$, where the supremum is over all $x,y$ in $[0,1]$.
There are multiple simple ways to search for dyadic nets and sequences with specific properties.
For example, for a small bit depth ($m=4, N=16$), it is feasible to perform an exhaustive search over all digital dyadic nets or sequences.
For larger bit depths, one can use heuristic search, e.g. hill climbing. At each iteration, all possible one bit mutations to an existing net or sequence can be evaluated and the optimum mutation can be chosen until no mutation brings an improvement in the objective.
One interesting possibility is a hierarchical search. Due to the incremental nature of nets and sequences we may optimize progressively, e.g. first searching for a $k$ bit solution and then optimizing for additional bits while retaining the leading samples (bits) in the net or sequence. 

The important aspect of our work is that any of these explorations can be done efficiently. The complete characterization of the design space for nets and sequences enables us to only visit candidates, e.g. generator matrices for a digital construction, that are viable. This improves over previous work that explored a much larger set of constructions and then filtered out invalid candidates. E.g. Gr{\"u}nschlo{\ss} et al. \shortcite{Gruenschlos08Nets} use heuristic search for alternative $C_x$ matrices to produce Hammersley-like nets. Since the percentage of viable candidates can be small, a search using our explicit construction can bring orders of magnitude speed-up.

Instead of searching for, e.g. new pairs $(C_x, C_y)$, different dyadic nets are typically derived from these, especially Sobol, by applying the dyadic-preserving Owen's scrambling \shortcite{Owen98Scrambling}. Faithful application of Owen's scrambling is computationally challenging, and common implementations offer different trade-offs between speed, quality, and flexibility \cite{Helmer2021Stochastice, Kollig02Efficient, Burley2020Scrambling}. Our explicit characterization and understanding of sequences provide multiple advantages. It is hard to predict the properties of an Owen scrambled sequence. The above-mentioned
hierarchical search would not be possible. Also, there are advantages when creating nets from sequences and distributing them over pixels, see e.g. PBRT~\cite{Pharr16PBRT} or Ahmed and Wonka~\shortcite{Ahmed20Screen}. 
In Fig.~\ref{fig:teaser}, we show example digital dyadic sequences.


\subsection{Hammersley Sequences}

A Hammersley net is generated by a pair of matrices $( {J},  {I})$. Executing Algorithm~\ref{alg:reordering} 
by manual computation, we arrive at the progressive pair of matrices $(  {J} {P} {J},  {P} {J})$. For example, the 256-point Hammersley net can be sequenced using the matrix pair
\begin{equation}
    \scalemath{0.8}{\left( \left(\begin{array}{ccccccccccc}
             1      & \oh      & \oh      & \oh      & \oh      & \oh      & \oh      & \oh     \\
             1      & 1      & \oh      & \oh      & \oh      & \oh      & \oh      & \oh     \\
             1      & \oh      & 1      & \oh      & \oh      & \oh      & \oh      & \oh     \\
             1      & 1      & 1      & 1      & \oh      & \oh      & \oh      & \oh     \\
             1      & \oh      & \oh      & \oh      & 1      & \oh      & \oh      & \oh     \\
             1      & 1      & \oh      & \oh      & 1      & 1      & \oh      & \oh     \\
             1      & \oh      & 1      & \oh      & 1      & \oh      & 1      & \oh     \\
             1      & 1      & 1      & 1      & 1      & 1      & 1      & 1     \\
        \end{array}\right),
        \left(\begin{array}{ccccccccccc}
             1      & 1      & 1      & 1      & 1      & 1      & 1      & 1     \\
             1      & \oh      & 1      & \oh      & 1      & \oh      & 1      & \oh     \\
             1      & 1      & \oh      & \oh      & 1      & 1      & \oh      & \oh     \\
             1      & \oh      & \oh      & \oh      & 1      & \oh      & \oh      & \oh     \\
             1      & 1      & 1      & 1      & \oh      & \oh      & \oh      & \oh     \\
             1      & \oh      & 1      & \oh      & \oh      & \oh      & \oh      & \oh     \\
             1      & 1      & \oh      & \oh      & \oh      & \oh      & \oh      & \oh     \\
             1      & \oh      & \oh      & \oh      & \oh      & \oh      & \oh      & \oh     \\
        \end{array}\right) \right)}\,.
\end{equation}
%
See Fig.~\ref{fig:gray partition}(a) for the layout of the first 16-point net in the 256-point sequence with possible indexing.

\subsection{LP Sequences} \label{ssec-LP}
While the Hammersley nets have the worst known star discrepancy among digital dyadic nets, the LP nets of Larcher and Pillichshammer \shortcite{Larcher03Sums} are believed to have the best one according to numerical experiments measuring the star discrepancy. 
We use $2^{2^k}$-point LP nets as a second example of converting a net into a sequence. 

An LP net is generated by \blueit{the} pair of matrices
\bluevar{$(J,U_{\bluevar{\mathrm{LP}}})$} 
given by~\eqref{eq-C_LP}. The corresponding sequence is generated by the pair $(L_{\bluevar{\mathrm{LP}}},PJ)$, where \bluevar{$L_{\bluevar{\mathrm{LP}}}=L$} is the lower-unitriangular matrix with the entries 
\begin{equation}
\label{eq-LP}
L_{11}=1,L_{i1}= L_{1j}=0, L_{ij}=\binom{i-2}{j-2}\mod 2 \quad\text{ for }2\le i,j\le m.
\end{equation}
Indeed, the pair $(L_{\bluevar{\mathrm{LP}}},PJ)$ is progressive by Theorem~\ref{th-progressive-pairs} because $PJ=JPJ\cdot P=:L_y\cdot P$.
The pair $(L_{\bluevar{\mathrm{LP}}},PJ)$ generates the same net because the characteristic matrices are the same: $PJL^{-1}_{\bluevar{\mathrm{LP}}}=\bluevar{U_{\bluevar{\mathrm{LP}}}J}$
if $m$ is a power of $2$. This identity is proved at the end of Appendix~\ref{app:proof of progressive pairs}.
For example, the following pair of matrices creates a 256-point LP net as a digital dyadic sequence:
\begin{equation}
    \scalemath{0.8}{\left( \left(\begin{array}{ccccccccccc}
             1      & \oh      & \oh      & \oh      & \oh      & \oh      & \oh      & \oh \\
             \oh      & 1      & \oh      & \oh      & \oh      & \oh      & \oh      & \oh \\
             \oh      & 1      & 1      & \oh      & \oh      & \oh      & \oh      & \oh \\
             \oh      & 1      & \oh      & 1      & \oh      & \oh      & \oh      & \oh \\
             \oh      & 1      & 1      & 1      & 1      & \oh      & \oh      & \oh \\
             \oh      & 1      & \oh      & \oh      & \oh      & 1      & \oh      & \oh \\
             \oh      & 1      & 1      & \oh      & \oh      & 1      & 1      & \oh \\
             \oh      & 1      & \oh      & 1      & \oh      & 1      & \oh      & 1 \\
        \end{array}\right),
        \left(\begin{array}{ccccccccccc}
             1      & 1      & 1      & 1      & 1      & 1      & 1      & 1 \\
             1      & \oh      & 1      & \oh      & 1      & \oh      & 1      & \oh \\
             1      & 1      & \oh      & \oh      & 1      & 1      & \oh      & \oh \\
             1      & \oh      & \oh      & \oh      & 1      & \oh      & \oh      & \oh \\
             1      & 1      & 1      & 1      & \oh      & \oh      & \oh      & \oh \\
             1      & \oh      & 1      & \oh      & \oh      & \oh      & \oh      & \oh \\
             1      & 1      & \oh      & \oh      & \oh      & \oh      & \oh      & \oh \\
             1      & \oh      & \oh      & \oh      & \oh      & \oh      & \oh      & \oh \\
        \end{array}\right) \right)}\,. \label{eq:LP sequence}
\end{equation}
%
See Fig.~\ref{fig:gray partition}(b). 
The analysis of the star discrepancy of all possible sub-sequences is an interesting open question.


\subsection{Converting Point Sets to Digital Dyadic Sequences}

Algorithm~\ref{alg:converting} can be used to test if an arbitrary set of points (including sets of points that are known dyadic nets) allows for a digital construction. If it is possible, the algorithm produces the pair of matrices that generate the given set of points as a digital dyadic sequence. 

\begin{algorithm} [tb]
    \caption{
        Converting a point set into a digital dyadic sequence
    }
    \label{alg:converting}
    \KwIn{
         an arbitrary $2^m$-point set in $[0,1)^2$;\\
    }
    \KwOut{a pair of matrices that creates a digital dyadic sequence with the same points using~\eqref{eq:XY S}, if it exists.
    }
    Order the points sequentially along the $x$-axis and
    set $C_x  = {J}$\; 
    Use the $y$-coordinates of the points at power-of-two indices as columns of the prospected $C_y$\; 
    Conduct one pass through the other points to validate if they coincide with the ones given by~\eqref{eq:XY S}, and check if $C_y$ is progressive\;
    If this step works, we have a digital dyadic net that can subsequently be converted to a sequence using Algorithm~\ref{alg:reordering}.
\end{algorithm}


\subsection{Gray Sequences}
We applied the conversion algorithm above to the Gray Code family of nets introduced by Ahmed and Wonka \shortcite{Ahmed2021Optimizing}, obtained by using a Gray Code-ordered van der Corput sequence to supplement the $x$-offsets of the points along the columns and the $y$-offsets along the rows. These nets turned out to be digital.

A 256-point Gray net (from now on, we will drop the word ``Code'') is generated by the pair of matrices $(J, C')$, where
\begin{equation}
    C'=
\scalemath{0.8}{\left(\begin{array}{ccccccccccc}
             1      & \oh      & \oh      & \oh      & \oh      & \oh      & \oh      & \oh \\
             1      & 1      & \oh      & \oh      & \oh      & \oh      & \oh      & \oh \\
             1      & 1      & 1      & \oh      & \oh      & \oh      & \oh      & \oh \\
             1      & 1      & 1      & 1      & \oh      & \oh      & \oh      & \oh \\
             \oh      & \oh      & \oh      & \oh      & 1      & 1      & \oh      & \oh \\
             \oh      & \oh      & \oh      & \oh      & \oh      & 1      & 1      & \oh \\
             \oh      & \oh      & \oh      & \oh      & \oh      & \oh      & 1      & 1 \\
             \oh      & \oh      & \oh      & \oh      & \oh      & \oh      & \oh      & 1 \\
        \end{array}\right)}
        \,. \label{eq:Gray net}
\end{equation}
The corresponding sequence is generated by
\begin{equation}
    \scalemath{0.8}{\left( \left(\begin{array}{ccccccccccc}
             1      & \oh      & \oh      & \oh      & \oh      & \oh      & \oh      & \oh \\
             \oh      & 1      & \oh      & \oh      & \oh      & \oh      & \oh      & \oh \\
             1      & 1      & 1      & \oh      & \oh      & \oh      & \oh      & \oh \\
             \oh      & \oh      & \oh      & 1      & \oh      & \oh      & \oh      & \oh \\
             1      & \oh      & \oh      & \oh      & 1      & \oh      & \oh      & \oh \\
             1      & 1      & \oh      & \oh      & 1      & 1      & \oh      & \oh \\
             1      & \oh      & 1      & \oh      & 1      & \oh      & 1      & \oh \\
             1      & 1      & 1      & 1      & 1      & 1      & 1      & 1 \\
        \end{array}\right),
        \left(\begin{array}{ccccccccccc}
             1      & 1      & 1      & 1      & 1      & 1      & 1      & 1 \\
             \oh      & 1      & \oh      & 1      & \oh      & 1      & \oh      & 1 \\
             1      & \oh      & \oh      & 1      & 1      & \oh      & \oh      & 1 \\
             \oh      & \oh      & \oh      & 1      & \oh      & \oh      & \oh      & 1 \\
             1      & 1      & 1      & 1      & \oh      & \oh      & \oh      & \oh \\
             1      & \oh      & 1      & \oh      & \oh      & \oh      & \oh      & \oh \\
             1      & 1      & \oh      & \oh      & \oh      & \oh      & \oh      & \oh \\
             1      & \oh      & \oh      & \oh      & \oh      & \oh      & \oh      & \oh \\
        \end{array}\right) \right)}\,. \label{eq:Gray sequence}
\end{equation}
See Fig.~\ref{fig:gray partition}(c). Note that matrix~\eqref{eq:Gray net} generating the Gray net has the same structure for all even $m$ we tried, whereas matrices~\eqref{eq:Gray sequence} look different for different sizes of the sequence. 



\subsection{Validation}

Let us report the following experiments that clearly demonstrate the utility of sequences over nets for progressive sampling: We considered the $256$-point LP net. First, we took a random ordering of the points and computed the star discrepancies of the first $16$ points, the first $32$ points, the first $48$ points, etc. Then we took the sequence ordering given by our algorithm (see Table~\ref{tab:methods_overview_part2}) and computed similar star discrepancies. Finally, we computed the ratios of the $16$ numbers for the random ordering to the $16$ numbers for the sequence ordering, as depicted in Table~\ref{tab:ratio}.
\begin{table*}[htb]
\caption{The ratio of the star discrepancy of the first $N$ points of the $256$-point LP net for the random ordering to the one for the sequence ordering}
    \centering
    \bluevar{
    \begin{tabular}{|c|c|c|c|c|c|c|c|c|c|c|c|c|c|c|c|c|}
    \hline
         $N$ & 16 & 32 & 48 & 64 & 80 & 96 & 112 & 128
           & 144 & 160 & 176 & 192 & 208 & 224 & 240 & 256\\
    \hline  
         Ratio & 2.22 & 2.25 & 2.23 & 2.60 & 2.59 & 2.71 & 3.04 & 3.94 & 2.98 & 3.16 & 2.65 & 2.25 & 1.85 &1.54 & 1.25 &1.00 \\
         \hline
    \end{tabular}
    }
    \label{tab:ratio}
\end{table*}
\bluevar{
 We obtained that the first number has become $2.22$ times smaller, the second one $2.25$ times smaller, etc. Note that the original order would have worse results than the randomly reordered one.
}


\section{Self-Similar Dyadic Sequences\label{sec:xi-sequences}}

In this section, we propose a family of digital dyadic sequences that is \remove{} efficient to compute and invert. At the same time, the family is large enough to allow for a meaningful exploration of the design space in order to select high-quality representatives.
Our construction is inspired by recursive tiling techniques for distributing blue-noise samples \cite{Ostromoukhov07Polyominoes,Wachtel14Fast,Ahmed17ART}.

\begin{figure}
    \setlength{\unit}{(10\columnwidth - 30\gap)/35}              
   {\scriptsize
    \begin{tabular*}{\columnwidth}{@{}
c@{\extracolsep{\fill}}c@{\extracolsep{\fill}}c@{\extracolsep{\fill}}c@{}}%
        \includegraphics[height=1\unit]{images/teaser/xi-0.pdf}&%
        \includegraphics[height=1\unit]{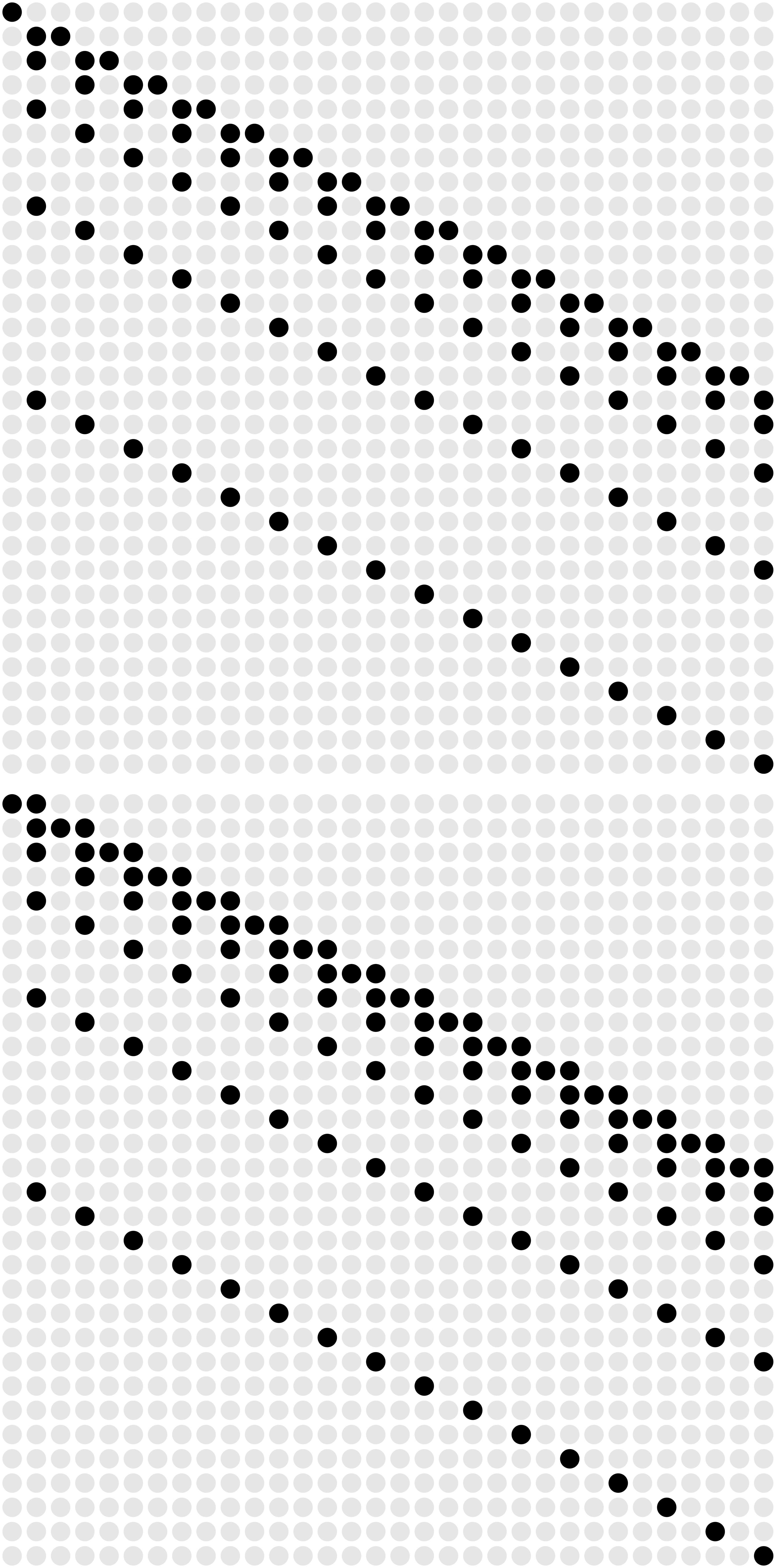}&%
        \includegraphics[height=1\unit]{images/teaser/xi-0-points.pdf}&%
        \includegraphics[height=1\unit]{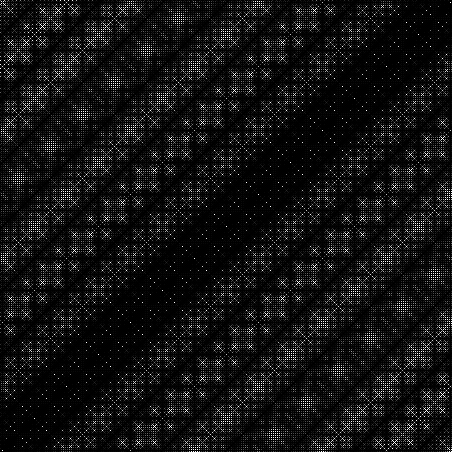}\\%
        \includegraphics[height=1\unit]{images/teaser/xi-1.pdf}&%
        \includegraphics[height=1\unit]{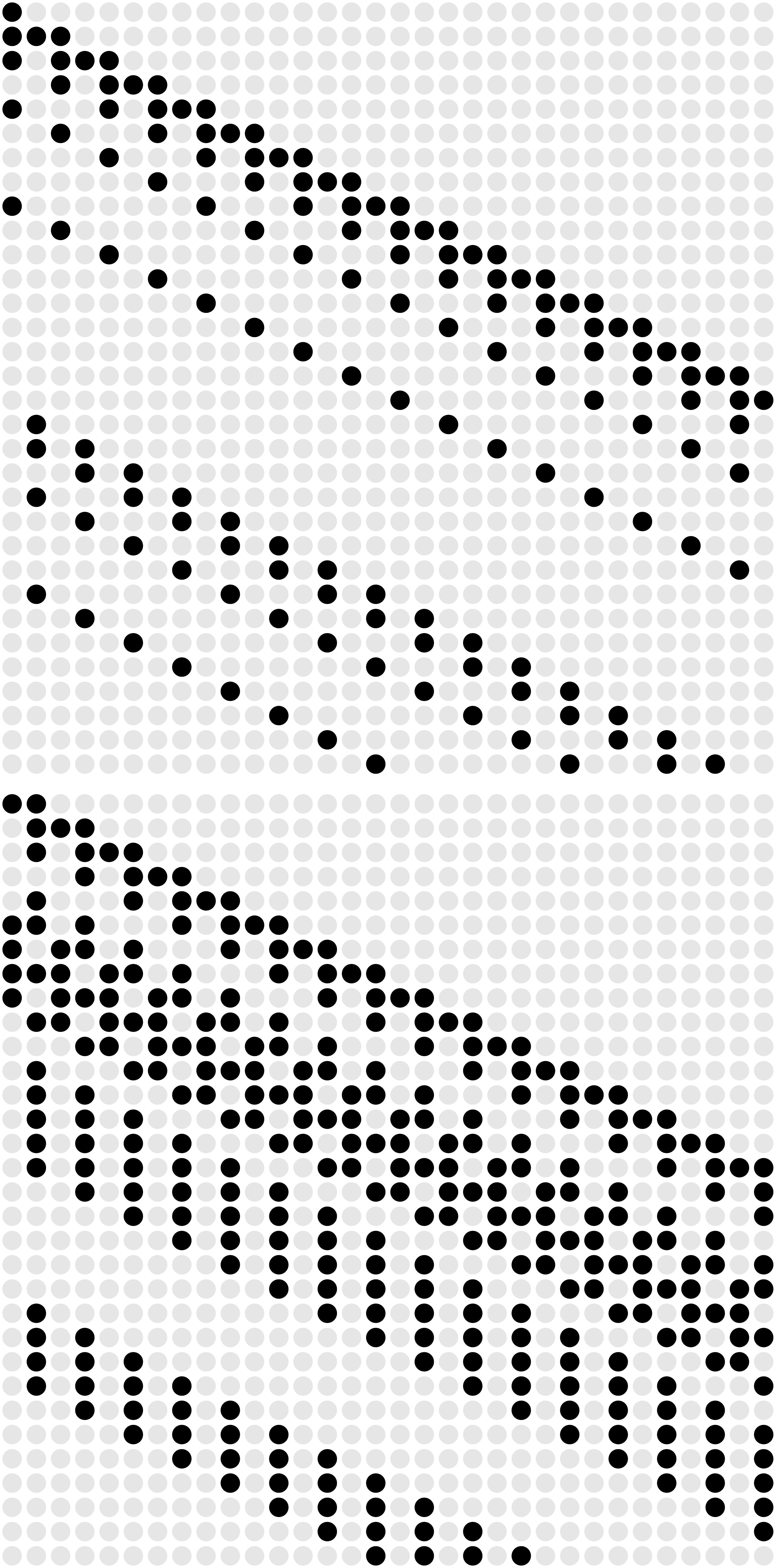}&%
        \includegraphics[height=1\unit]{images/teaser/xi-1-points.pdf}&%
        \includegraphics[height=1\unit]{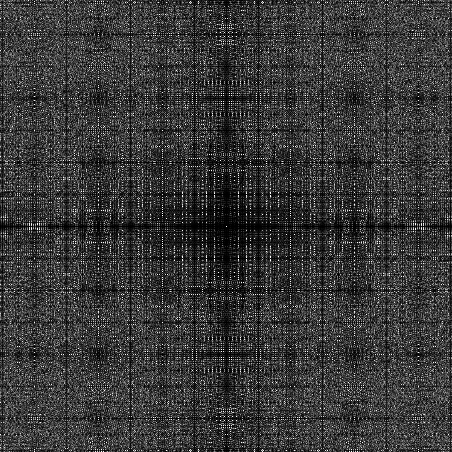}\\%
        \includegraphics[height=1\unit]{images/teaser/xi-2.pdf}&%
        \includegraphics[height=1\unit]{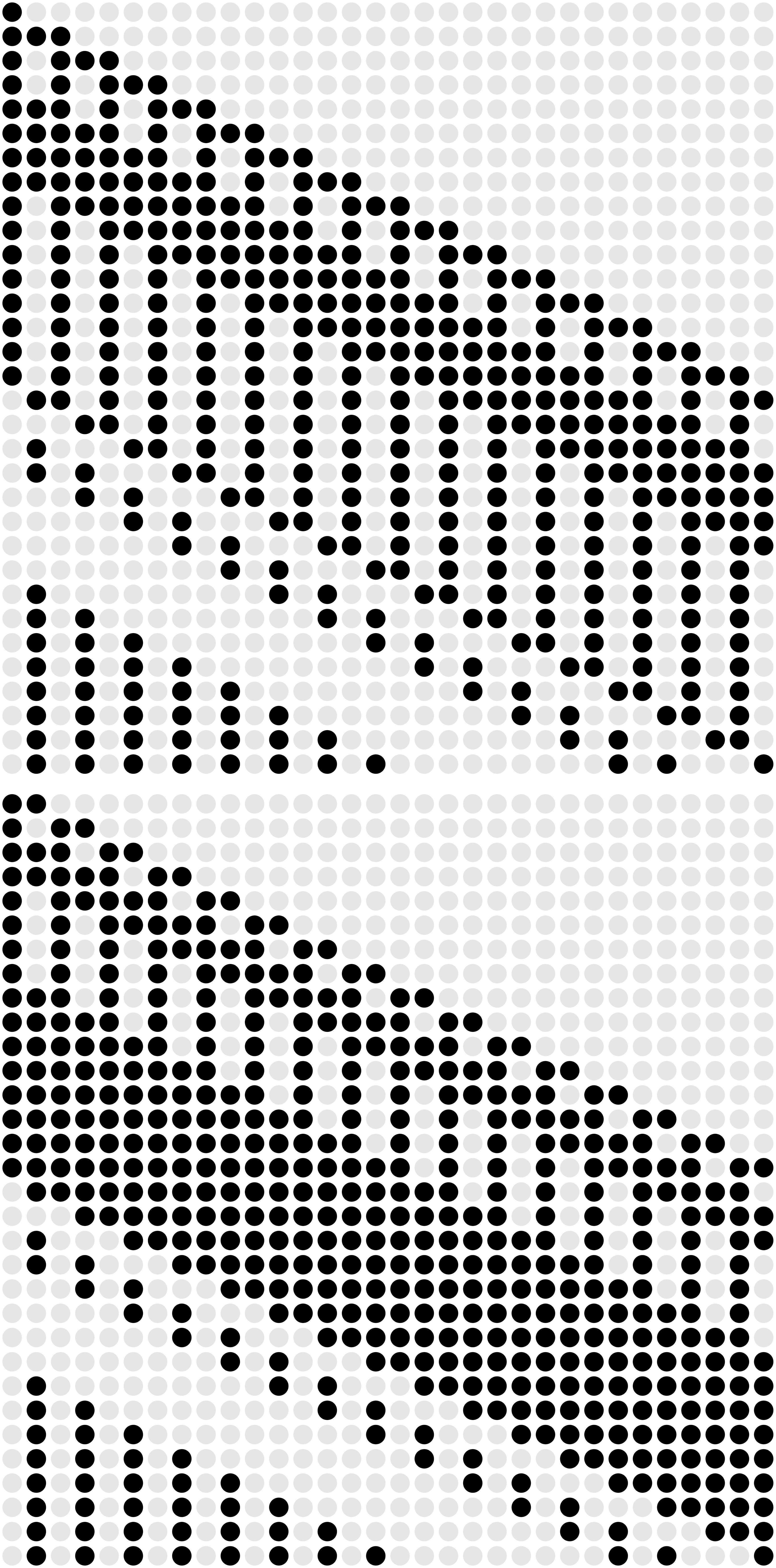}&%
        \includegraphics[height=1\unit]{images/teaser/xi-2-points.pdf}&%
        \includegraphics[height=1\unit]{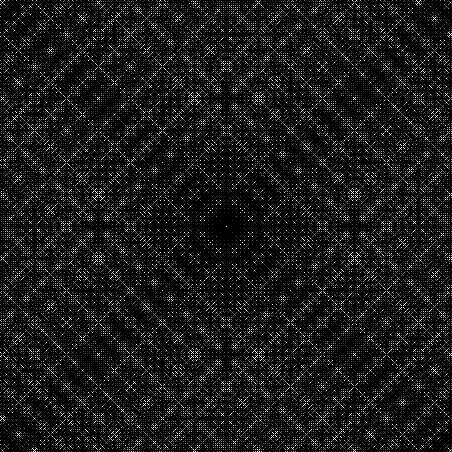}
   \end{tabular*}}
    \vspace{-2mm}
    \caption{%
       \label{fig:xi}
       Structure, generating matrices, points, and periodograms of various self-similar $\xi$-sequences. The plots to the left depict the first few points; larger circles show the points closer to the beginning of the sequence. The top plots show the $\xi_0$-sequence generated by $X=Y=(1,0,\dots,0)$.
       }
\end{figure}

We start with the geometric design requirements. After extensive experiments in trying to design a self-similar sequence, we settled for the following idea. 
We are looking for an infinite dyadic sequence such that every fourth point belongs to the bottom-left quadrant $[0,1/2)^2$ and for each $k=0,1,2,\dots$ the first $2^k$ points appearing in this quadrant form a scaled version of the first $2^k$ points of the whole sequence in $[0,1)^2$, with the same order. See Fig.~\ref{fig:xi} to the top-left.
One can formulate this design requirement as
\begin{equation}
    p_{4i} = p_i/2 \qquad\text{for }i=0,1,2,\dots\,.   \label{eq:self-similarity condition}
\end{equation}
This means that a point with four times the sequence number is scaled by the factor $1/2$. In practice, we restrict ourselves to finite digital dyadic sequences.
A digital dyadic sequence $p_0,p_1,\dots,p_{2^m-1}$ is called \emph{self-similar} if for each $i<2^{m-2}$ we have~\eqref{eq:self-similarity condition} with the right side truncated to $m$ digits (i.e., if the $(m+1)$-th digit after the point in the dyadic decomposition of one of the coordinates of $p_i/2$ is $1$ then it is replaced by $0$).
This simple equation is all that we need to realize the sequence digitally.

We break it down into three elements.
First, we note that multiplying the sequence number by $4$ means shifting the bit vector $S$ down two steps, inserting two leading zeros that effectively wipe out two columns of the progressive matrix $C_x$.
Second, \eqref{eq:self-similarity condition} 
means that the subsequent columns of the matrix should bear similar information to the wiped ones.
Finally, dividing the coordinates by 2 is equivalent to pushing all these subsequent columns one row down, inserting zeros.
This leads to the following self-similar matrix structure
\newcommand{\leftbar}[1]{\multicolumn{1}{|c}{#1}}
\newcommand{\rightbar}[1]{\multicolumn{1}{c|}{#1}}
\newcommand{\highlight}[1]{{\color{red}#1}}
\newcommand{\highlightmore}[1]{{\color{blue}#1}}
\begin{equation}
    C_x = \left(\begin{array}{cccccc}
         a_{0}  & \leftbar{\highlight{b_{0}}}   & \leftbar{\oh}       & \leftbar{\oh}       & \leftbar{\oh}       & \cdots \\ \cline{1-1}
         a_{1}  &          \highlight{b_{1}}    & \leftbar{a_{0}}   & \leftbar{b_{0}}   & \leftbar{\oh}       & \cdots \\ \cline{1-2}
         a_{2}  &          \highlight{b_{2}}    & a_{1}             & \leftbar{b_{1}}   & \leftbar{a_{0}}   & \cdots \\ \cline{1-3}
         a_{3}  &          \highlight{b_{3}}    & a_{2}             & b_{2}             & \leftbar{a_{1}}   & \cdots \\ \cline{1-4}
         a_{4}  &          \highlight{b_{4}}    & a_{3}             & b_{3}             & a_{2}             & \cdots \\
         \vdots & \vdots            & \vdots            & \vdots            & \vdots            & \ddots
    \end{array}\right)\,,   \label{eq:new b's}
\end{equation}
for some column bit vectors $A=(a_0, a_1, \cdots)$ and $B=(b_0, b_1, \cdots)$ that, by definition, represent the $x$-coordinates of the second and third point in the sequence (equal to $0.a_0a_1\dots$ and $0.b_0b_1\dots$ respectively).
We take $A$ as a design parameter that can be freely chosen by the user, except for $a_0$ which has to be $1$. It then remains to set or compute the bits of $B$ so as to ensure a progressive matrix. We can iteratively expand the matrix.
The key insight is that each subsequent expansion includes only one new unknown bit from $B$, except for the first step where we have two bits when expanding from a $1 \times 1$ to a $2 \times 2$ matrix.
Formally, we have the following statement:
\begin{lemma}
    \label{lemma:unique solution}
    For each bit vector $A=(1,a_1,a_2,\cdots,a_{m-1})$ starting with bit $1$ there are exactly two bit vectors $B=(b_0, b_1, \cdots, b_{m-1})$ such that matrix~\eqref{eq:new b's} is progressive.
\end{lemma}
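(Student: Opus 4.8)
The key is the observation highlighted in the text: the self-similar matrix $C_x$ in~\eqref{eq:new b's} is built so that it can be grown one row and one column at a time, and each new leading principal submatrix introduces exactly one new unknown from $B$ (with the sole exception of the $1\times1\to2\times2$ step, which introduces $b_0$ and $b_1$). So I would prove the lemma by induction on the size $k$ of the leading principal submatrix $C_x^{(k)}$, maintaining the invariant that $C_x^{(k)}$ is invertible and that its entries depend only on $a_0,\dots,a_{k-1}$ and $b_0,\dots,b_{k-1}$.

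\textbf{Base case.} For $k=1$ the submatrix is $(a_0)=(1)$, which is invertible regardless of $B$. For $k=2$ the submatrix is $\left(\begin{smallmatrix} a_0 & b_0 \\ a_1 & b_1\end{smallmatrix}\right) = \left(\begin{smallmatrix} 1 & b_0 \\ a_1 & b_1\end{smallmatrix}\right)$, whose determinant over $GF(2)$ is $b_1 + a_1 b_0$. This vanishes iff $b_1 = a_1 b_0$, so among the four choices of $(b_0,b_1)$ exactly two make it invertible: $(b_0,b_1)=(0,1)$ and $(b_0,b_1)=(1,1+a_1)$. In both surviving cases $b_1 = 1$. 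This is where the ``two solutions'' come from, and I would record that in either branch the top $2\times2$ block is fixed and invertible.

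\textbf{Inductive step.} Assume $C_x^{(k)}$ is invertible and we have already committed to $b_0,\dots,b_{k-1}$. Inspect the structure of $C_x^{(k+1)}$: reading~\eqref{eq:new b's}, its last column is $(\,\oh,\dots,\oh,b_0,\dots\,)^T$ — more precisely the new rightmost column has the shifted-$B$ (for the even-indexed later columns) or shifted-$A$ pattern, and crucially the only entry of $B$ that has not yet appeared anywhere in $C_x^{(k)}$ is $b_k$, which sits in position $(k+1,2)$ (the second column of~\eqref{eq:new b's}, which carries the full vector $B$). Expanding $\det C_x^{(k+1)}$ by Laplace expansion along the last row, or better along the column/row containing $b_k$, one sees $\det C_x^{(k+1)} = b_k\cdot \alpha + \beta$ where $\alpha,\beta\in GF(2)$ depend only on the already-fixed data, and — this is the point to verify — $\alpha = \det C_x^{(k)} = 1$ by the induction hypothesis (because deleting the row and column through the $b_k$ entry, with the shifted block structure, leaves exactly $C_x^{(k)}$ up to reordering of rows/columns, hence a unit determinant). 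Therefore $\det C_x^{(k+1)} = b_k + \beta$, which forces $b_k = \beta$ uniquely. Hence once the two-way choice at step $k=2$ is made, every subsequent $b_k$ is determined, and the resulting infinite/truncated matrix is progressive; conversely any progressive self-similar matrix arises this way. Combining, there are exactly two admissible $B=(b_0,\dots,b_{m-1})$.

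\textbf{Main obstacle.} The delicate part is confirming that the cofactor $\alpha$ multiplying $b_k$ equals $\det C_x^{(k)}$ (i.e. is nonzero) and that $b_k$ genuinely does not occur elsewhere in $C_x^{(k+1)}$ — both hinge on tracking the staircase/shift pattern in~\eqref{eq:new b's} carefully, in particular which later columns carry copies of $A$ versus copies of $B$ and at which vertical offset. I would make this precise by writing the recursive block form $C_x^{(k+1)}$ in terms of $C_x^{(k-2)}$ (the ``scaled copy'' sitting in the lower-right, reflecting the self-similarity $p_{4i}=p_i/2$) plus the two fresh columns headed by $A$ and $B$, and then doing the cofactor expansion in that block form; the reflection/shift bookkeeping is routine linear algebra over $GF(2)$ once the block decomposition is written down, but it is the only place where something could go wrong.
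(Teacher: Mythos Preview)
Your approach is the same as the paper's: induction on the size $k$ of the leading principal submatrix, Laplace expansion to isolate the single new unknown $b_k$, and the observation that the free bit $b_0$ at the start accounts for the factor of two. However, your identification of the cofactor $\alpha$ is wrong, and since you flagged this as the ``delicate part'' it is worth being precise.

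The entry $b_k$ sits at position $(k{+}1,2)$ in $C_x^{(k+1)}$. Deleting that row and column does \emph{not} leave $C_x^{(k)}$ up to reordering: the resulting $k\times k$ minor omits column~2 of the original matrix, so it contains no $b_{k-1}$ entry at all, whereas $C_x^{(k)}$ does. Concretely for $k=3$ the cofactor is
\[
\begin{pmatrix} 1 & 0 & 0 \\ a_1 & 1 & b_0 \\ a_2 & a_1 & b_1 \end{pmatrix},
\quad\text{while}\quad
C_x^{(3)}=\begin{pmatrix} 1 & b_0 & 0 \\ a_1 & b_1 & 1 \\ a_2 & b_2 & a_1 \end{pmatrix},
\]
and no row/column permutation turns one into the other. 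What \emph{is} true, and what the paper uses, is that the cofactor's first row is $(1,0,\dots,0)$ (because the first row of~\eqref{eq:new b's} is $(1,b_0,0,\dots,0)$ and you have deleted column~2), so expanding once more along that row collapses the $k\times k$ cofactor to the $(k{-}1)\times(k{-}1)$ leading principal minor $\det C_x^{(k-1)}$. That is still $1$ by the inductive hypothesis, so the conclusion $b_k=\beta$ survives, but the justification goes through $C_x^{(k-1)}$, not $C_x^{(k)}$.

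A minor slip: your claim ``in both surviving cases $b_1=1$'' is false; from $b_1+a_1b_0=1$ you get $b_1=1+a_1b_0$, so $b_1=1+a_1$ when $b_0=1$.
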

\begin{proof}
    Take an arbitrary bit $b_0$ and let us determine bits $b_1,b_2,\dots$ inductively.

    Assume that $b_0,\dots,b_{k-1}$ have already been determined so that the first $k$ leading principal minors of the matrix in Eq.~\eqref{eq:new b's} equal $1$. Let us show that then there is a unique $b_k$ such that the $(k+1)$-th leading principal minor $\Delta$ equals $1$; for clarity, we illustrate the argument for $k=3$. Take the Laplace expansion along the last row:
    \begin{multline*}
    \Delta=
    \left|\begin{array}{cccc}
        1      & b_{0}  & \oh      & \oh      \\
        a_{1}  & b_{1}  & 1      & b_{0}  \\
        a_{2}  & b_{2}  & a_{1}  & b_{1}  \\
        a_{3}  & b_{3}  & a_{2}  & b_{2}    
   \end{array}\right|
   =
   a_{3}
   \left|\begin{array}{ccc}
        b_{0}  & \oh      & \oh      \\
        b_{1}  & 1      & b_{0}  \\
        b_{2}  & a_{1}  & b_{1}        
   \end{array}\right|
   +\\+
   b_{3}
   \left|\begin{array}{ccc}
        1      &  \oh      & \oh      \\
        a_{1}  &  1      & b_{0}  \\
        a_{2}  &  a_{1}  & b_{1}         
   \end{array}\right|
   +
   a_2
   \left|\begin{array}{ccc}
        1      & b_{0}  & \oh      \\
        a_{1}  & b_{1}  & b_{0}  \\
        a_{2}  & b_{2}  & b_{1}  
   \end{array}\right|
   + 
   b_2
   \left|\begin{array}{ccc}
        1      & b_{0}  & \oh      \\
        a_{1}  & b_{1}  & 1      \\
        a_{2}  & b_{2}  & a_{1}  
   \end{array}\right|.
   \end{multline*}
   By the structure of the matrix in Eq.~\eqref{eq:new b's}, the coefficient at $b_k$
   here
   $$
   \left|\begin{array}{ccc}
        1      &  \oh      & \oh      \\
        a_{1}  &  1      & b_{0}  \\
        a_{2}  &  a_{1}  & b_{1}         
   \end{array}\right|
   =
   \left|\begin{array}{cc}
        1      & b_{0}  \\
        a_{1}  & b_{1}         
   \end{array}\right|
   $$
   is exactly the $(k-1)\times(k-1)$ leading principal minor, equal to $1$ by the inductive hypothesis. Thus there is a unique $b_k$ such that $\Delta=1$. 
   Note that we compute the determinant of a $(k+1) \times (k+1)$ matrix using the assumption for the $(k-1) \times (k-1)$ matrix. Depending on $k$ being even or odd the matrix structure is slightly different, and we only illustrate the case for $k$ being odd here, but the even case is analogous.
\end{proof}

Similarly to the case of arbitrary progressive matrices in Section~\ref{sec:digital sequences}, there is a closed-form solution for the self-similar case.
Starting with the simpler case $A=(1,0,0,\dots)$, i.e. setting the $x$-coordinate of the second point in the sequence to be $1/2=0.100\dots $, we get the following assertion:
if $A=(1,0,0,\dots)$ in progressive matrix~\eqref{eq:new b's}, then $B$ is one of the vectors
\begin{align}
        (0,\xi_1,\xi_2,\dots)&=(0,1,1,0,1,0,0,0,1,0,0,0,0,0,0,0,1,\dots) \,,\\
        (1,\xi_1,\xi_2,\dots)&=(1,1,1,0,1,0,0,0,1,0,0,0,0,0,0,0,1,\dots),
    \end{align}
where $\xi_i=1$ if and only if $i$ is a power of $2$.
In other words, the third point of the sequence has one of the (truncated) $x$-coordinates 
\begin{align} \label{eq-xi}
        \xi&=0.01101000100000001\ldots = \frac{1}{2}\sum_{k=0}^{\infty} 2^{-2^k}\,,\\
        \label{eq-xi-plus}
        \xi^{+}&=0.11101000100000001\ldots = \frac{1}{2} + \frac{1}{2}\sum_{k=0}^{\infty} 2^{-2^k}\,.
\end{align}
This is proved by induction similarly to Lemma~\ref{lemma:unique solution}.
This gives us the following self-similar progressive matrix:
\begin{equation}
    C_{\xi} = \left(\begin{array}{ccccccccccc}
         1      & \oh      & \oh      & \oh      & \oh      & \oh      & \oh      & \oh      & \oh      & \oh      & \cdots \\
         \oh      & 1      & 1      & \oh      & \oh      & \oh      & \oh      & \oh      & \oh      & \oh      & \cdots \\
         \oh      & 1      & \oh      & 1      & 1      & \oh      & \oh      & \oh      & \oh      & \oh      & \cdots \\
         \oh      & \oh      & \oh      & 1      & \oh      & 1      & 1      & \oh      & \oh      & \oh      & \cdots \\
         \oh      & 1      & \oh      & \oh      & \oh      & 1      & \oh      & 1      & 1      & \oh      & \cdots \\
         \oh      & \oh      & \oh      & 1      & \oh      & \oh      & \oh      & 1      & \oh      & 1      & \cdots \\
         \oh      & \oh      & \oh      & \oh      & \oh      & 1      & \oh      & \oh      & \oh      & 1      & \cdots \\
         \oh      & \oh      & \oh      & \oh      & \oh      & \oh      & \oh      & 1      & \oh      & \oh      & \cdots \\
         \oh      & 1      & \oh      & \oh      & \oh      & \oh      & \oh      & \oh      & \oh      & 1      & \cdots \\
         \oh      & \oh      & \oh      & 1      & \oh      & \oh      & \oh      & \oh      & \oh      & \oh      & \cdots \\
         \vdots & \vdots & \vdots & \vdots & \vdots & \vdots & \vdots & \vdots & \vdots & \vdots & \ddots 
    \end{array}\right)\,,   \label{eq:C_xi}
\end{equation}
and a similarly looking matrix for $C_{\xi^{+}}$.
Note that this matrix looks quite different from the conventional upper-triangular form of Sobol matrices.
Note also that $\xi$ appears also along the diagonal.

Now we move to the general case, where we have
\begin{lemma}\label{th-b-via-a}
    For arbitrary $A=(1,a_1,a_2,\dots)$ the two vectors 
    \begin{align}
    B&=\xi(A):=(0,\,\xi_1,\,a_1\xi_1+\xi_2,\,a_2\xi_1+a_1\xi_2+\xi_3, \,\dots) \,\\
    B^{+}&=\xi^{+}(A):=\xi(A)+A
    \end{align}
    are all the vectors making matrix~\eqref{eq:new b's} progressive.
\end{lemma}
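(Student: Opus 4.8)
The plan is to reduce the general case to the base case $A=(1,0,0,\dots)$, already settled above, by a single change of coordinates. The starting observation is that the self-similar matrix~\eqref{eq:new b's} is assembled from its two column vectors $A$ and $B$ by ``staircase'' shifts: its $(2l+1)$-th column is $A$ pushed down by $l$ rows and its $(2l+2)$-th column is $B$ pushed down by $l$ rows, for $l=0,1,2,\dots$. Encoding a column vector $(c_0,c_1,\dots)$ by the polynomial $\hat c(t)=\sum_i c_i t^i$ over $GF(2)$, ``push down by $l$'' becomes multiplication by $t^l$; thus the columns of the matrix $C(A,B)$ attached to a pair $(A,B)$ correspond, in order, to the polynomials $t^l\hat A(t)$ and $t^l\hat B(t)$.

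Next I would recognize the prescription $B=\xi(A)$ as a convolution: with the conventions $a_0=1$ and $\xi_0:=0$, one has $[\xi(A)]_j=\sum_{i=0}^{j}a_{j-i}\xi_i$, i.e. $\widehat{\xi(A)}(t)=\hat A(t)\,\hat\Xi(t)$, where $\hat\Xi(t)=\sum_{i\ge 1}\xi_i t^i$ is the polynomial of the vector $\xi=(0,\xi_1,\xi_2,\dots)$; similarly $\widehat{\xi^{+}(A)}(t)=\hat A(t)\,\bigl(1+\hat\Xi(t)\bigr)$. Multiplication by the fixed polynomial $\hat A(t)$ (recall $a_0=1$) is realized by the lower unitriangular Toeplitz matrix $T_A$ with $(T_A)_{ij}=a_{i-j}$ (and $a_k:=0$ for $k<0$). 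Comparing columns then yields the factorizations
\[
  C\bigl(A,\xi(A)\bigr)=T_A\,C_{\xi}, \qquad C\bigl(A,\xi^{+}(A)\bigr)=T_A\,C_{\xi^{+}},
\]
where $C_{\xi}=C\bigl((1,0,\dots),\xi\bigr)$ and $C_{\xi^{+}}=C\bigl((1,0,\dots),\xi^{+}\bigr)$ are the two progressive matrices produced by the base case.

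It then only remains to note that left multiplication by $T_A$ preserves progressiveness: since $T_A$ is lower triangular, passing to the first $k$ rows commutes with it, so $\bigl(T_A C\bigr)^{[k]}=T_A^{[k]}\,C^{[k]}$ for leading $k\times k$ submatrices, and $T_A^{[k]}$ is again lower unitriangular, hence invertible; thus $(T_AC)^{[k]}$ is invertible whenever $C^{[k]}$ is. Applying this with $C=C_{\xi}$ and $C=C_{\xi^{+}}$ shows that both $C(A,\xi(A))$ and $C(A,\xi^{+}(A))$ are progressive. These two vectors are distinct, since $\xi^{+}(A)+\xi(A)=A\neq 0$ (as $a_0=1$), and Lemma~\ref{lemma:unique solution} guarantees that exactly two choices of $B$ make~\eqref{eq:new b's} progressive; hence $\xi(A)$ and $\xi^{+}(A)$ are all of them.

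I expect the only genuinely delicate point to be the column-by-column bookkeeping behind the factorization $C(A,\xi(A))=T_AC_{\xi}$ — in particular, checking that it still holds after truncation to the finite $m\times m$ matrices, which it does precisely because $T_A$ is lower triangular, so that truncation does not disturb the relevant products. Everything else becomes formal once the polynomial dictionary is set up.
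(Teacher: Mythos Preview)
Your proof is correct and is essentially the same as the paper's: the paper left-multiplies $C_\xi$ (resp.\ $C_{\xi^{+}}$) by the lower unitriangular Toeplitz matrix $L$ with first column $A$, which is exactly your $T_A$, and then invokes Lemma~\ref{lemma:unique solution}. Your polynomial/convolution formalism is just a tidy way of verifying what the paper asserts without proof, namely that this left multiplication preserves the self-similar structure~\eqref{eq:new b's} and places $A$, $\xi(A)$ in the first two columns.
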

In other words, the $x$-coordinate of the third point in the sequence is obtained by the carryless multiplication of the $x$-coordinate of the second point by one of numbers \eqref{eq-xi}--\eqref{eq-xi-plus}, followed by usual multiplication by $2$.
\begin{proof}
    Multiply $C_{\xi}$ (or $C_{\xi^{+}}$) by
    the lower unitriangular matrix
    \begin{equation}
        L = \left(\begin{array}{cccc}
             1   & \oh   & \oh   & \cdots \\
             a_1 & 1   & \oh   & \cdots \\
             a_2 & a_1 & 1   & \cdots \\
             a_3 & a_2 & a_1 & \cdots \\
             \vdots & \vdots & \vdots & \ddots
        \end{array}\right)
    \end{equation}
    from the left. This keeps the matrix progressive (LU-decomposable), preserves structure~\eqref{eq:new b's}, and places $A$ and $\xi(A)$ (or $\xi^{+}(A)$) in the first two columns. Then the proof is concluded by Lemma~\ref{lemma:unique solution}.
\end{proof}

Denoting the corresponding matrix by $C_{\xi(A)}:=LC_\xi$ and $C_{\xi^{+}(A)}:=LC_{\xi^{+}}$, we have the following statement to put it together:
\begin{theorem}
    \label{th:xi pairing}
    For any pair of bit vectors $X$ and $Y$ starting from $1$  in the first position, the pairs $(C_{\xi(X)},C_{\xi^{+}(Y)})$ and $(C_{\xi^{\bluevar{+}}(X)},C_{\xi(Y)})$ are progressive. Conversely, each progressive pair of matrices of form~\eqref{eq:new b's} equals either $(C_{\xi(X)},C_{\xi^{+}(Y)})$ or $(C_{\xi^{+}(X)},C_{\xi(Y)})$ for some $X$ and $Y$ starting from $1$.
\end{theorem}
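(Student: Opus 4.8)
The statement has a forward part---the two displayed pairs are progressive---and a converse---every progressive pair both of whose members have the self-similar form~\eqref{eq:new b's} is one of these two. I will use two facts from the preceding material. First, that $C_{\xi(X)}=L_XC_\xi$ and $C_{\xi^{+}(X)}=L_XC_{\xi^{+}}$, where $L_X$ is the lower-unitriangular Toeplitz matrix with first column $X$; this is exactly the construction in the proof of Lemma~\ref{th-b-via-a}. Second, the criterion from the remark after Theorem~\ref{th-progressive-pairs}: a pair $(L_xU_x,L_yU_y)$, with each factor $LU$-decomposed, is progressive if and only if $U_yU_x^{-1}=P$ (and this condition is symmetric in the two matrices since $P=P^{-1}$).

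\textbf{Forward direction.} I would first observe that the progressive-pair property is invariant under left-multiplying the two matrices by arbitrary lower-unitriangular $L',L''$: for every $k,r$ the hybrid matrix $H_{k,r}$ of $(L'C_x,L''C_y)$ equals $\mathrm{diag}(\widehat{L'},\widehat{L''})\,H_{k,r}(C_x,C_y)$, where $\widehat{L'},\widehat{L''}$ are the leading principal submatrices of $L',L''$, so its determinant is unchanged. Combined with $C_{\xi(X)}=L_XC_\xi$, $C_{\xi^{+}(Y)}=L_YC_{\xi^{+}}$ and the symmetry of the progressive-pair condition, this reduces the whole forward direction to the single claim that $(C_\xi,C_{\xi^{+}})$ is progressive. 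Now $C_\xi$ and $C_{\xi^{+}}$ are progressive matrices by Lemma~\ref{lemma:unique solution}, say $C_\xi=L_\xi U_\xi$ and $C_{\xi^{+}}=L_{\xi^{+}}U_{\xi^{+}}$; note in passing that $C_{\xi(X)}=L_XL_\xi U_\xi$, so the $U$-factor of $C_{\xi(X)}$ is $U_\xi$ for every $X$, and likewise the $U$-factor of $C_{\xi^{+}(Y)}$ is $U_{\xi^{+}}$ for every $Y$. Since $\xi^{+}=\xi+(1,0,0,\dots)$ and in~\eqref{eq:new b's} each $A$-column depends only on $A$ and each $B$-column only on $B$, we get $C_{\xi^{+}}=C_\xi E$, where $E$ is the fixed upper-unitriangular matrix that, acting by right multiplication, adds each $A$-column into the $B$-column immediately to its right; hence by uniqueness of $LU$-factorization $L_{\xi^{+}}=L_\xi$ and $U_{\xi^{+}}=U_\xi E$. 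By the criterion, $(C_\xi,C_{\xi^{+}})$ is progressive if and only if $U_{\xi^{+}}U_\xi^{-1}=U_\xi EU_\xi^{-1}=P$.

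\textbf{The crux.} It thus remains to prove the identity $U_{\xi^{+}}=PU_\xi$, equivalently $U_\xi^{-1}PU_\xi=E$; this is the step I expect to be the real work. Because $LU$-factorization commutes with passing to leading principal submatrices, it suffices to verify it at every finite size. I would prove it by induction on the size, expanding the relevant minors along the last row and invoking the two-step self-similar recursion already present in~\eqref{eq:new b's}---the same mechanism as in the proof of Lemma~\ref{lemma:unique solution}, with the even and odd sizes handled in parallel---or, alternatively, by writing down the closed form of $U_\xi$, which is itself self-similar, and multiplying through by $P$; here the special properties of the Pascal matrix ($P=P^{-1}$ and $P_{ij}=\binom{j-1}{i-1}\bmod 2$) are what make the computation close. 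Granting this identity completes the forward part.

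\textbf{Converse.} Let $(C_x,C_y)$ be a progressive pair with both matrices of the form~\eqref{eq:new b's}, and let $X$ and $Y$ be their first columns. A progressive matrix of form~\eqref{eq:new b's} has top-left entry $1$ (its $1\times1$ leading minor), so $X$ and $Y$ begin with $1$; and by Lemma~\ref{th-b-via-a} its second column is either $\xi$ or $\xi^{+}$ of its first column. Hence $C_x\in\{C_{\xi(X)},C_{\xi^{+}(X)}\}$ and $C_y\in\{C_{\xi(Y)},C_{\xi^{+}(Y)}\}$. If $C_x$ and $C_y$ were of the same type, their $U$-factors would coincide (both $U_\xi$, or both $U_{\xi^{+}}$, by the forward analysis), so the criterion would force $P=U_yU_x^{-1}=I$, which is false for $m\ge 2$. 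Therefore $C_x$ and $C_y$ are of opposite types, i.e. $(C_x,C_y)=(C_{\xi(X)},C_{\xi^{+}(Y)})$ or $(C_{\xi^{+}(X)},C_{\xi(Y)})$, as claimed.
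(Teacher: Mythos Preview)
Your route differs from the paper's. The reduction of the forward direction---via the invariance under left multiplication by lower-unitriangular matrices (this is Lemma~\ref{p-1}) and the criterion $U_yU_x^{-1}=P$---to the single identity $U_{\xi^{+}}=PU_\xi$ is correct, and the observation $C_{\xi^{+}}=C_\xi E$ with $E$ upper-unitriangular (hence $L_{\xi^{+}}=L_\xi$ and $U_{\xi^{+}}=U_\xi E$) is clean. Your converse is actually more explicit than the paper's: the paper merely asserts that the same-type pairs $(C_{\xi(X)},C_{\xi(Y)})$ and $(C_{\xi^{+}(X)},C_{\xi^{+}(Y)})$ are never progressive, whereas your remark that the $U$-factor of every $C_{\xi(X)}$ equals $U_\xi$ and of every $C_{\xi^{+}(Y)}$ equals $U_{\xi^{+}}$ gives a reason via $U_yU_x^{-1}=I\ne P$.

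The concern is the crux itself. The identity $U_{\xi^{+}}=PU_\xi$ is true, but neither of your proposed attacks is clearly workable: $U_\xi$ has no transparent closed form (already at $m=5$ it is not bidiagonal or otherwise simple), and the two-step self-similarity of $C_\xi$ relates sizes $m$ and $m-2$, which does not mesh with the size-doubling block recursion of $P$. So the step you flag as ``the real work'' risks being as hard as the theorem, and your sketch does not make it evident that the induction closes.

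The paper avoids this detour. For the forward direction it computes the hybrid determinants $\det H_{k,r}$ directly, using only that the first row of $C_{\xi(X)}$ is $(1,0,\dots,0)$ while the first row of $C_{\xi^{+}(Y)}$ is $(1,1,0,\dots,0)$. Expanding along the row $(1,0,\dots,0)$, and then along the row that \emph{becomes} $(1,0,\dots,0)$ after the first expansion, reduces $(k,r)$ to $(k-2,r-1)$ while preserving the same block structure; iterating lands on a leading principal minor of $C_{\xi(X)}$ or $C_{\xi^{+}(Y)}$, which is~$1$. This peeling uses only the self-similar form~\eqref{eq:new b's} and never touches the $LU$-factors of $C_\xi$, $C_{\xi^{+}}$ or the Pascal identity you isolate.
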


\begin{figure*}
    \centering
  \setlength{\unit}{(\textwidth - 6\gap)/4}
  {\scriptsize
    \begin{tabular*}{\textwidth}{@{}c@{\extracolsep{\fill}}c@{\extracolsep{\fill}} c@{\extracolsep{\fill}}c@{}}
        \includegraphics[width=1\unit]{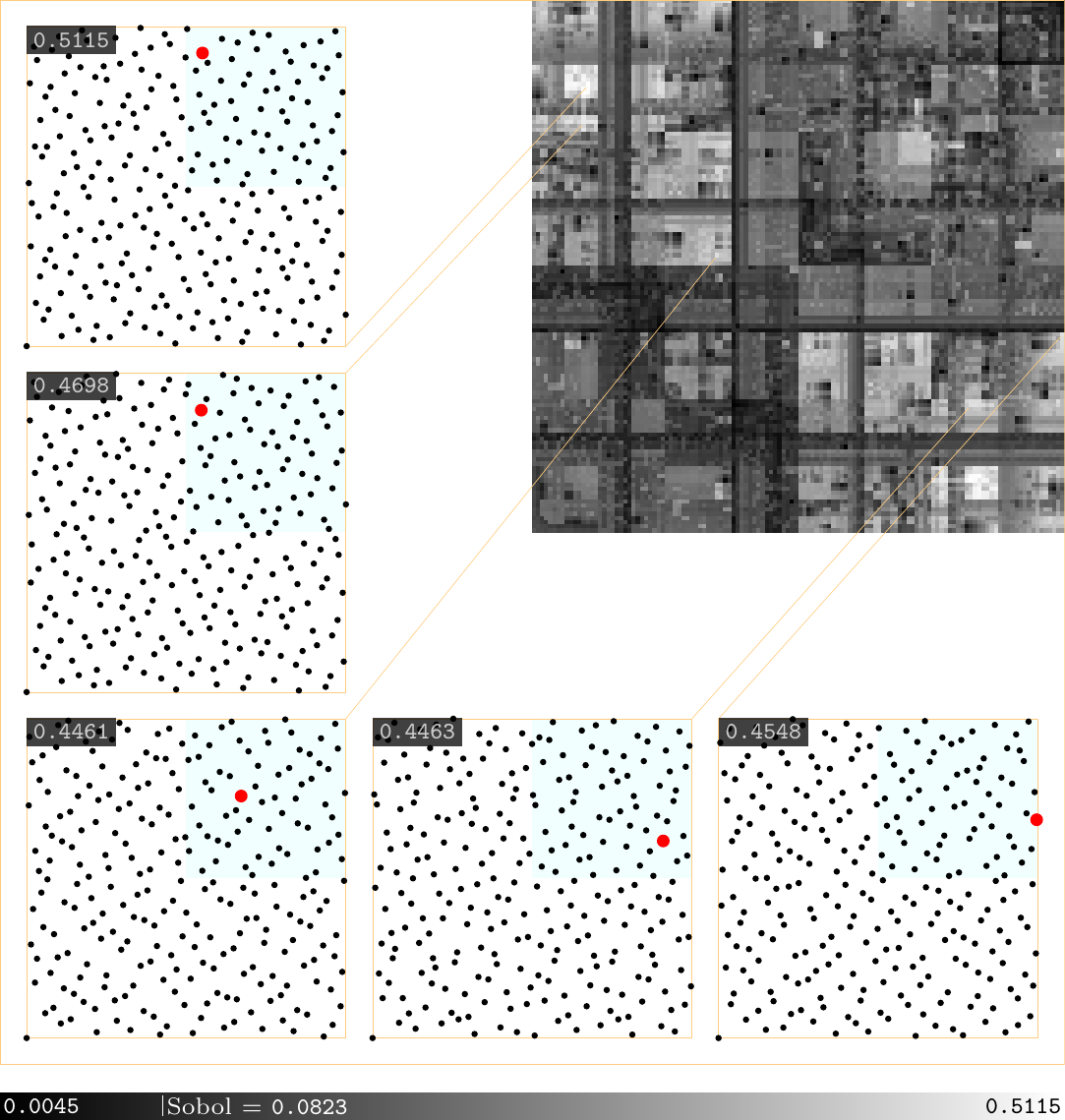}&
        \includegraphics[width=1\unit]{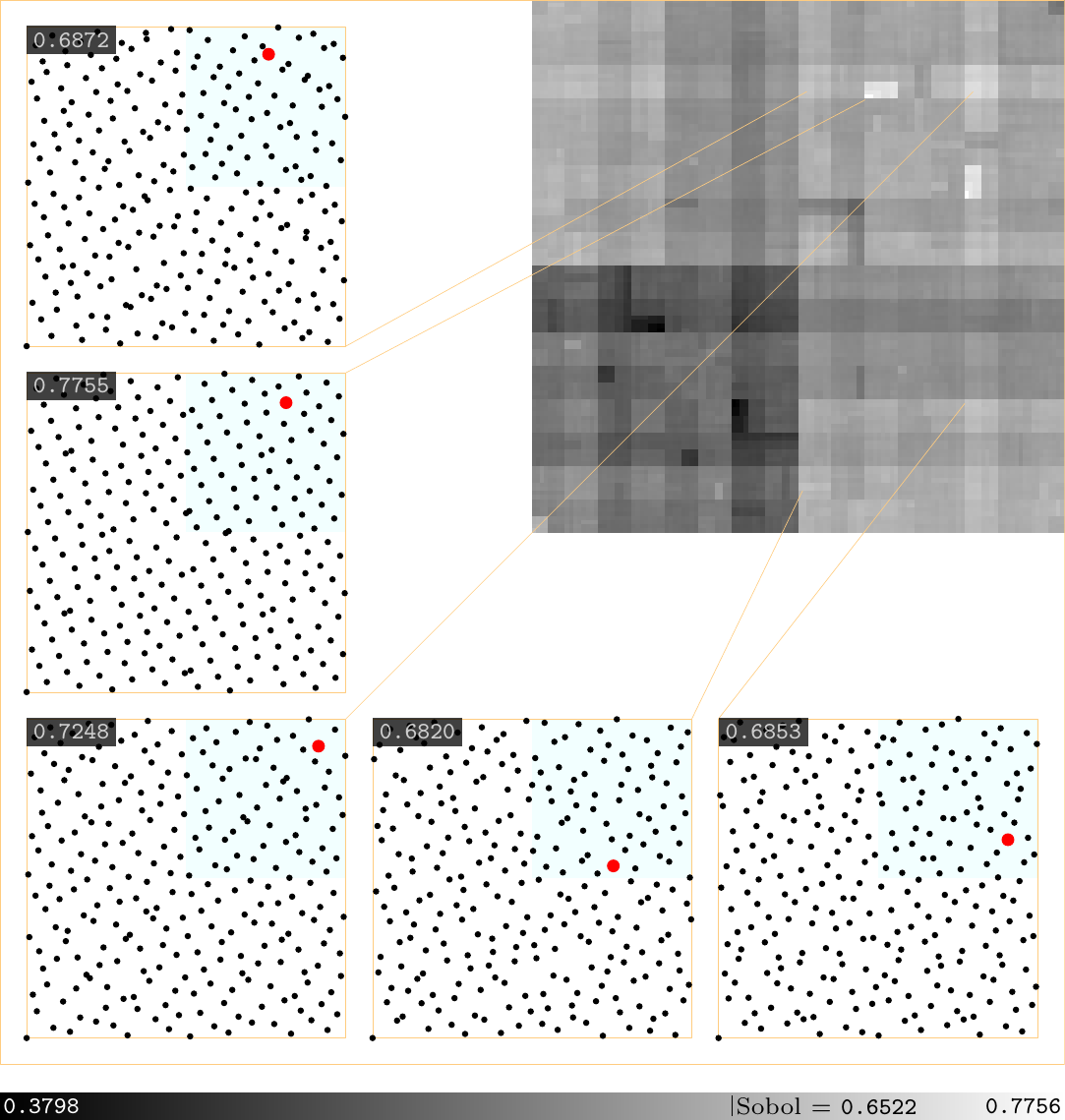}&
        \includegraphics[width=1\unit]{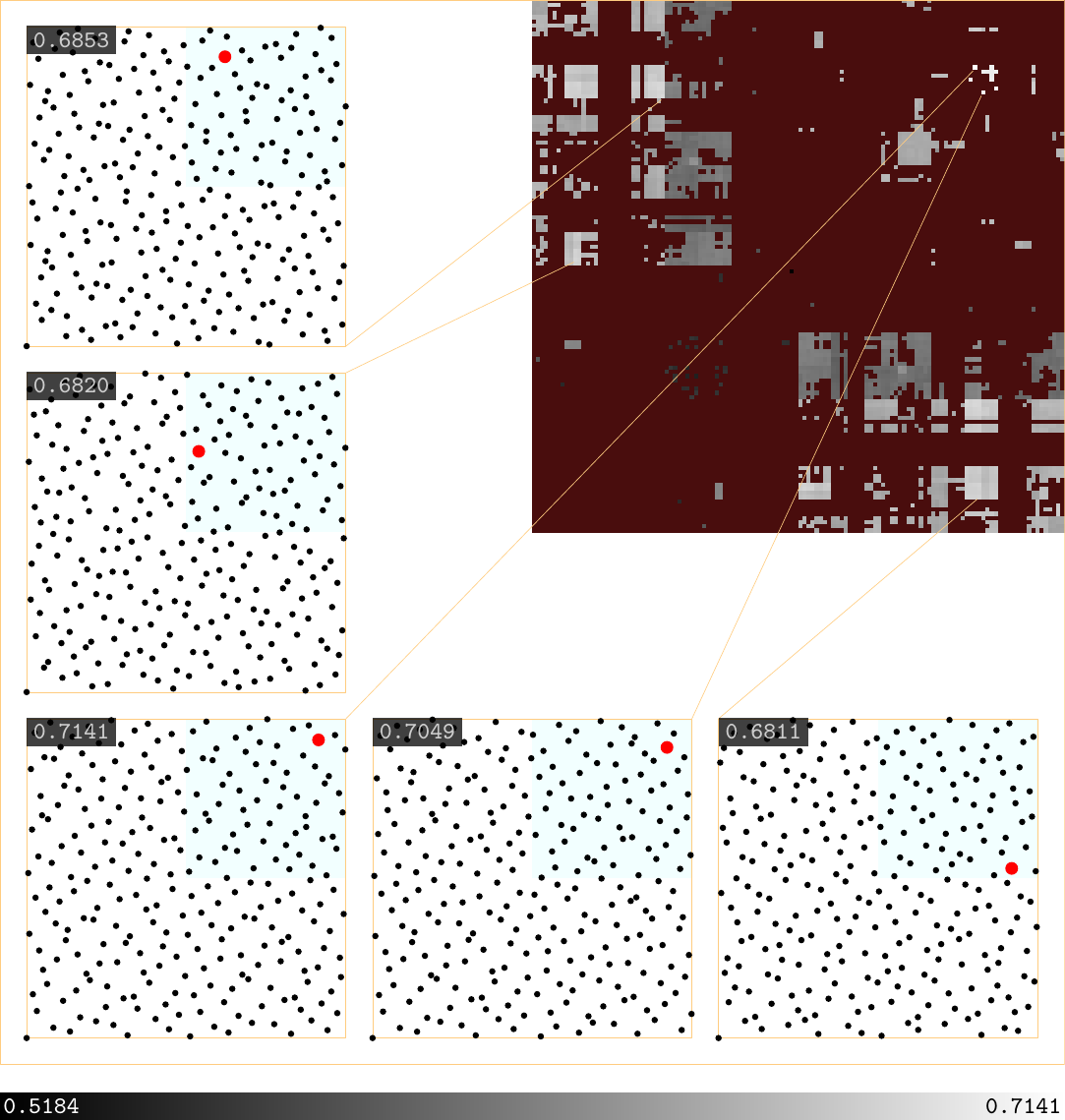}&
        \includegraphics[width=1\unit]{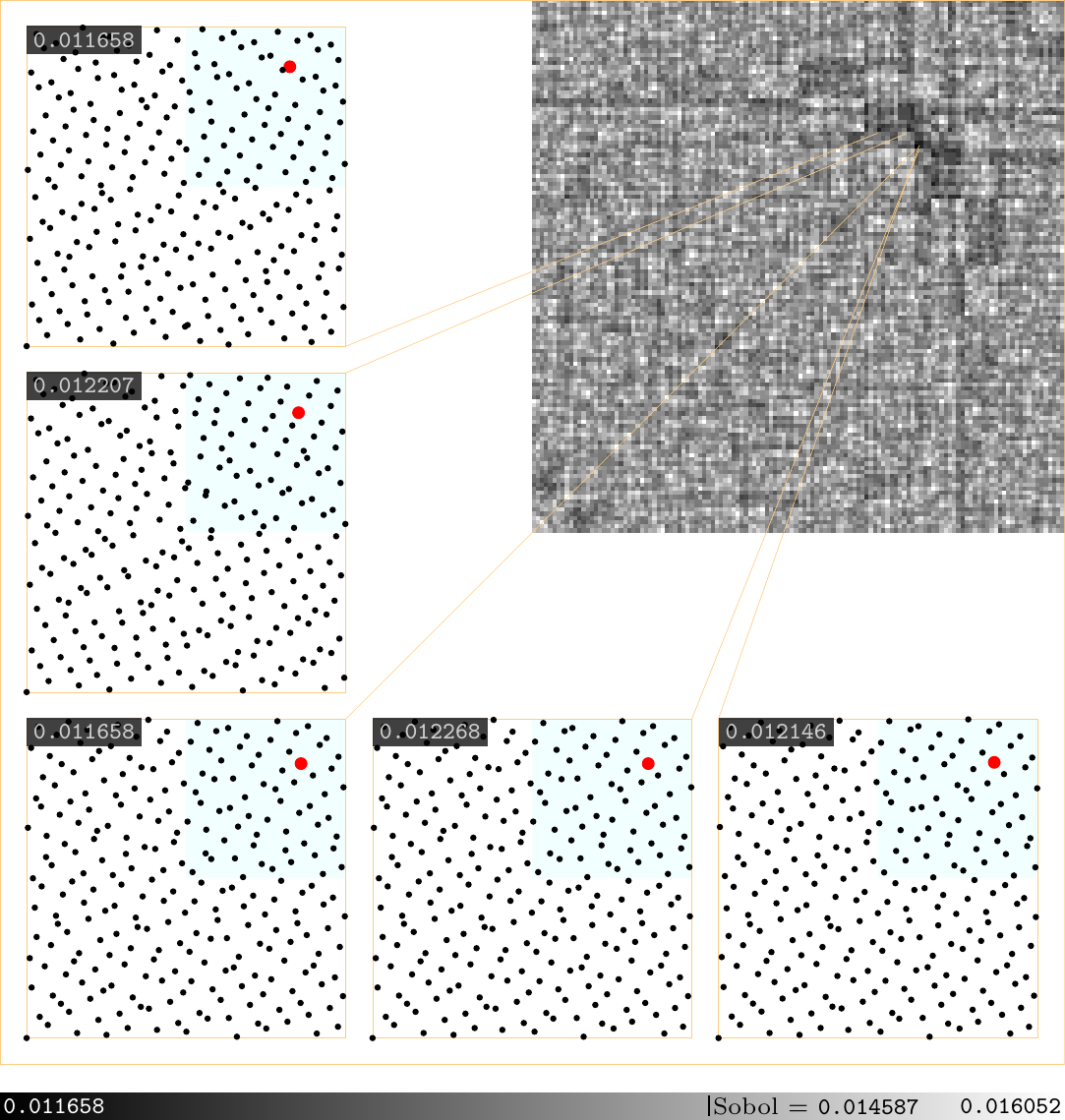}\\[2mm]
        (a) Minimum distance between points & 
        (b) Average distance to the closest neighbor &
        (c) The same for the minimum distance $> 0.3$ &
        (d) Star discrepancy
    \end{tabular*}}
    \vspace{-3mm}
    \caption{%
        \label{fig:atlas}
         An exhaustive exploration of $\xi$-sequences containing 256 points. Each pixel in the top-right square $[1/2;1)^2$ represents a possible position of the second point of the sequence, which uniquely determines the whole 
         $\xi$-sequence. There are $128\times 128$ pixels  in total and 16K possible $\xi$-sequences. The color depicts one of the (suitably normalized) quality measures (a)--(d) of the resulting $\xi$-sequence. 
         In (c), the average distance to the closest neighbor is shown only for the sequences such that the minimum distance is greater than $0.3$; the remaining pixels are brown.
         A few best-quality representatives are in the insets; each red bold point depicts the second point in the sequence. For comparison, the same quality measures for the Sobol sequence are shown in the legends.  
     }
    \vspace{-4mm}
\end{figure*}

\begin{proof}
    By construction, the matrices $C_{\xi(X)},C_{\xi^{+}(Y)}$ have form~\eqref{eq:new b's} with $(a_0,b_0)=(1,0)$ and $(1,1)$ respectively; i.e., our pair has form
    \begin{equation*}
    \left(
    \left(\begin{array}{ccccc}
         1      & \oh   & \oh   & \oh   & \cdots \\ 
         a_{1}  & b_{1} & 1     & \oh   & \cdots \\ 
         a_{2}  & b_{2} & a_{1} & b_{1} & \cdots \\ 
         \vdots & \vdots& \vdots& \vdots& \ddots
    \end{array}\right)\,,
    \left(\begin{array}{ccccc}
         1      & 1     & \oh   & \oh   & \cdots \\ 
         c_{1}  & d_{1} & 1     & 1     & \cdots \\ 
         c_{2}  & d_{2} & c_{1} & d_{1} & \cdots \\ 
         \vdots & \vdots& \vdots& \vdots& \ddots
    \end{array}\right)\right)
\end{equation*}
for some $a_i,b_i,c_i,d_i$. The two matrices are themselves progressive by Lemma~\ref{th-b-via-a}. It remains to prove that hybrid matrix~\eqref{eq:hybrid matrix 2} has determinant $1$ for $r=1,\dots,k-1$. 
We compute the determinant recursively. Since the first row of~\eqref{eq:hybrid matrix 2} is $(1\, 0\, \dots\, 0)$ in our case, it follows that removing the first row and the first column preserves the determinant:
\begin{equation*}
    \left|\begin{array}{ccccc}
         {1}      & \oh   & \oh   & \oh   & {\cdots} \\ \cline{2-5}
         a_{1}  & \leftbar{b_{1}} & {1}     & {.}   & \rightbar{\cdots} \\ 
         a_{2}  & \leftbar{b_{2}} & {a_{1}}& b_{1} & \rightbar{\cdots} \\ 
         \vdots & \leftbar{\vdots}& {\vdots}& \vdots& \rightbar{\ddots} \\ 
         {1}    & \leftbar{1}     & \oh     & \oh   & \rightbar{\cdots} \\ 
         c_{1}  & \leftbar{d_{1}} & {1}     & 1     & \rightbar{\cdots} \\ 
         c_{2}  & \leftbar{d_{2}} & {c_{1}} & d_{1} & \rightbar{\cdots} \\ 
         \vdots & \leftbar{\vdots}& {\vdots}& \vdots& \rightbar{\ddots}  \\ \cline{2-5}
    \end{array}\,\right|
    =
    \left|\,\begin{array}{cccc}
         \cline{1-4}
         \leftbar{b_{1}} & {1}     & \oh   & \rightbar{\cdots} \\ 
         \leftbar{b_{2}} & {a_{1}}& b_{1} & \rightbar{\cdots} \\ 
         \leftbar{\vdots}& {\vdots}& \vdots& \rightbar{\ddots} \\ 
         \leftbar{\highlight{1}}     & \highlight{0}     & \highlight{0}   & \rightbar{\highlight{\cdots}} \\ 
         \leftbar{d_{1}} & {1}     & 1     & \rightbar{\cdots} \\ 
         \leftbar{d_{2}} & {c_{1}} & d_{1} & \rightbar{\cdots} \\ 
         \leftbar{\vdots}& {\vdots}& \vdots& \rightbar{\ddots}  \\ 
         \cline{1-4}
    \end{array}\,\right|.
    \end{equation*}
In the resulting matrix, the $(k-r)$-th row (highlighted) becomes $(1\, 0\, \dots\, 0)$. Now removing this row and the first column, we get
\begin{equation*}
    \left|\begin{array}{cccc}
         \cline{2-4}
         b_{1} & \leftbar{1}     & \oh   & \rightbar{\cdots} \\ 
         b_{2} & \leftbar{a_{1}} & b_{1} & \rightbar{\cdots} \\ 
         \vdots& \leftbar{\vdots}& \vdots& \rightbar{\ddots} \\ 
         \cline{2-4}
         \highlight{1}       & \highlight{0}   & \highlight{0}  & \highlight{\cdots} \\ 
         \cline{2-4}
         d_{1} & \leftbar{1}     & 1     & \rightbar{\cdots} \\ 
         d_{2} & \leftbar{c_{1}} & d_{1} & \rightbar{\cdots} \\ 
         \vdots& \leftbar{\vdots}& \vdots& \rightbar{\ddots}  \\ 
         \cline{2-4}
    \end{array}\,\right|
    =
    \left|\,\begin{array}{ccc}
    \cline{1-3}
         \leftbar{{1}}     & \oh   & \rightbar{{\cdots}} \\ 
         \leftbar{a_{1}} & b_{1} & \rightbar{\cdots} \\ 
         \leftbar{\vdots}& \vdots& \rightbar{\ddots} \\ \cline{1-3}
         \\[-0.3cm] \cline{1-3}
         \leftbar{{1}}     & {1}     & \rightbar{{\cdots}} \\ 
         \leftbar{c_{1}} & d_{1} & \rightbar{\cdots} \\ 
         \leftbar{\vdots}& \vdots& \rightbar{\ddots} \\ \cline{1-3}
    \end{array}\,\right|.
    \end{equation*}
The resulting matrix has the same structure as initial matrix~\eqref{eq:hybrid matrix 2}, only $r$ and $k$ are reduced by $1$ and $2$ respectively. We now repeat the procedure, each time removing the row $(1\, 0\, \dots\, 0)$ and the first column in the current matrix until we get $r=0$ or $r=k$. The resulting determinant will be a leading principal minor of $C_{\xi(X)}$ or $C_{\xi^{+}(Y)}$. It equals $1$ because $C_{\xi(X)}$ or $C_{\xi^{+}(Y)}$ are themselves progressive.

The converse assertion follows from Lemma~\ref{th-b-via-a} because the pairs $(C_{\xi(X)},C_{\xi(Y)})$ and $(C_{\xi^{+}(X)},C_{\xi^{+}(Y)})$ are never progressive.
\end{proof}

We find these results striking, since this constant $\xi$ \remove{} handles all the effort of matrix derivation and pairing we encountered in Section~\ref{sec:digital sequences}.
Thus, the resulting self-similar sequences generated by $(C_{\xi(X)},C_{\xi^+(Y)})$ for various $X$ and $Y$ are
deservedly called \emph{$\xi$-sequences}. See Fig.~\ref{fig:xi} and~\ref{fig:atlas}. (We do not need to consider the pairs $(C_{\xi^+(X)},C_{\xi(Y)})$ because this leads just to the swap of $x$- and $y$-coordinates.) The sequence generated by $(C_{\xi},C_{\xi^+})$ is referred as $\xi_0$-sequence. 
On the practical side, we truncate the coordinates of the $2^m$ points not to $m$ bits as required by the digital construction, but to the machine precision of $M=32$ (or $M=64$) bits. In other words, we always consider the first $2^m$ points of a $2^M$-point $\xi$-sequence. We note that the special structure of $\xi$ reduces the carryless multiplication to $ {O}\left(\log M\right)$ from $ {O}(M)$.
In 32-bit precision, it takes precisely $5$ shift-and-xor operations.
This makes setting up a $\xi$-sequence, given $(X,Y)$, 
incredibly low-cost, even cheaper than generating a random sample for $(X,Y)$. 
Algorithm~\ref{alg:setup} summarizes the preceding derivations into a set of steps for creating a $\xi$-sequence.
\SetKwInOut{KwIn}{Input}
\SetKwInOut{KwOut}{Output}
\begin{algorithm} [tb]
    \caption{
        Constructing the first $4$ points of a $\xi$-sequence using 32-bit precision
        from two bit-vectors $X=(x_0,\dots)$ and $Y=(y_0,\dots)$ with $x_0 = y_0 = 1$.
        All additions and multiplications are carryless.
    }
    \label{alg:setup}
    \KwIn{
         Two bit-vectors $X$ and $Y$ (the coordinates of $p_1$);\\
    }
    \KwOut{The first $4$ points $p_0,\dots,p_3$ 
    of the sequence.
    }
    $\xi(A) \leftarrow (2^{-1} + 2^{-2} + 2^{-4} + 2^{-8} + 2^{-16})A$\;
    $B \leftarrow \xi(X)$\;
    $B^+ \leftarrow \xi(Y) + Y$\;
    $p_0 \leftarrow (0,0)$\;
    $p_1 \leftarrow (X,Y)$\;
    $p_2 \leftarrow (B,B^+)$\;
    $p_3 \leftarrow (X,Y) + (B,B^+)$.
    %
\end{algorithm}
\begin{algorithm} [tb]
    \caption{
        C code for retrieving a sample from a $\xi$-sequence. The sequence is encoded by the first $4$ points $p[0], \dots, p[3]$. 
        The input is the number seqNo of the sample. 
    }
    \label{alg:retrieving a sample}
{\small
    \begin{lstlisting}[language={C++},basicstyle=\ttfamily]
Point getSample(uint32_t seqNo) {
    uint32_t x(0), y(0);
    for (int depth = 0; depth < 16; depth++) {
        x ^= p[seqNo & 3].x >> depth;
        y ^= p[seqNo & 3].y >> depth;
        seqNo >>= 2;               
    }
    return {x, y};
}
    \end{lstlisting}
}
\end{algorithm}

\bluevar{To our knowledge,}
$\xi$-sequences are the fastest way to construct a dyadic sequence and draw different dyadic nets.
They enable constructing a \blueit{pseudo-random} sequence for each pixel.
However, the fast construction turns \blueit{out} to be not the only advantage: the self-similar structure actually embodies a wealth of versatility that we are going to discuss in the following subsections.


\subsection{Retrieving the Samples}

Points from $\xi$-sequences may be generated from the matrices as any standard digital sequence. The self-similar structure of the matrices, however, offers a short-hand, as outlined in Algorithm~\ref{alg:retrieving a sample}.

Thus, besides fast construction, drawing the samples is an aspect where $\xi$-sequences significantly stand out compared to all alternatives we are aware of.
The simple loop structure of $\xi$-sequences, and the few parameters, make it a good candidate for low-level optimization, using registers, for example, but our plain C code already delivers 
high performance, as depicted in Table~\ref{tab:speed}.
\begin{table}
    \caption{%
        \label{tab:speed}%
        Speed performance, in million points per second, of $\xi$-sequences and other state-of-the-art sampler generators.
    }
    \vspace{-2mm}
\centering{}%
\begin{tabular}{|l|r|r|r|r|r|r|}
    \hline
    $m = \log_2(N)$ &  8 & 12 &  16 &  20 &  24 &   28 \\
    \hline 
    Random              & 18 &  25 &  24 &  56 &  68 &  71 \\
    Sobol               & 32 &  35 &  28 &  53 &  57 &  51 \\
    Burley              & 11 &  16 &  13 &  31 &  39 &  39 \\
    $\xi$               & 17 &  26 &  24 &  57 &  74 &  73 \\
    $\xi_{256}$         & 32 & 113 & 146 & 185 & 411 & 410 \\
    \hline 
\end{tabular}
    \vspace{-4mm}
\end{table}

While the straightforward implementation of $\xi$-sequences shown in Algorithm~\ref{alg:retrieving a sample} already leads the competition, the self-similar structure of the sequence permits ever further speed up through a small lookup table.
In the basic implementation we derive the basis vectors from the first four points, but in fact any power-of-four number of points may be used in the same way as basis for retrieving subsequent points\notforarxiv{, as may be found in our supplementary code}.
The $\xi_{256}$ entry in Table~\ref{tab:speed} refers to an implementation that uses a $256$-vector look-up table, and it leads to 2-6$\times$ speed up.


\subsection{Memory Footprint\label{sec:memory}}

In addition to their \remove{} speed performance, $\xi$-sequences consume very little memory, and offer a flexible trade-off between memory and speed. The sequence may be compressed to a single 2D bit vector ${p}_1$, a pair of vectors $p_1,p_2$, 
or the standard four vectors $p_0,\dots,p_3$, 
and may be expanded to a 16, 256, or even 64K lookup table for two-step retrieval.
This flexibility with memory makes $\xi$-sequences quite GPU friendly.
Indeed, providing random numbers in a GPU context is a standing challenge, and $\xi$-sequences help in this direction by making it possible to run a unique random sequence in each thread, passing a table of parameters that identify the sequences.
We implemented a test of this, and generated over 17G points per second, roughly 1T per minute, on a TITAN Xp GPU.

\begin{figure}
    \centering
    \settoheight{\labelHeight}{Random $\xi$}
  \setlength{\unit}{(\columnwidth - \labelHeight - 4\gap) / 4}
  {\scriptsize
    \includegraphics[width=\columnwidth]{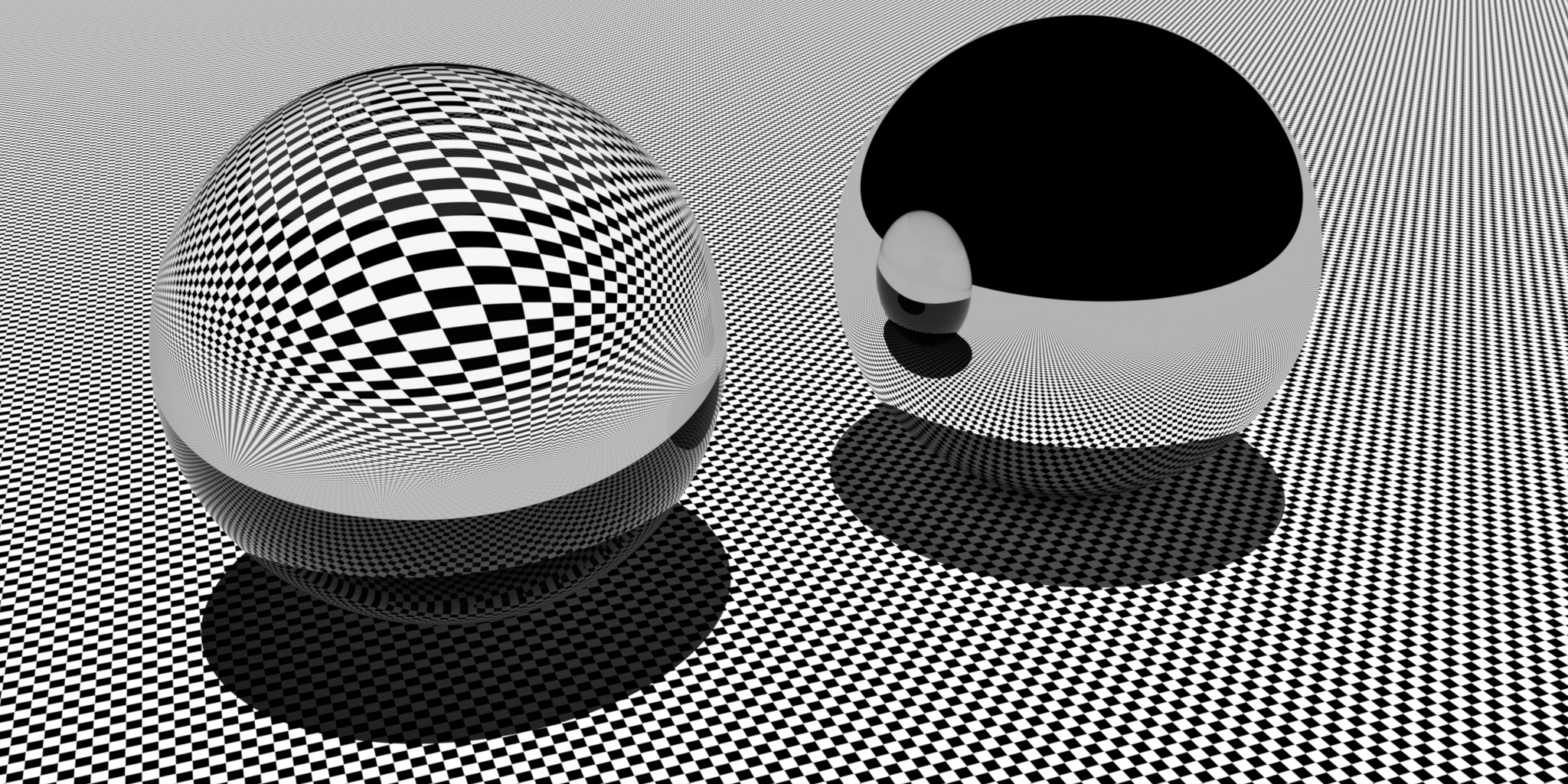}\\[1mm]%
    \begin{tabular*}{1\columnwidth}{@{}c@{\extracolsep{\fill}}c@{\extracolsep{\fill}}c@{\extracolsep{\fill}}c@{\extracolsep{\fill}}c@{}}
        \rotatebox{90}{\parbox{1\unit}{\centering (c) (0, 2)-Sequence}}&%
        \includegraphics[width=1\unit]{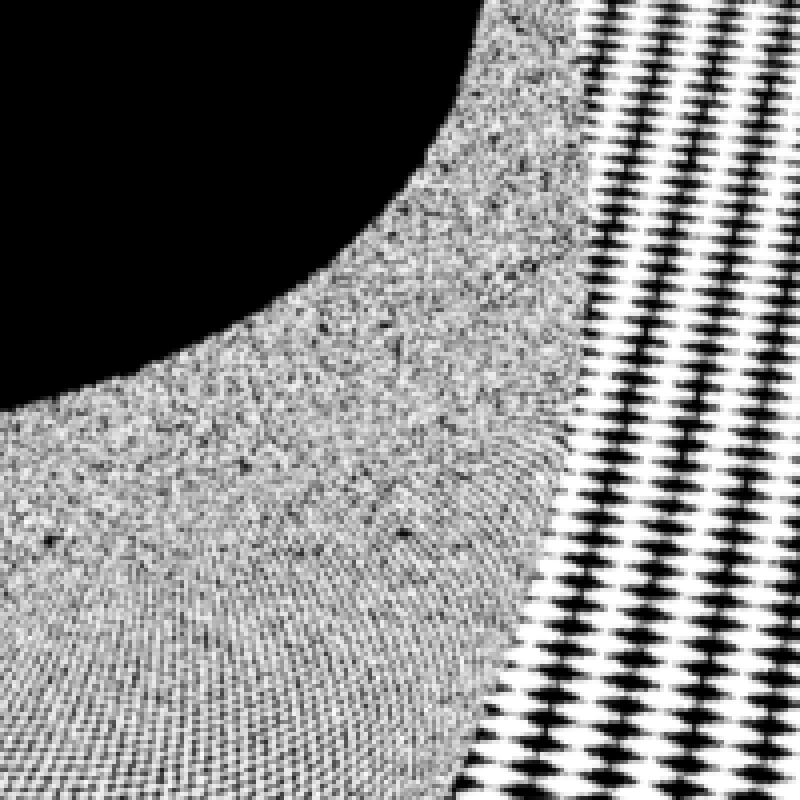}&%
        \includegraphics[width=1\unit]{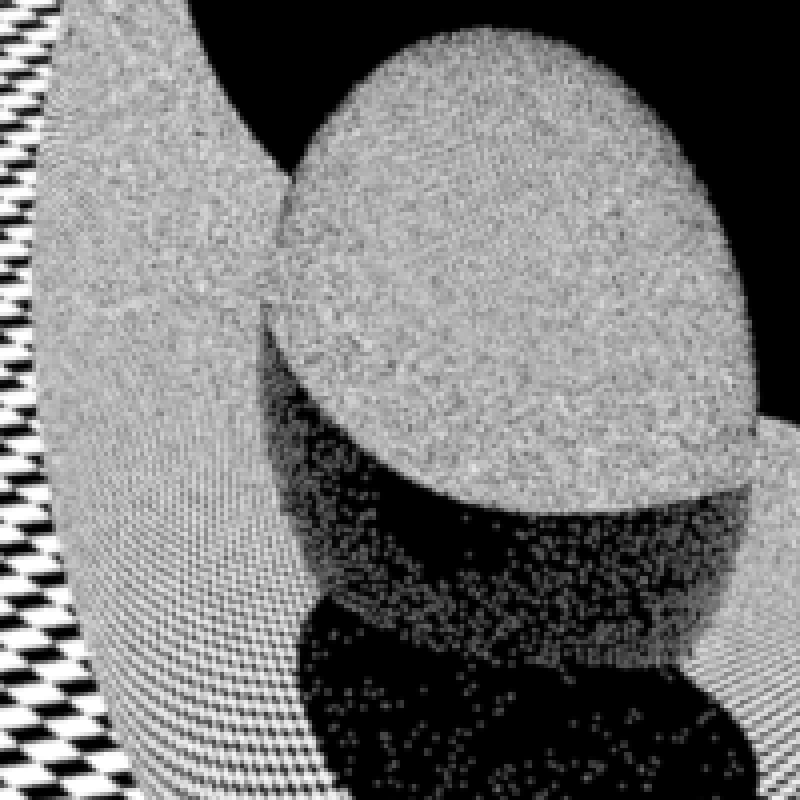}&%
        \includegraphics[width=1\unit]{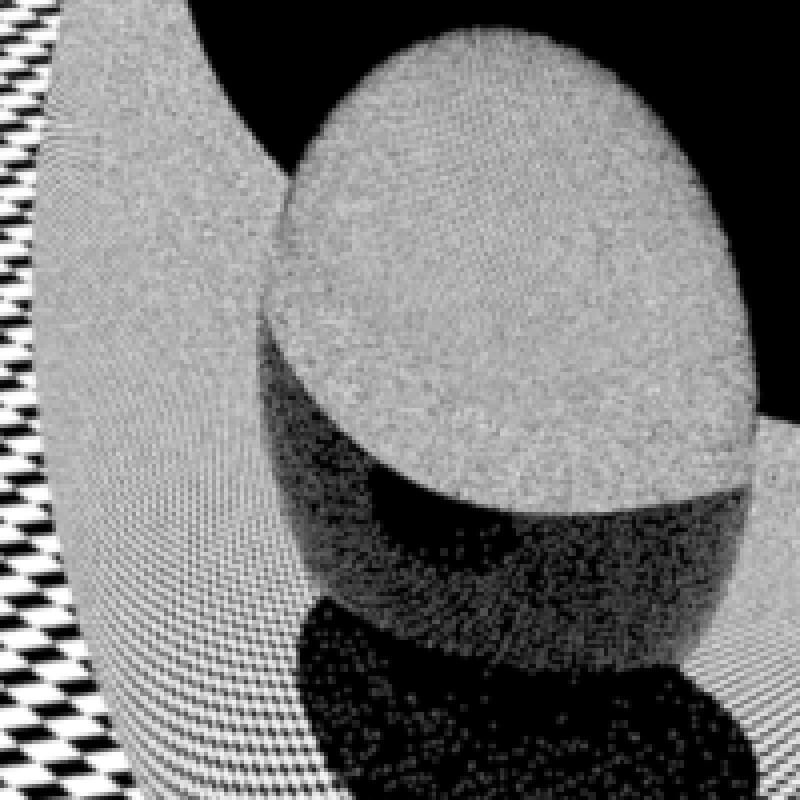}&%
        \includegraphics[width=1\unit]{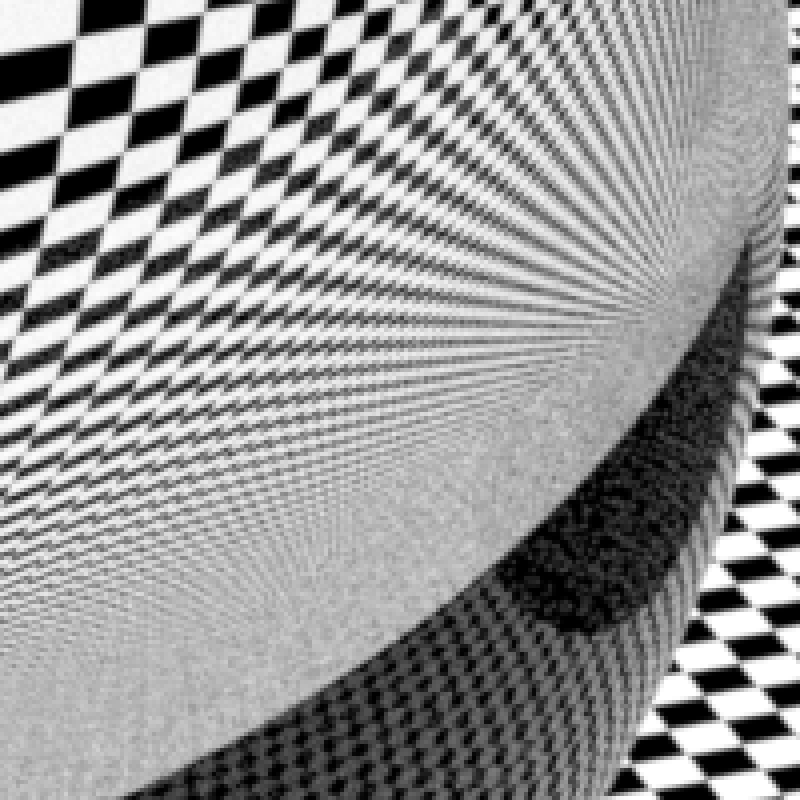}\\%
        \rotatebox{90}{\parbox{1\unit}{\centering (b) Random $\xi$}}&%
        \includegraphics[width=1\unit]{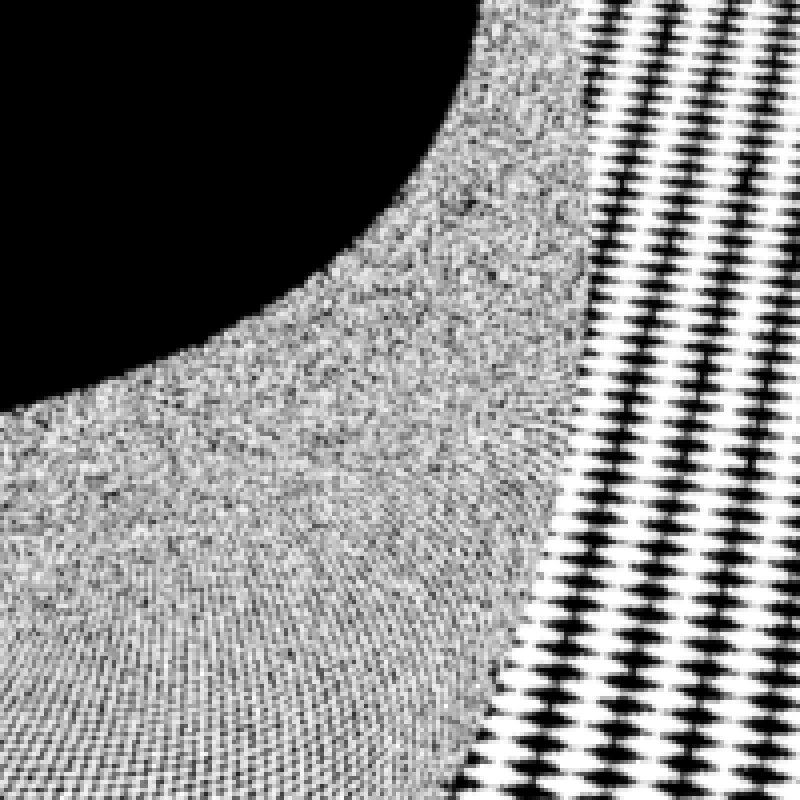}&%
        \includegraphics[width=1\unit]{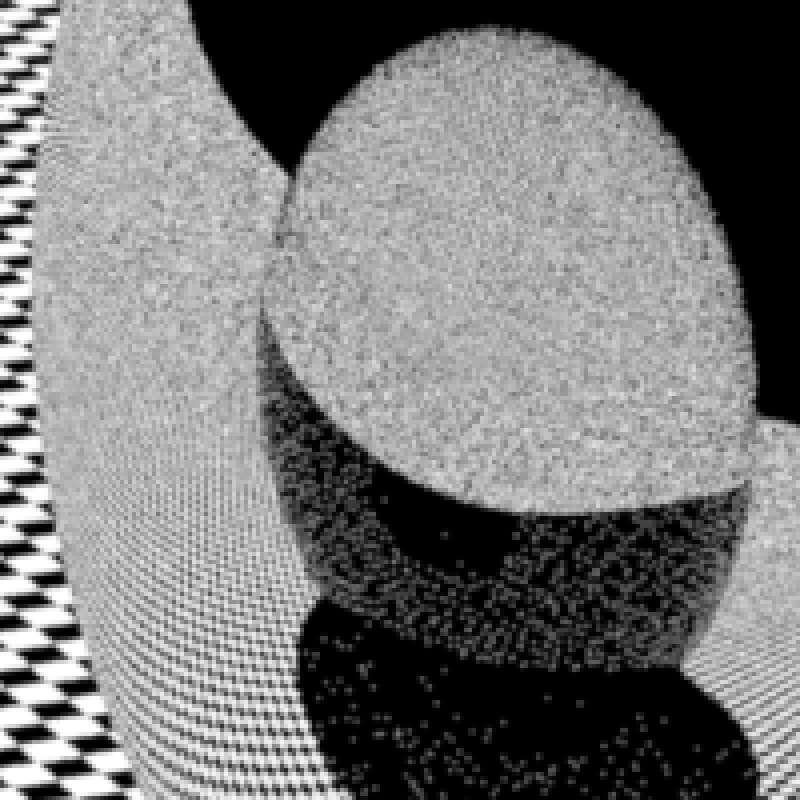}&%
        \includegraphics[width=1\unit]{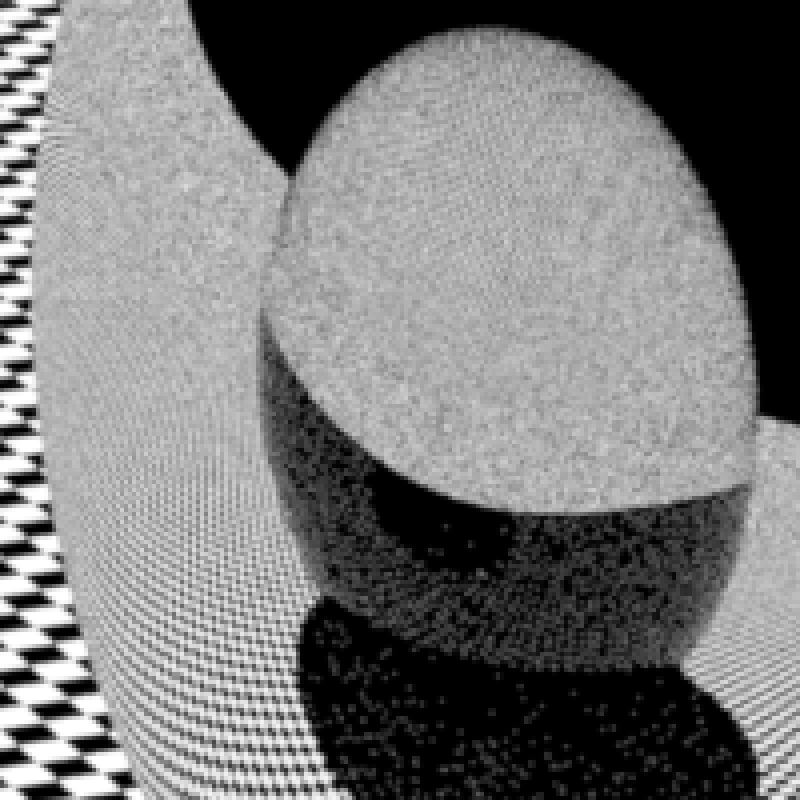}&%
        \includegraphics[width=1\unit]{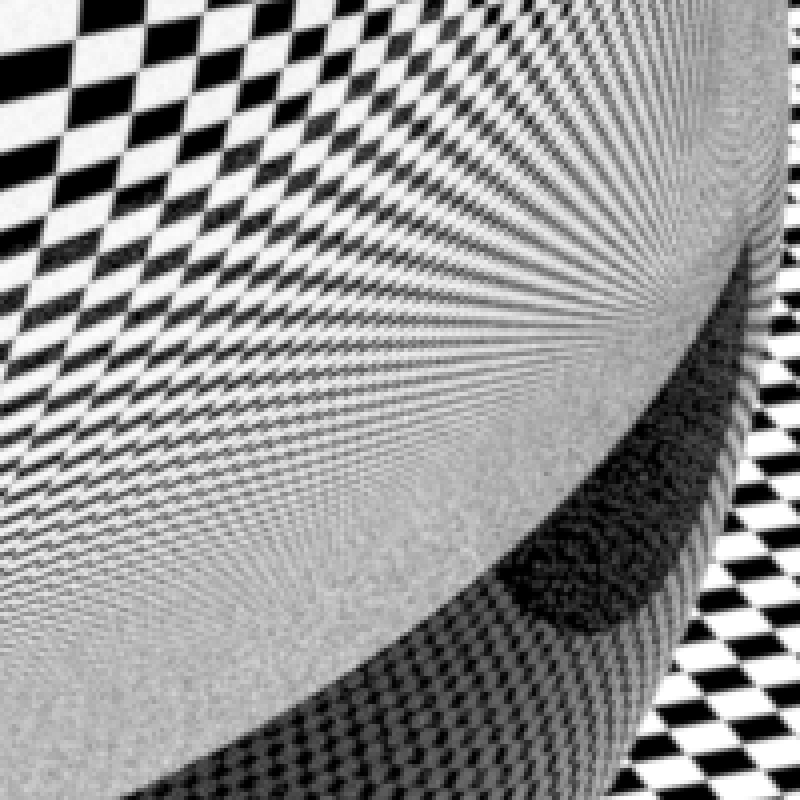}\\%
        \rotatebox{90}{\parbox{1\unit}{\centering (a) Sobol}}&%
        \includegraphics[width=1\unit]{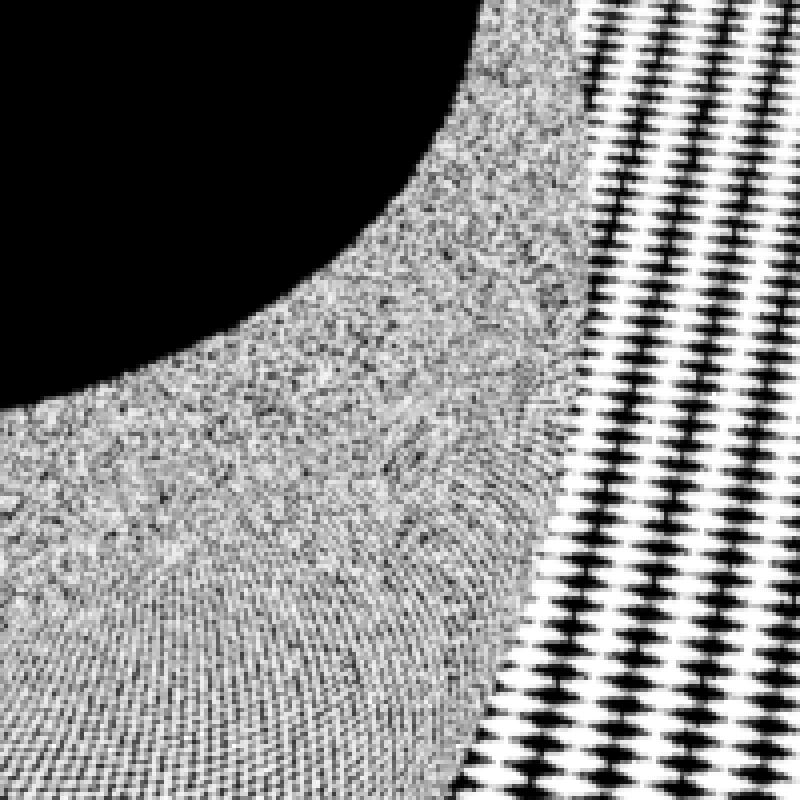}&%
        \includegraphics[width=1\unit]{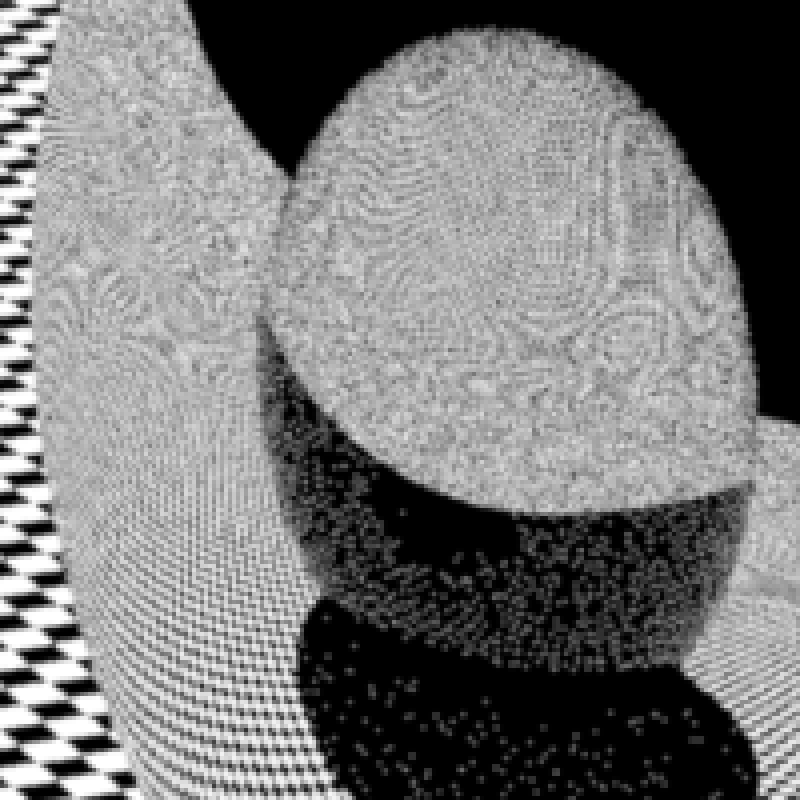}&%
        \includegraphics[width=1\unit]{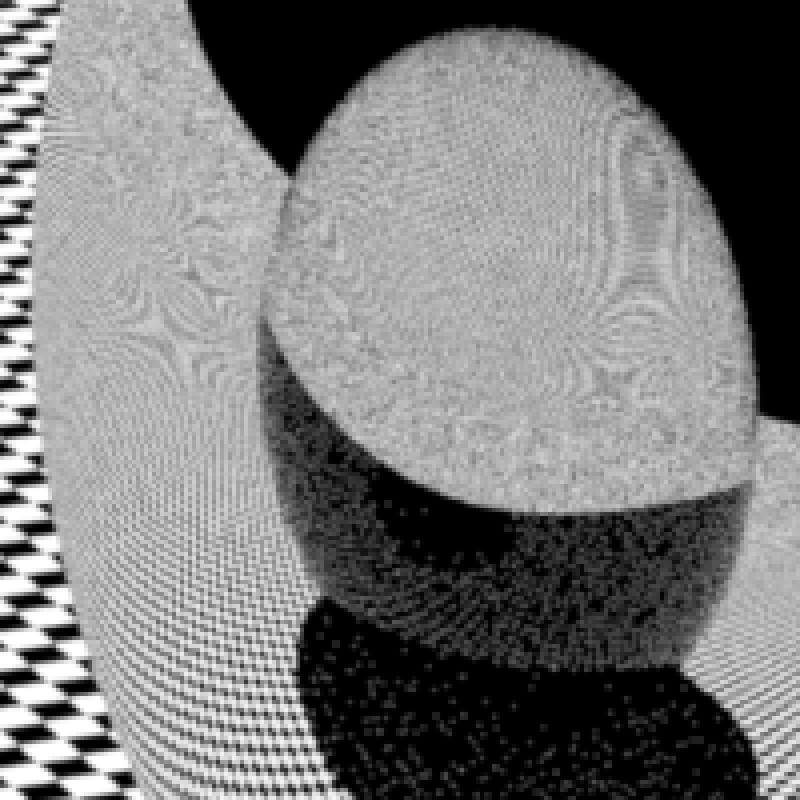}&%
        \includegraphics[width=1\unit]{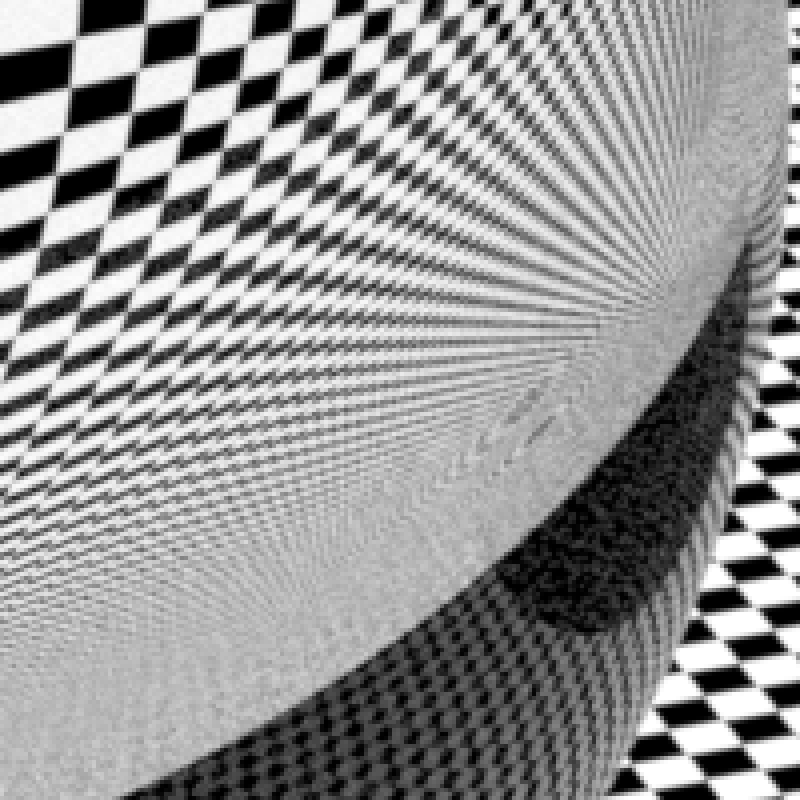}\\%
        & 2 SPP & 8 SPP & 16 SPP & 32 SPP
    \end{tabular*}}
    \caption{%
        \label{fig:sobolxi}
        \bluevar{Physically-based renderings} using various digital dyadic sequences: (a) The Global Sobol sampler based on the Sobol sequence; 
        (b) our adapted global sampler using a random $\xi$-sequence; and
        (c) low-discrepancy $(0, 2)$-sequence sampler, which uses independently shuffled Sobol 2D samples for each pixel and pair of dimensions. We placed our sampler in the middle for easier comparison. 
        \bluevar{The figure shows that the $\xi$-based Global sampler is less aliased than Sobol and less noisy than the pixel-based sampler.}
        The insets show selected parts of the scene at different sampling rates.  
        }
    \vspace{-2mm}
\end{figure}
\begin{figure*}
    \centering
    \settoheight{\labelHeight}{$\xi_0$}
  \setlength{\unit}{(\textwidth - \labelHeight - 6\gap)/6}
  {\scriptsize
    \begin{tabular*}{1\textwidth}{@{}c@{\extracolsep{\fill}}c@{\extracolsep{\fill}}c@{\extracolsep{\fill}}c@{\extracolsep{\fill}}c@{\extracolsep{\fill}}c@{\extracolsep{\fill}}c@{}}
        \rotatebox{90}{\parbox{1\unit}{\centering Sobol}}&%
        \includegraphics[width=1\unit]{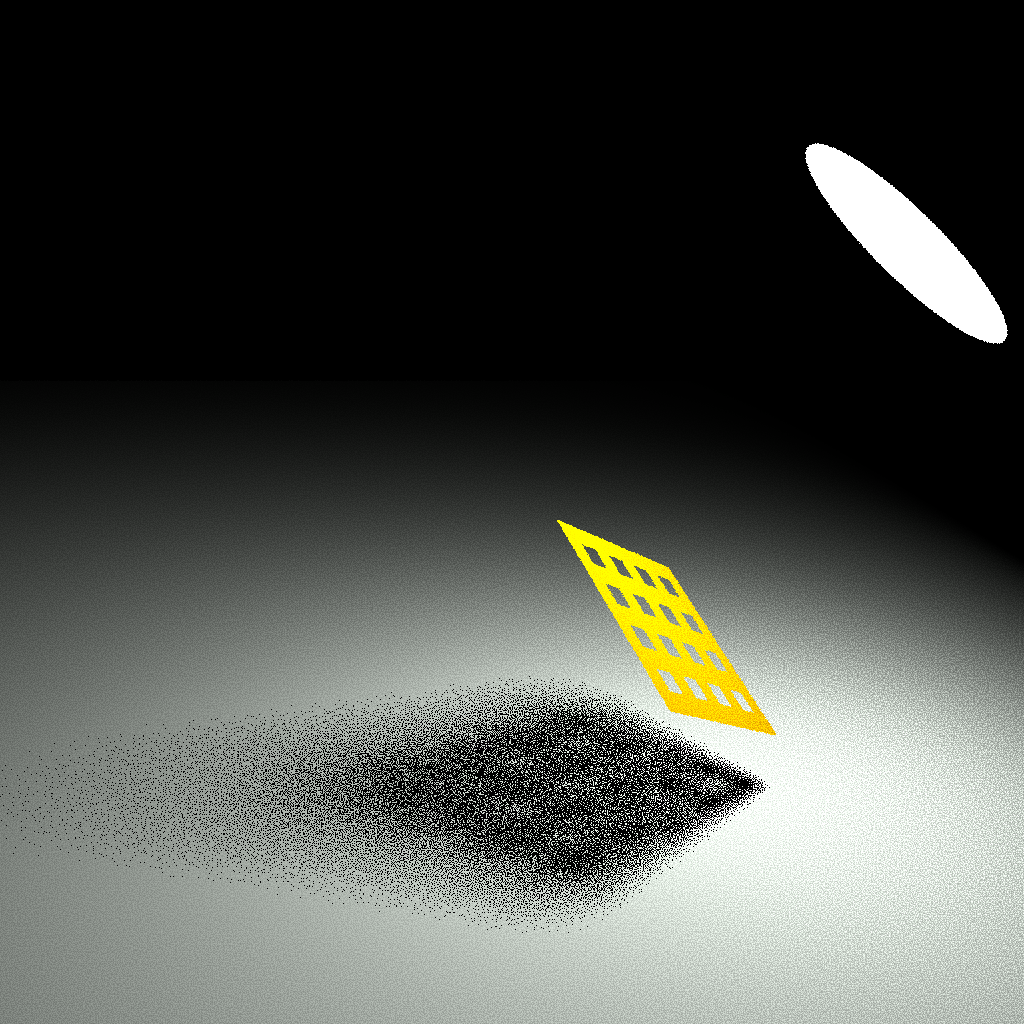}&%
        \includegraphics[width=1\unit]{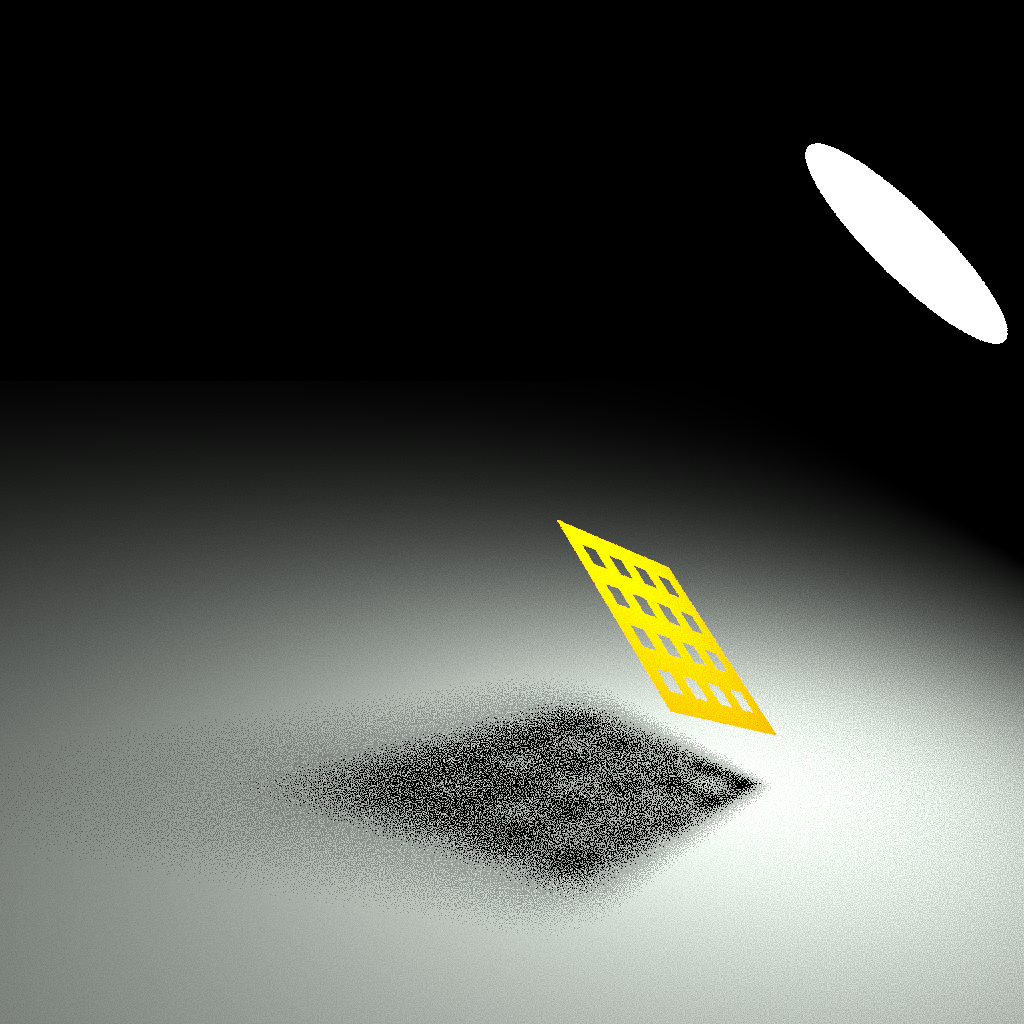}&%
        \includegraphics[width=1\unit]{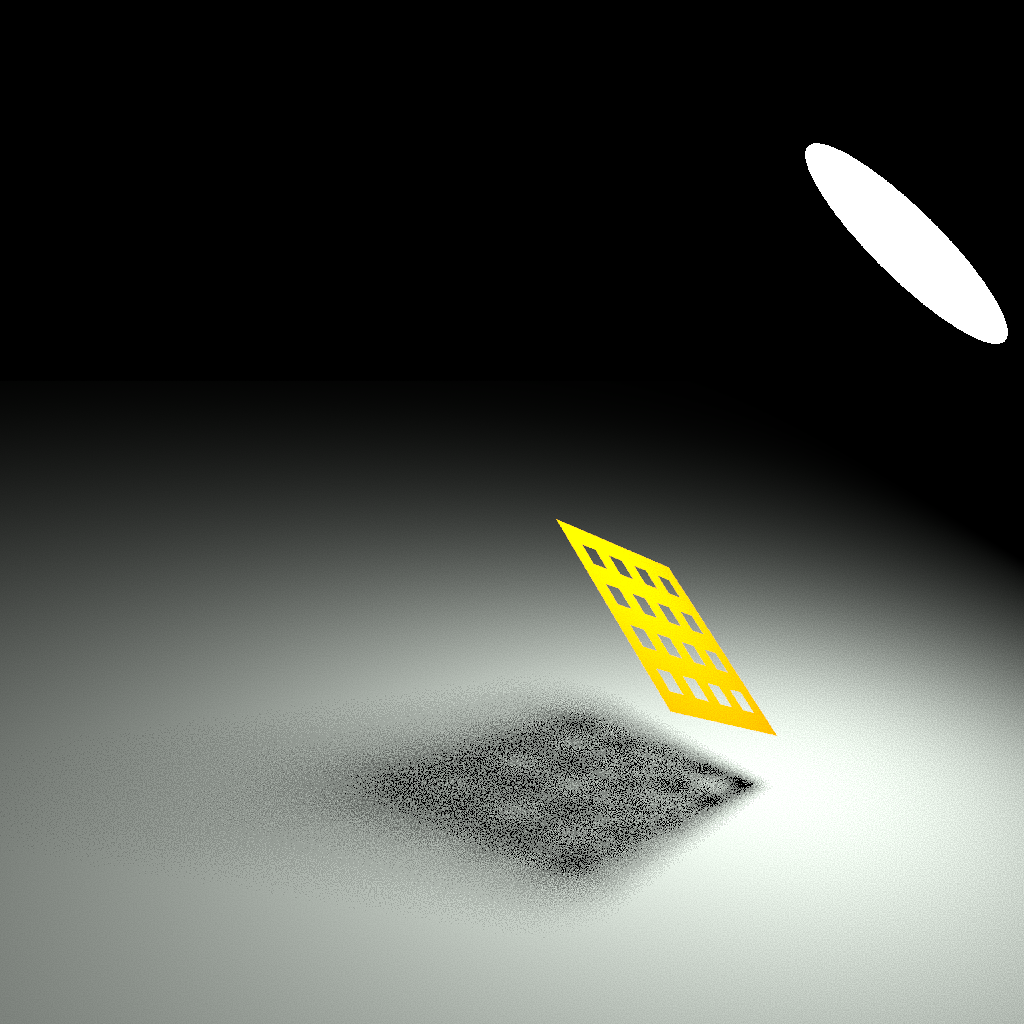}&%
        \includegraphics[width=1\unit]{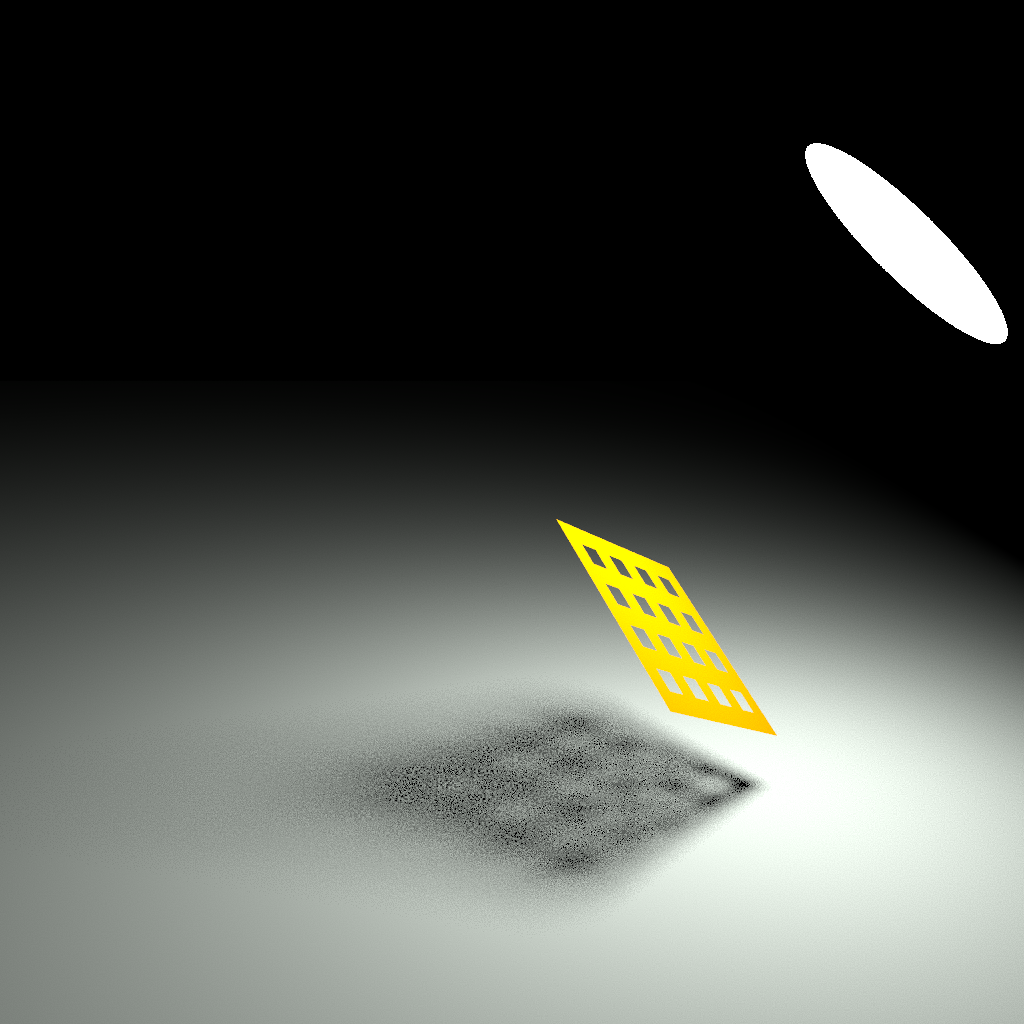}&%
        \includegraphics[width=1\unit]{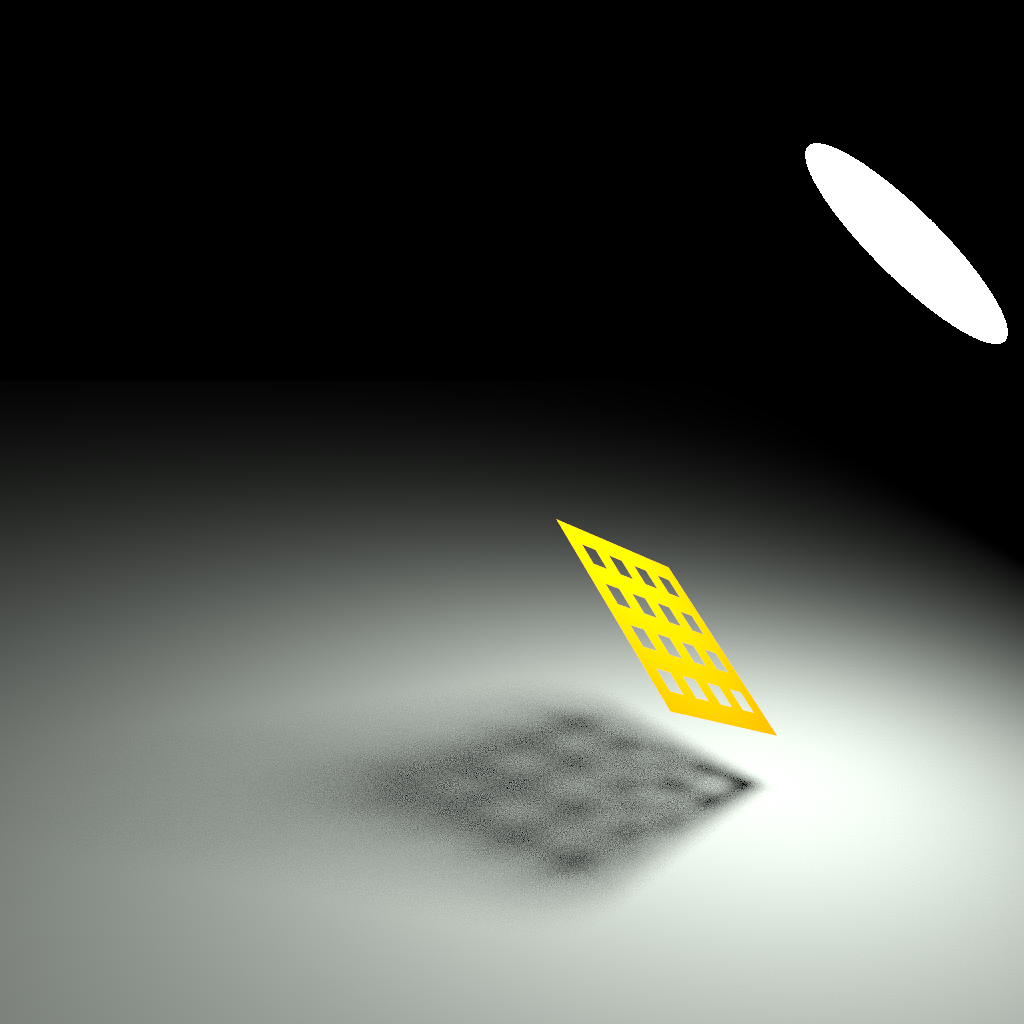}&%
        \includegraphics[width=1\unit]{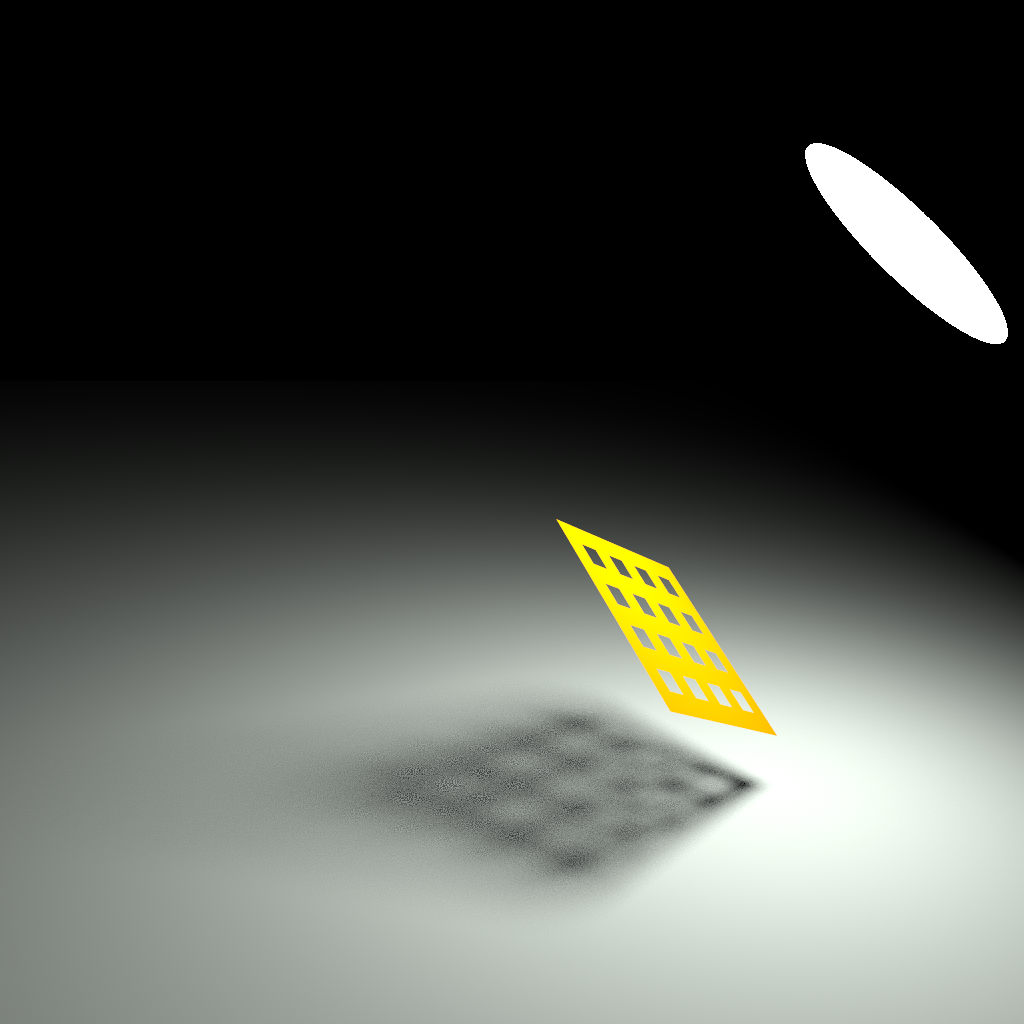}\\%
        \rotatebox{90}{\parbox{1\unit}{\centering $\xi_0$}}&%
        \includegraphics[width=1\unit]{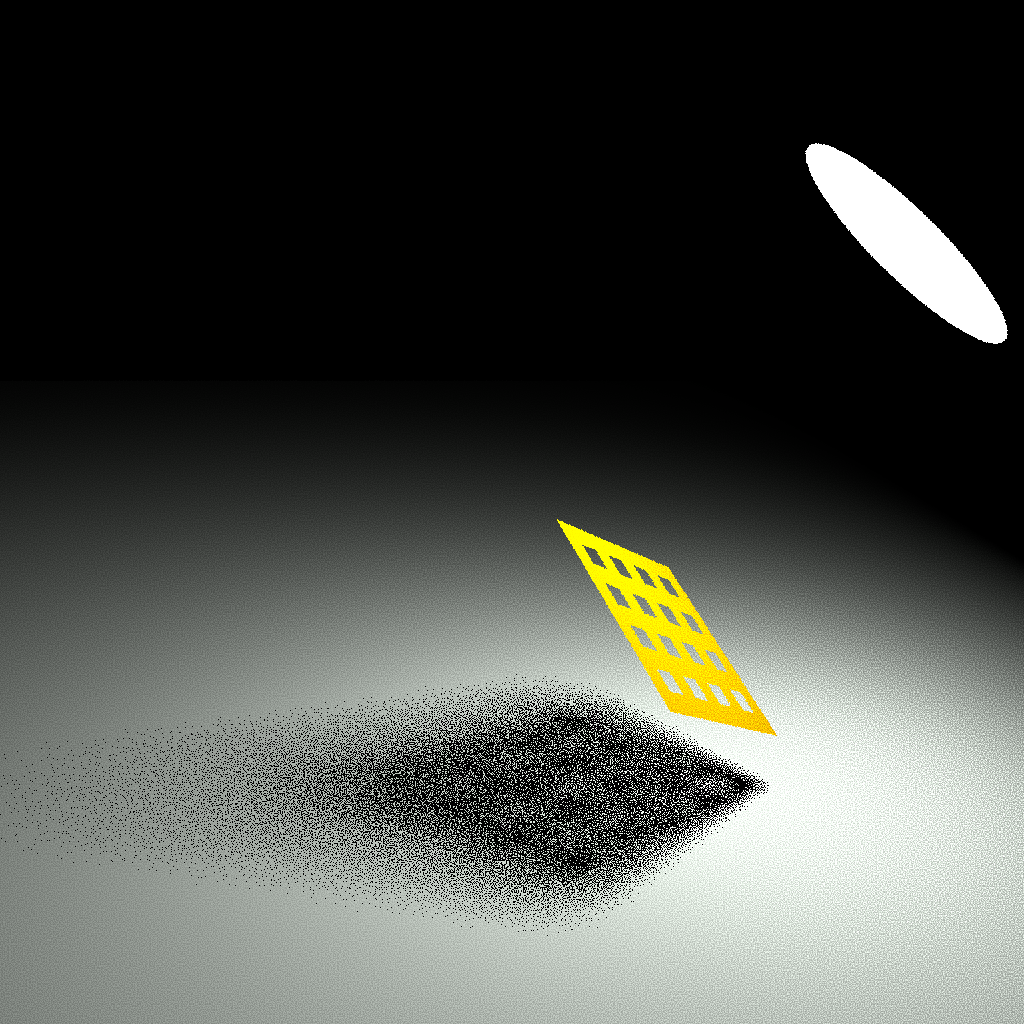}&%
        \includegraphics[width=1\unit]{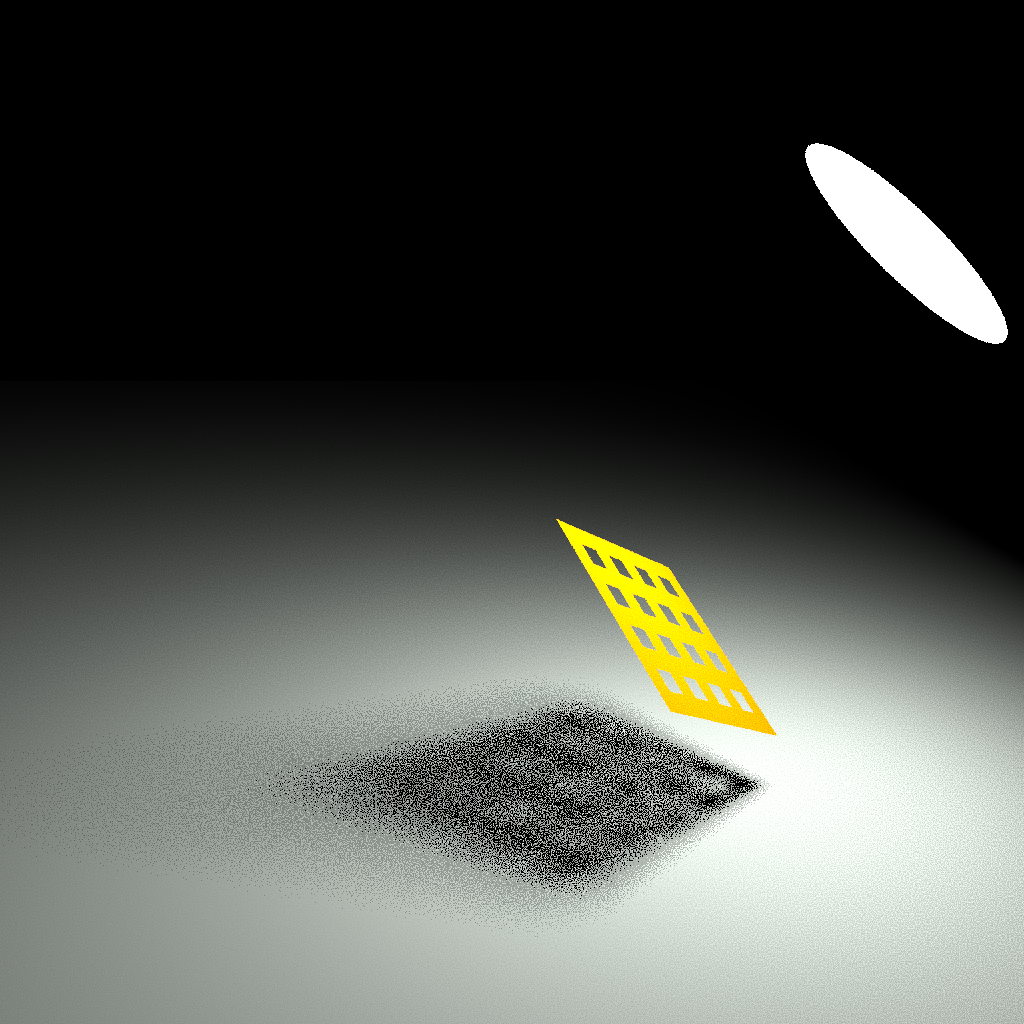}&%
        \includegraphics[width=1\unit]{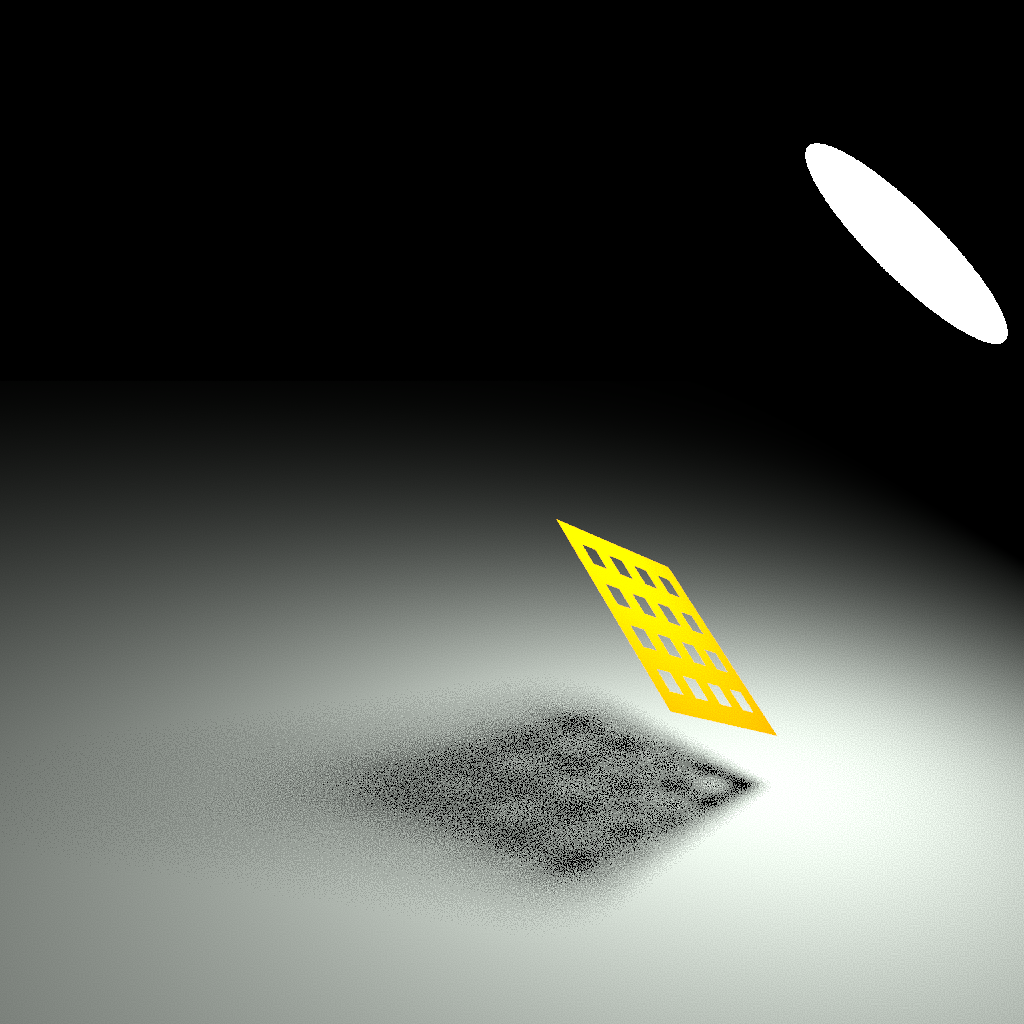}&%
        \includegraphics[width=1\unit]{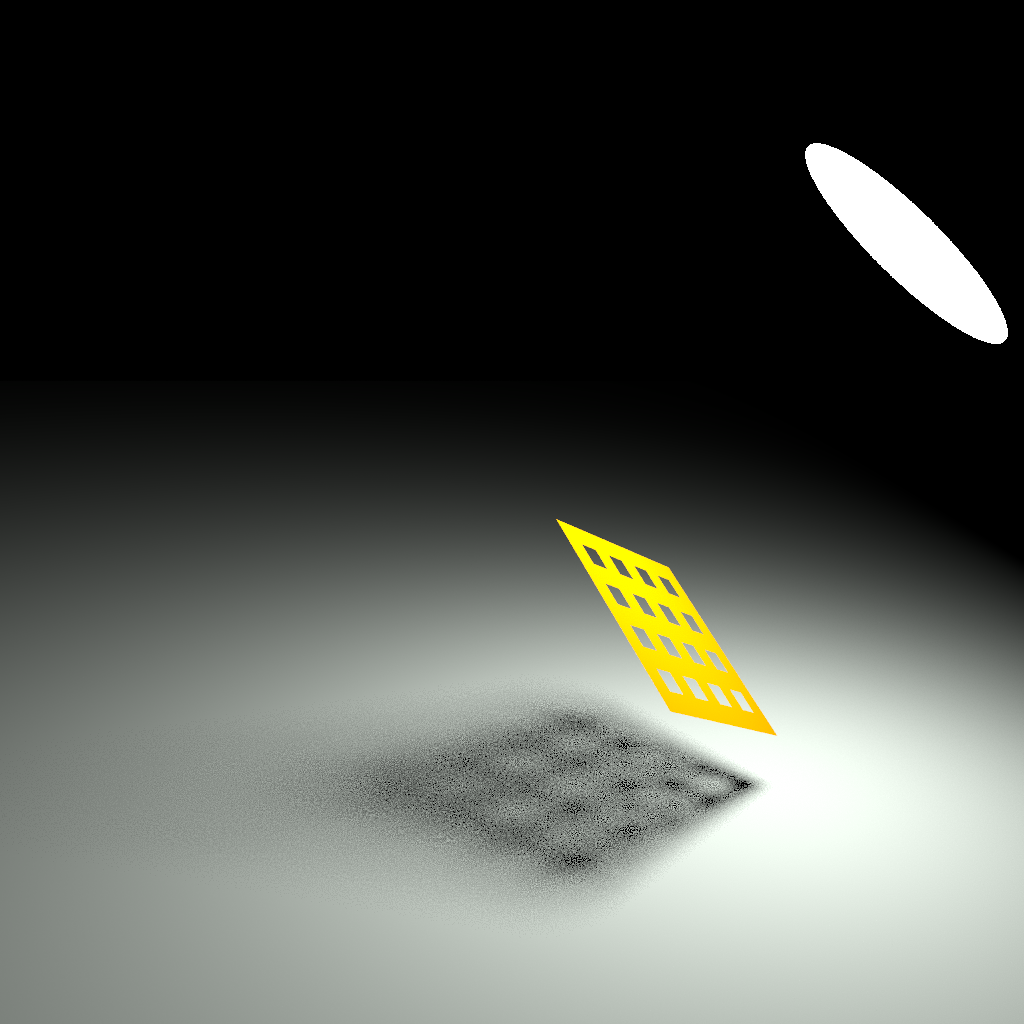}&%
        \includegraphics[width=1\unit]{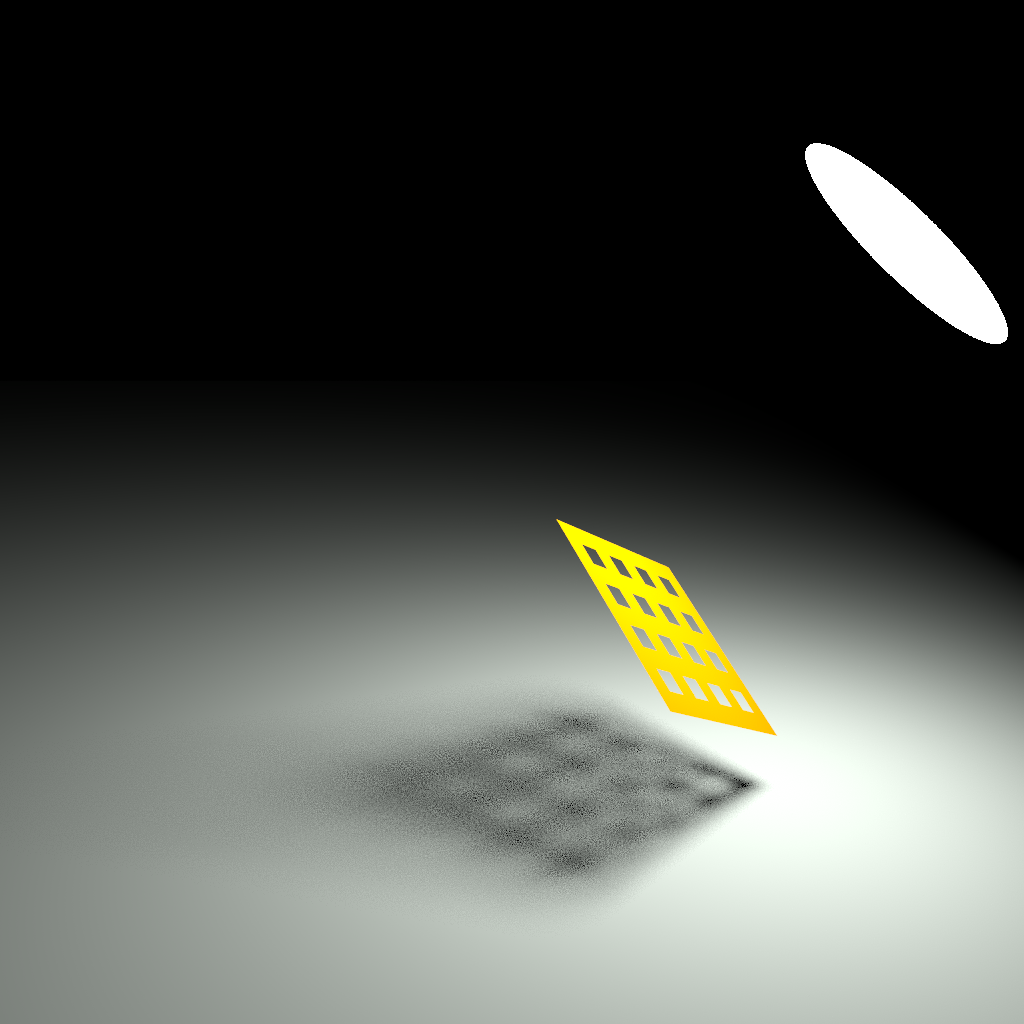}&%
        \includegraphics[width=1\unit]{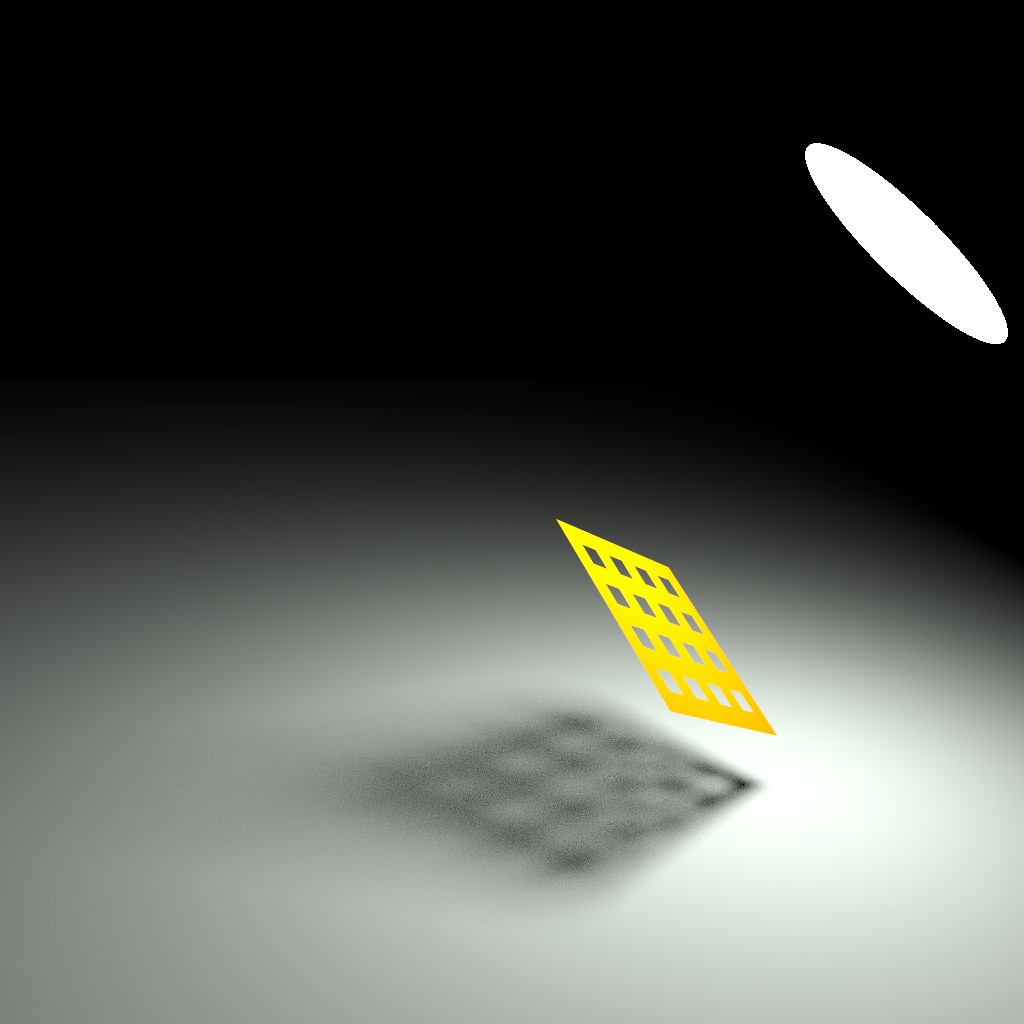}\\%
        & 1 SPP & 2 SPP & 4 SPP & 8 SPP & 16 SPP & 32 SPP
    \end{tabular*}}
    \caption{%
        \label{fig:rendering}
        Renderings obtained with the Z-sampler from Ahmed and Wonka~\shortcite{Ahmed20Screen} using Sobol and $\xi_0$ sequences for the generation of the samples at difference sampling rates (samples per pixel).
        \bluevar{The figure shows that the $\xi$-based Z-Sampler is less noisy than the Sobol-based one for the shown scene.}
    }
    \vspace{-2mm}
\end{figure*}

\subsection{Morton Ordering\label{sec:morton ordering}}

Instead of generating the $x$ and $y$ coordinates separately, they can be obtained jointly as a Morton-ordering \cite{Morton66Computer} index
\begin{equation}
    \mathbf{z} = 0.y_{0}x_{0}y_{1}x_{1}\cdots\,, \label{eq:yx-bits}
\end{equation}
and then we can split (\emph{unshuffle} \cite{Warren2012Hacker}) the bits later.
This can be done easily by
\begin{align}
    \mathbf{z}^{(i+1)} & \leftarrow \mathbf{z}^{(i)} + p_z[s^{(i)} \bmod 4] \cdot 2^{-2i}\,, 
    & \mathbf{s}^{(i+1)} & \leftarrow \lfloor \mathbf{s}^{(i)}/4 \rfloor\,
    \label{eq:s2z} 
\end{align}
where $i=0,\dots,15$, $\mathbf{z}^{(0)}=0$,  $\mathbf{s}^{(0)}$ is the point index in the sequence, $p_z[q]$ is obtained by interleaving (\emph{shuffling} \cite{Warren2012Hacker}) the bits of 
$y$- and $x$-coordinates of the $q$-th point $p[q]$ for $q=0,\dots,3$, and addition is carryless. This may not be very useful in generating the samples, because, in addition to the cost of unshuffle, it requires 64-bit processing to obtain the full 32-bit resolution of each axis.
Mapping to Morton indices, however, is useful in the reverse direction, where we are given the 
coordinates of a stratum, and asked to retrieve the exact location of the (first) sample in that stratum, and/or its sequence number, as encountered in stippling and importance sampling.
Note that the coordinates of the stratum prefix the coordinates of points inside it.
Thus, obtaining a point sequence number, given the stratum, is a partial inversion of the sequence, which we discuss in more detail next.


\subsection{Inverting the Sequence\label{sec:inversion}}

A straightforward approach to inverting a $\xi$-sequence, mapping a stratum index $\mathbf{z}$ into a sequence number $\mathbf{s}$, is to \emph{undo} the iterations
in Eq.~\eqref{eq:s2z}.
Let
\begin{align}
\mathbf{s} & = \cdots s_{54}s_{32}s_{10} \,, \\
\mathbf{z} & =0.z_{01}z_{23}\cdots \,, \\
p_z[q] & =0.p_{z01}[q]p_{z23}[q]\cdots \,
\end{align}
encode the sequence index of a point, its Morton index, and 
the first $4$ points
of the sequence written in base $4$.
Note the reversed indexing of the digits of the sequence number.
Given $\mathbf{z}^{(0)}=\mathbf{z}$, we may find $\mathbf{s}$ iteratively starting from $\mathbf{s}^{(0)}=0$ as follows:
\begin{align}
    q & \leftarrow q: z_{01}^{(i)} = p_{z01}[q]\,, \label{eq:z2s quadrant} \\
    \mathbf{s}^{(i+1)} & \leftarrow \mathbf{s}^{(i)} + q \cdot 2 ^ {2i}\,, \label{eq:z2s s} \\
    \mathbf{z}^{(i+1)} & \leftarrow (\mathbf{z}^{(i)} + p_z[q]) \cdot 4\,, \label{eq:z2s z}
\end{align}
in the $i$-th iteration.
The idea is that, if we examine Eq.~\eqref{eq:s2z}, we see that only one vector contributes to the most significant two bits of $\mathbf{z}$.
We can therefore deduce which of the four vectors was added in the first iteration of computing $\mathbf{z}$, and insert its index as a digit in the sequence number, as in Eq.~\eqref{eq:z2s s}.
We then subtract, or equivalently add, the originally added vector to $\mathbf{z}$.
Now, only one vector is responsible for the following pair of bits.
By shifting the bits up we can recursively apply the undo process.
Note that this process iteratively zeros $\mathbf{z}$ while building $\mathbf{s}$.


We implemented this and obtained the inversion rate of \textasciitilde 38M points per second (cf. Table~\ref{tab:speed}) that, while faster than any sampling technique we are aware of, is significantly slower than the forward production rate.
A much faster approach takes advantage of Morton ordering discussed in Section~\ref{sec:morton ordering}, noting that the inversion matrix bears a similar staggered pair-of-columns structure. This gives a speed up of up to 70\% over the forward rate, since only one value is evaluated rather than two. \notforarxiv{Implementation details may be found in the supplementary code and interactive demos.}


\subsection{Coding Complexity}

Again, this is an aspect where $\xi$-sequences \blueit{have an advantage}.
Not only the fact that their implementation follows straightforward steps, as listed in Algorithm~\ref{alg:setup},
but the intuition of these steps has a clear geometric interpretation, as we discussed earlier.
\notforarxiv{In the supplementary materials we provide a JavaScript-based interactive demonstration of these sequences that we encourage the reader to experiment with.}


\subsection{Atlas}

The relatively small two-dimensional space of $\xi$-sequences, along with their easy implementation on GPUs, makes it possible to scan them exhaustively and build an atlas for exploring these patterns and finding ones with favorable properties. Fig.~\ref{fig:atlas} shows example maps for the star discrepancy, minimum distance between points, and average distance to the nearest neighbor.


\subsection{Rendering Examples}
\label{ssec:rendering}
We show the impact of our work in two rendering examples.

The first example (Fig.~\ref{fig:sobolxi}) demonstrates that self-similar sequences provide better anti-aliasing than the standard Sobol sequence. To make a meaningful comparison we integrated the sequences into the Global Sobol Sampler coming with PBRT. The $\xi$-sequences 
offer: (1) 3$\times$ speedup in sample generation and 5$\times$ in inversion, leading to up to 37\% observed reduction in rendering times; (2) no inversion lookup tables, cf.\cite{Gruenschlos12Enumerating}; (3) notable anti-aliasing with randomized $\xi$-sequences, and (4) they are free; emphasizing the value of our in-depth study.

Let us make a few comments on the implementation. The sampler uses a \emph{higher-dimensional} Sobol sequence, generated by certain matrices $(I,P,C_2,C_3,\dots)$; the first two are used to sample the image plane. We transform the matrices so that the first two become generating matrices $(C_{\xi(X)},C_{\xi^+(Y)})$ of the desired $\xi$-sequence. For that purpose, we decompose $C_{\xi(X)}=L_xU$ and $C_{\xi^+(Y)}=L_yPU$ as in Theorem~\ref{th-progressive-pairs}. Then we right-multiply all the original matrices $(I,P,C_2,C_3,\dots)$ by $U$ (which means just reordering of the sequence) and left-multiply the first two matrices by $L_x$ and $L_y$ respectively (which means scrambling). The resulting sequence is used for sampling.

The second example (Fig.~\ref{fig:rendering}) also shows slightly better anti-aliasing. Here we integrated the sequences into the Z-Sampler \cite{Ahmed20Screen} that uses the samples as is, without any scrambling.
\notforarxiv{Full-resolution images may be 
found in the supplementary materials.}

\begin{figure}
    \centering
    \includegraphics[width=1\columnwidth]{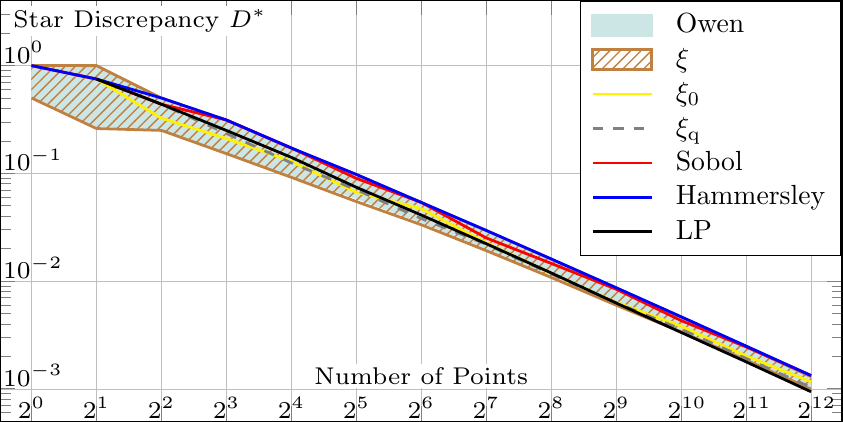}
    \caption{%
        \label{fig:discrepancy}
        Discrepancy comparison of XOR-scrambled $\xi$-sequences to unscrambled Sobol sequence.
        Hammersley and Larcher-Pillichshammer (LP) nets are shown for comparison.
        For Owen and $\xi$ we measured the discrepancy over a randomly drawn 64K sample. The yellow line refers to the $\xi_0$-sequence.
        The dashed line $\xi_{\text{q}}$ refers to the lowest measured discrepancy of the sampled sequences when the coordinates of the points are truncated to $\log_2(\text{Number of Points})$ bits.
    }
\end{figure}


\subsection{Discrepancy and Spectral Evaluation}

The new class of self-similar sequences introduced in the paper has several advantages, that we have evaluated in the preceding subsections, such as speed, memory usage, anti-aliasing, and versatility. 
%
Towards that, we show discrepancy and spectral comparison in Fig.~\ref{fig:discrepancy} and~\ref{fig:spectral}, both demonstrating that $\xi$-sequences are a good representative sample of the space of digital dyadic sequences.
Note that all digital 
dyadic sequences exhibit some structure when looking at a single instance and the perceived structure changes with $m$ (see Fig.~\ref{fig:teaser}).
\begin{figure*}
    \centering
  \setlength{\unit}{(\textwidth - 4\gap)/5}
  {\scriptsize
    \begin{tabular*}{1\textwidth}{@{}c@{\extracolsep{\fill}}c@{\extracolsep{\fill}}c@{\extracolsep{\fill}}c@{\extracolsep{\fill}}c@{\extracolsep{\fill}}c@{}}
        \includegraphics[width=1\unit]{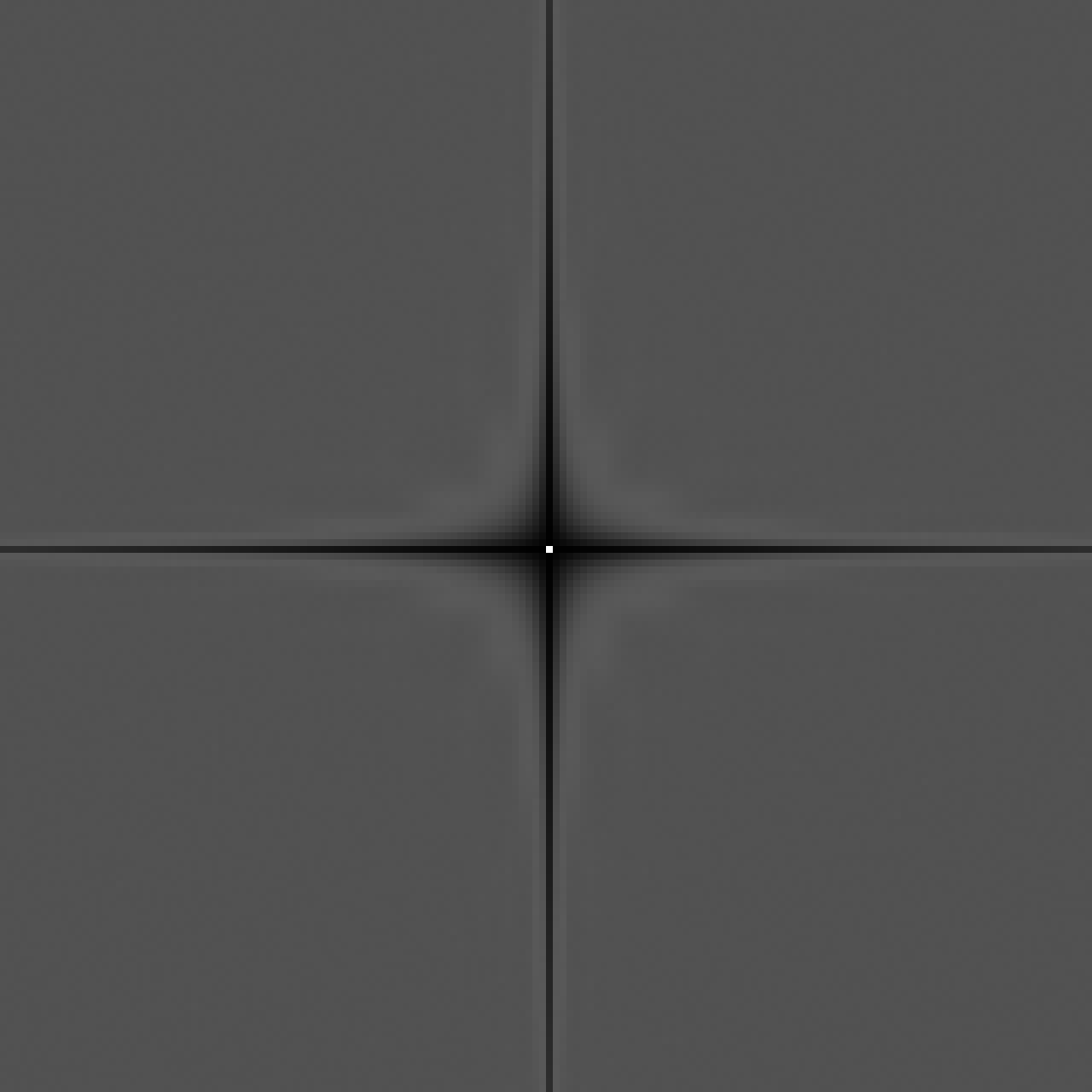}&%
        \includegraphics[width=1\unit]{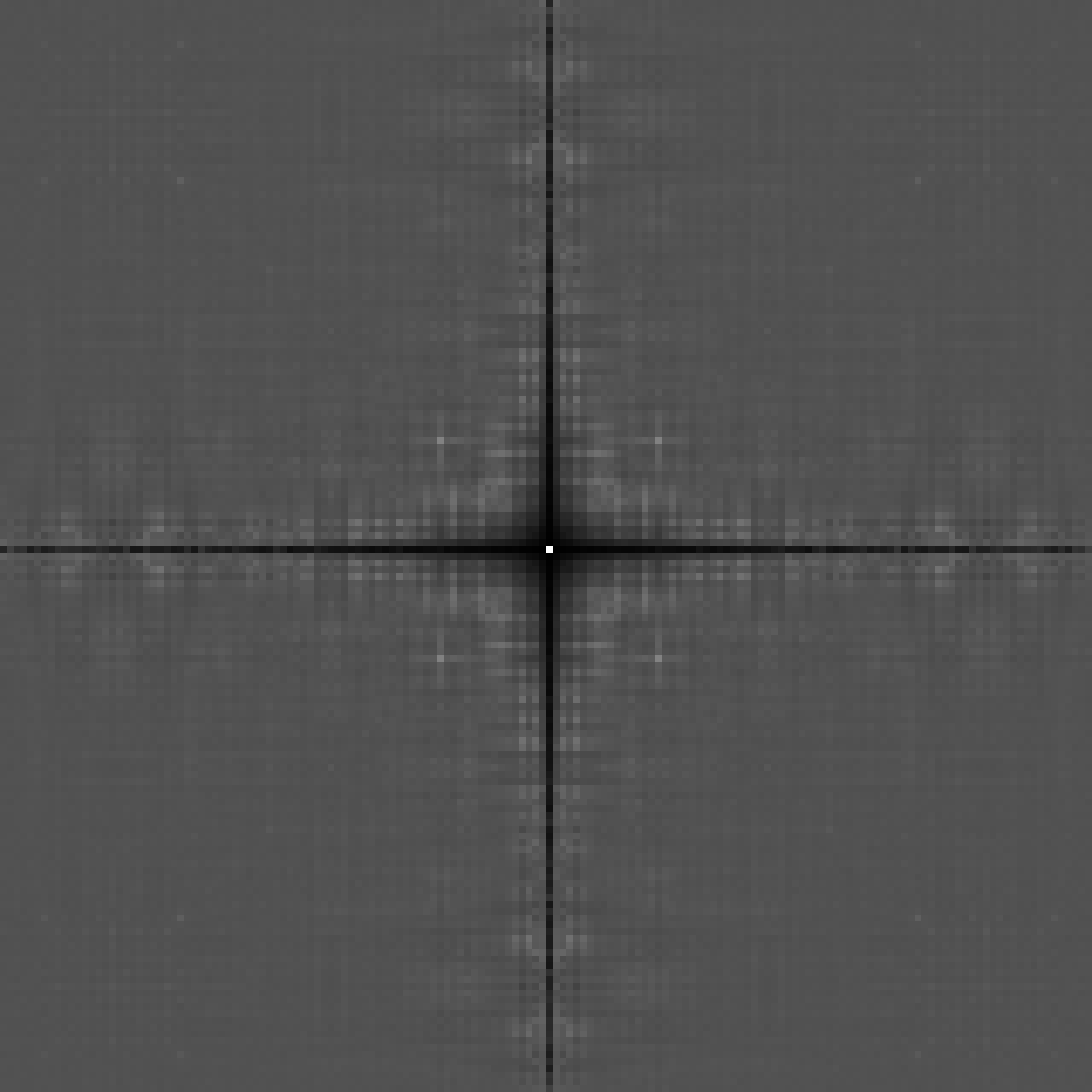}&%
        \includegraphics[width=1\unit]{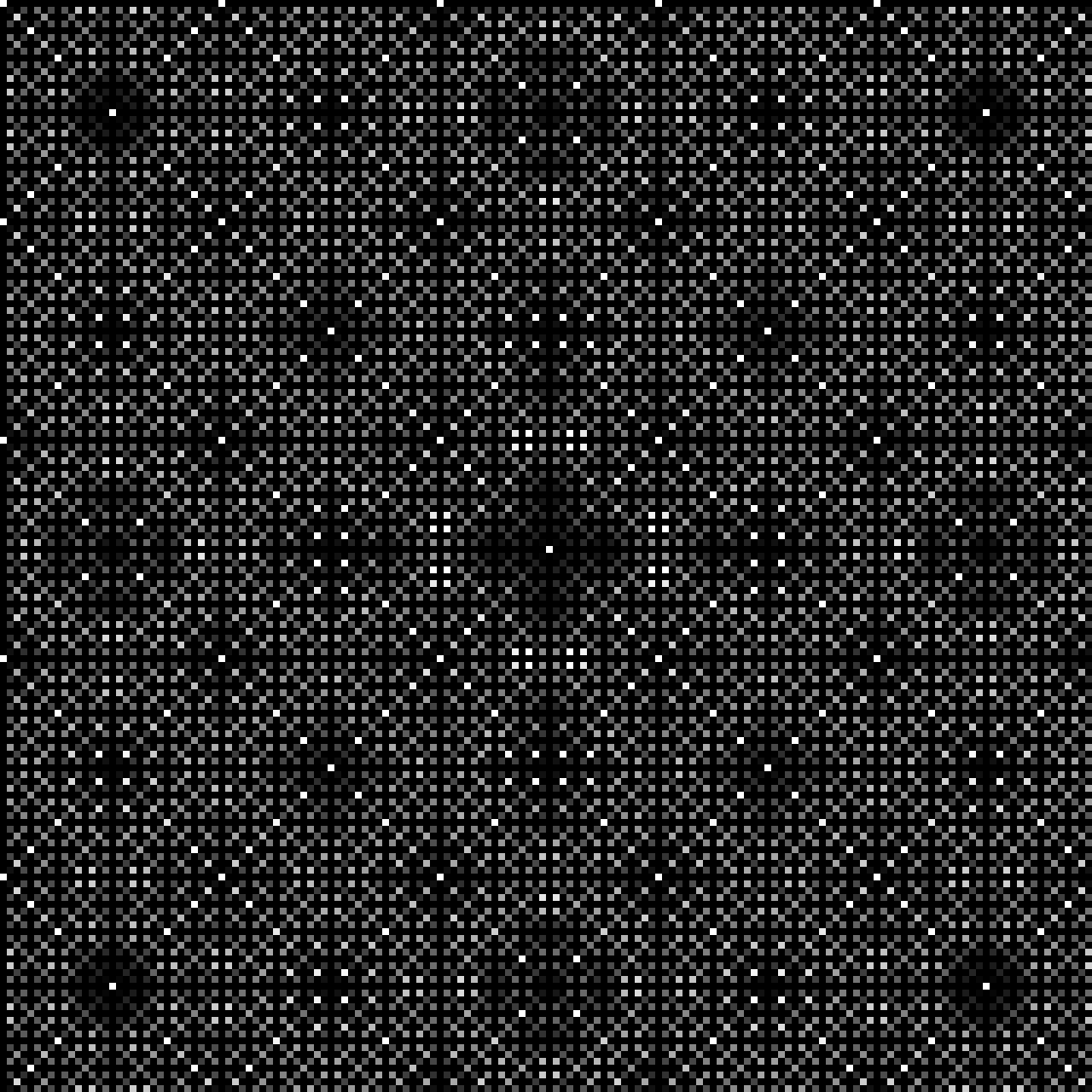}&%
        \includegraphics[width=1\unit]{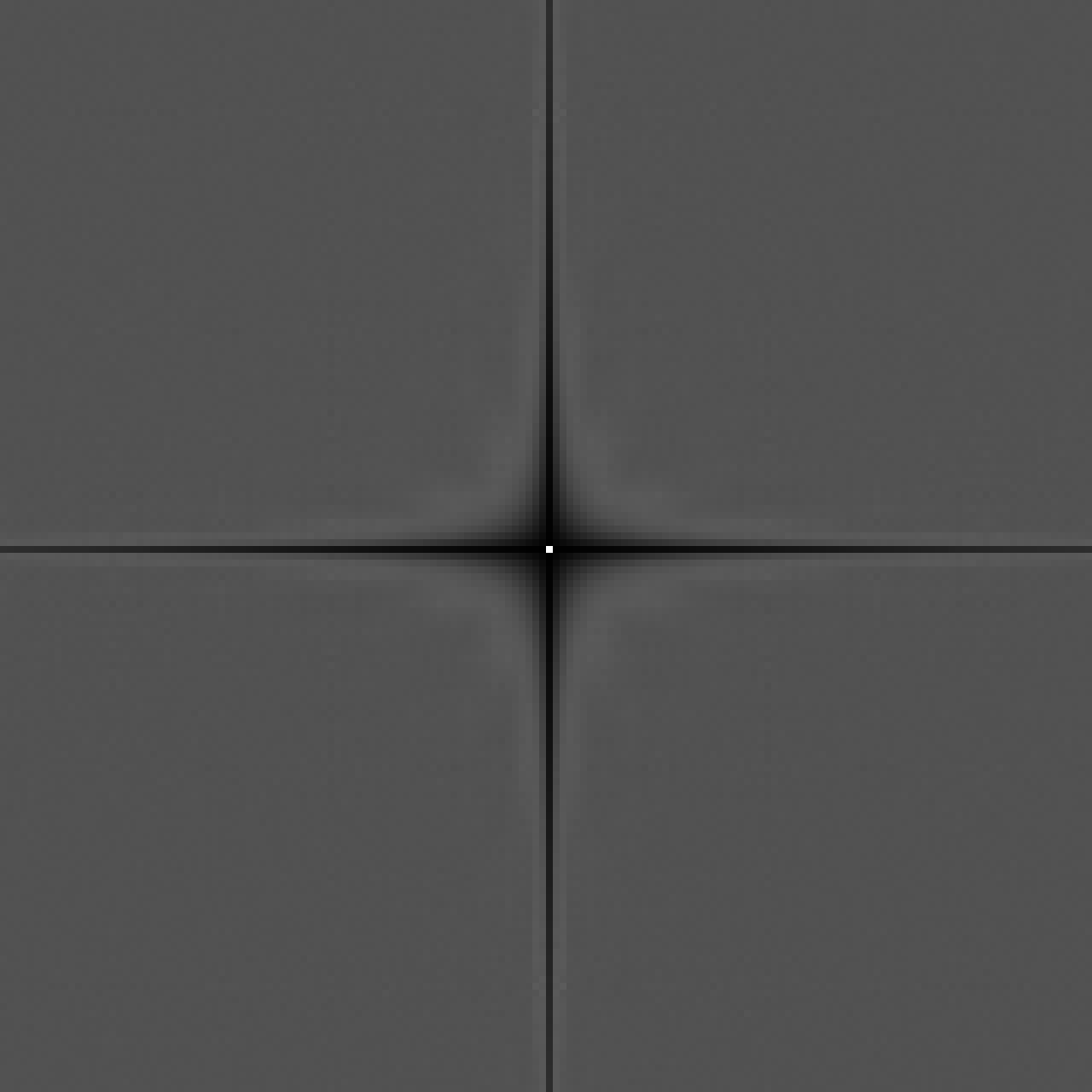}&%
        \includegraphics[width=1\unit]{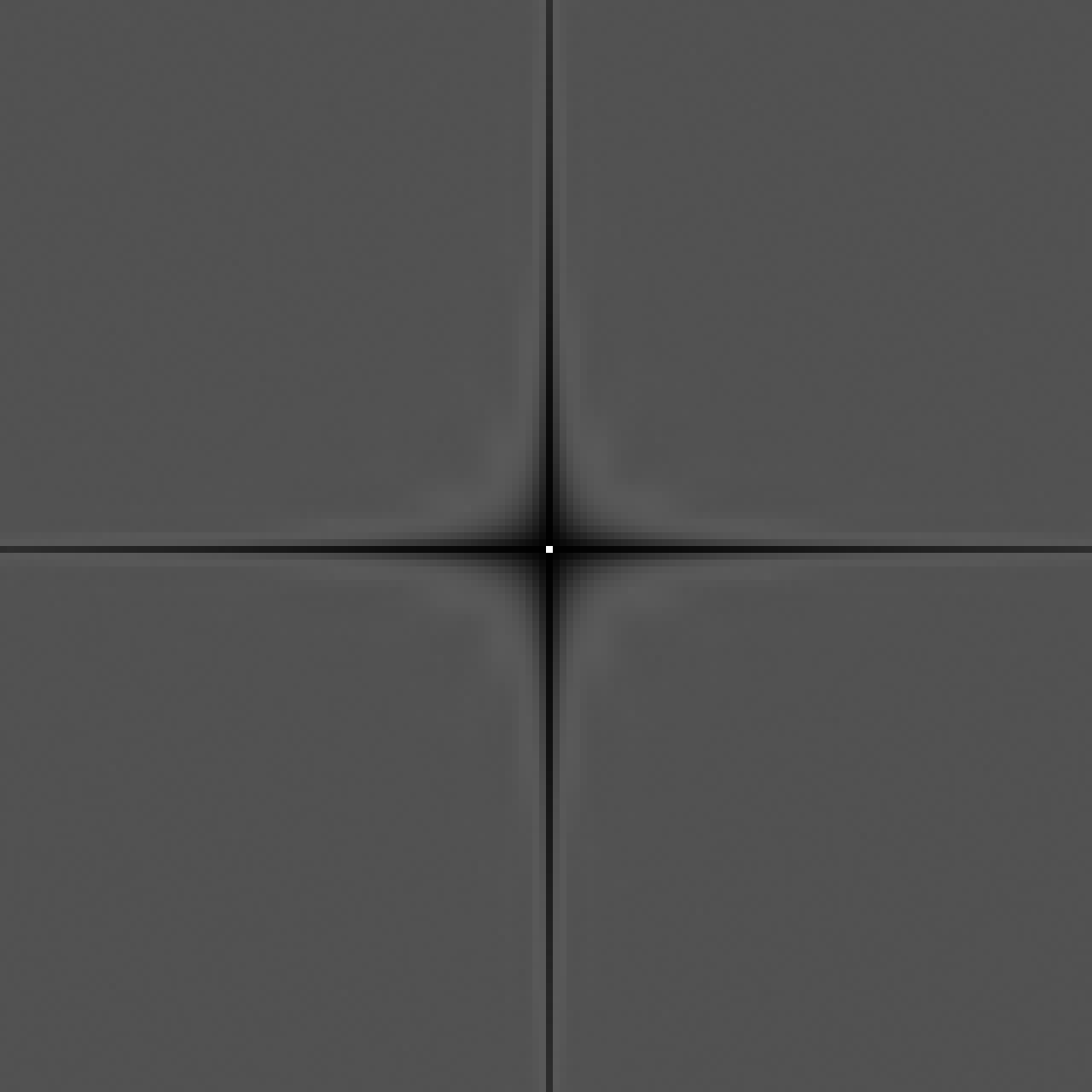}\\%
        \includegraphics[width=1\unit]{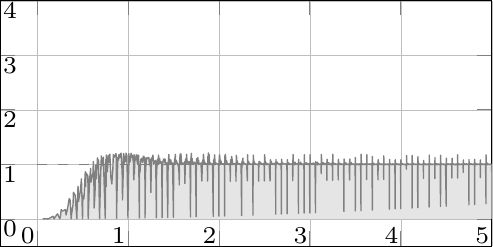}&%
        \includegraphics[width=1\unit]{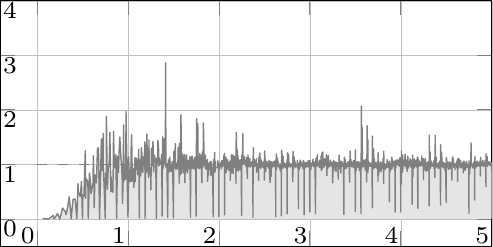}&%
        \includegraphics[width=1\unit]{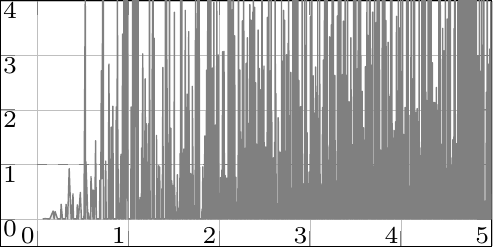}&%
        \includegraphics[width=1\unit]{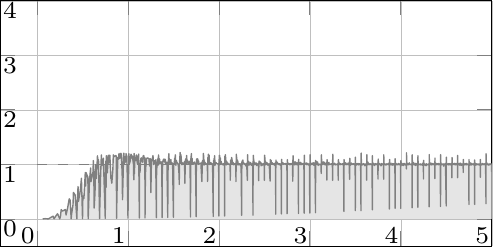}&%
        \includegraphics[width=1\unit]{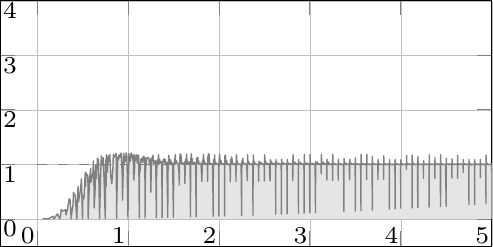}\\%
        \includegraphics[width=1\unit]{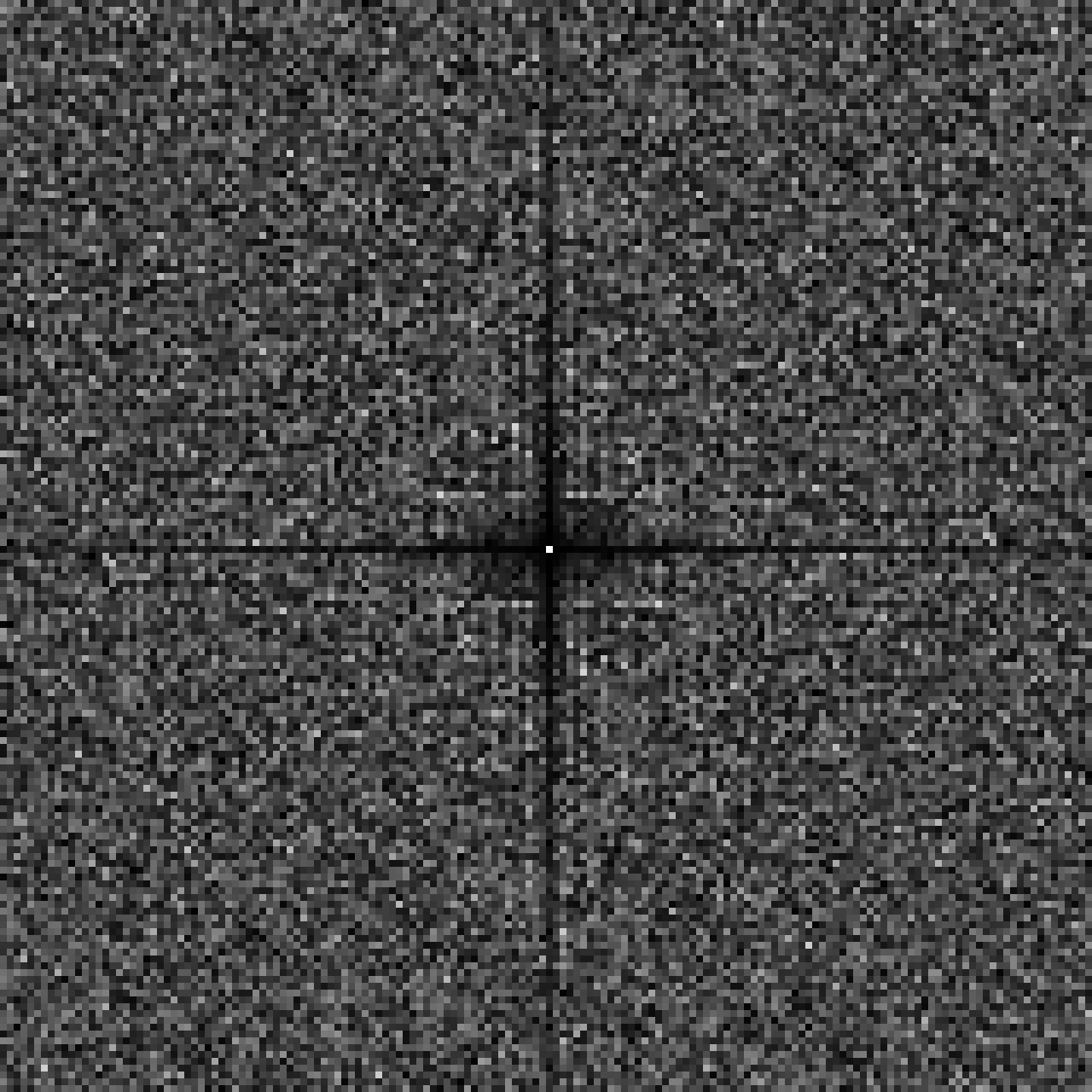}&%
        \includegraphics[width=1\unit]{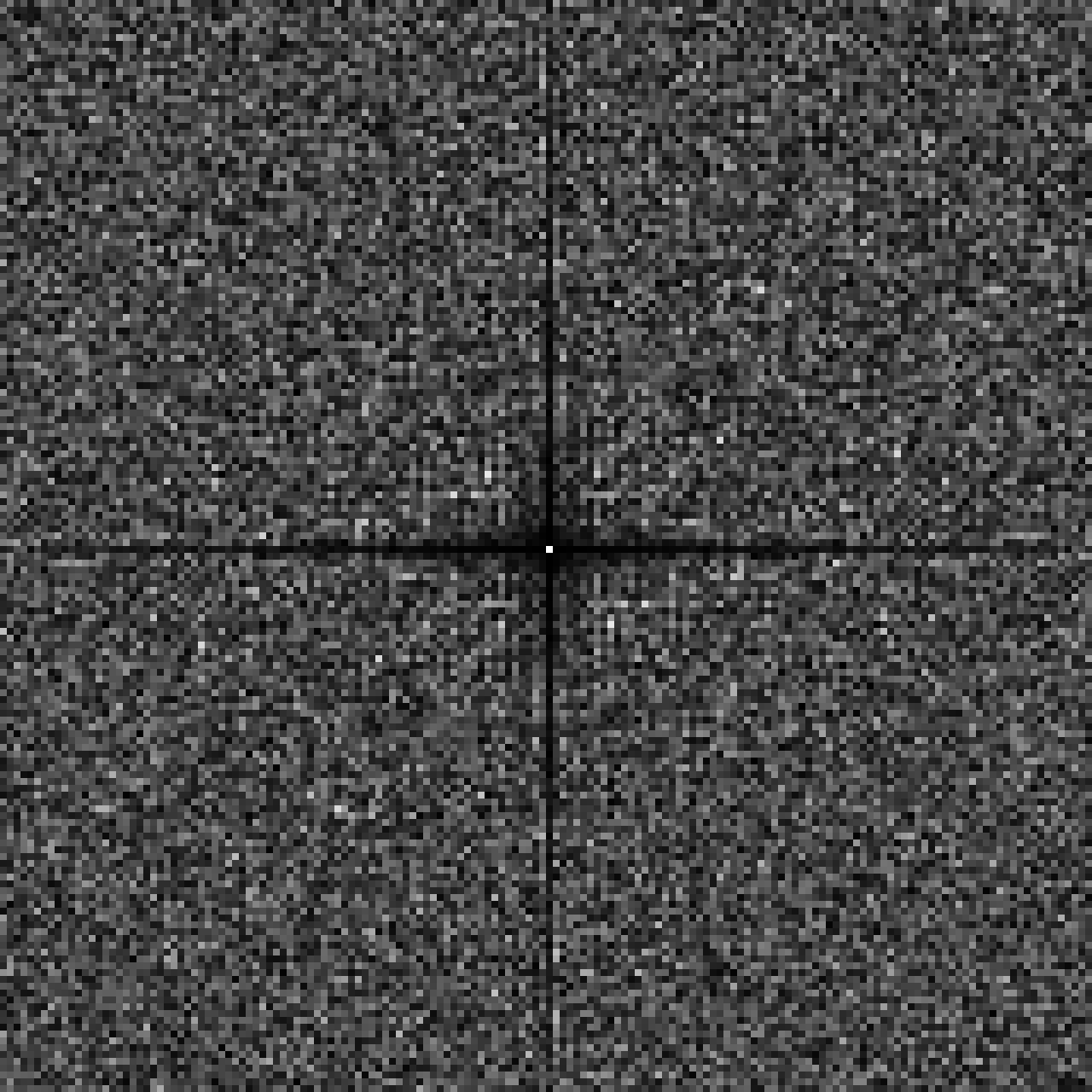}&%
        \includegraphics[width=1\unit]{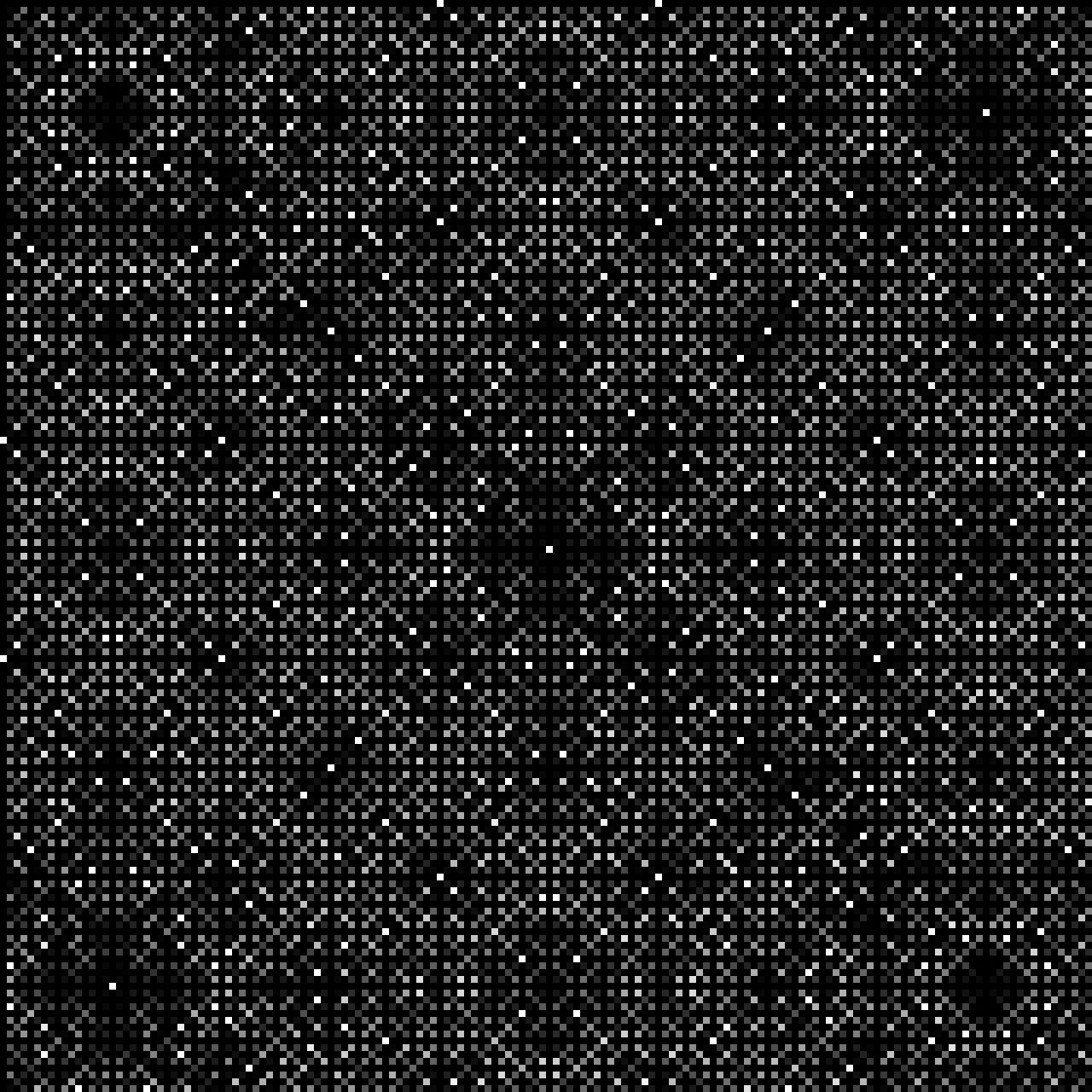}&%
        \includegraphics[width=1\unit]{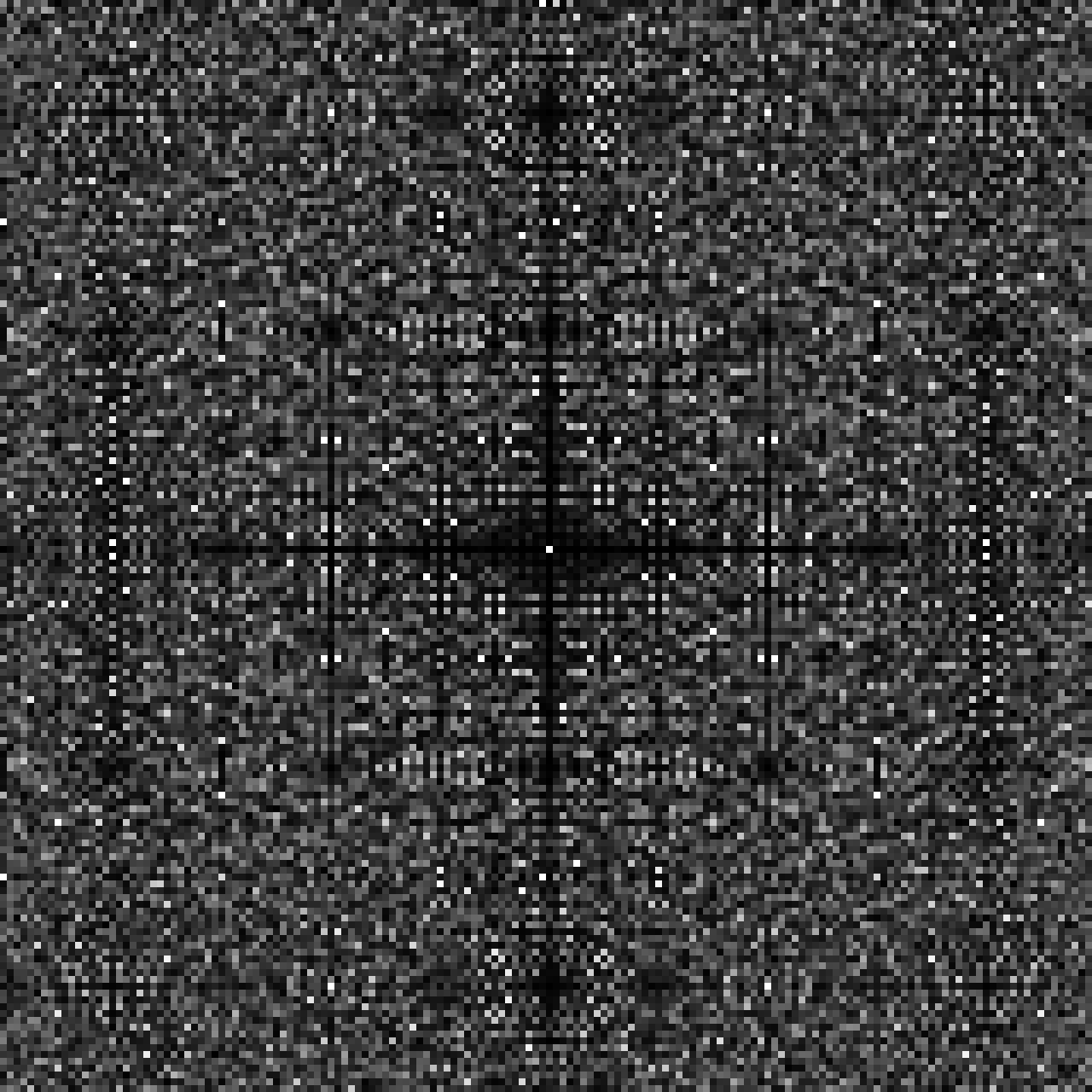}&%
        \includegraphics[width=1\unit]{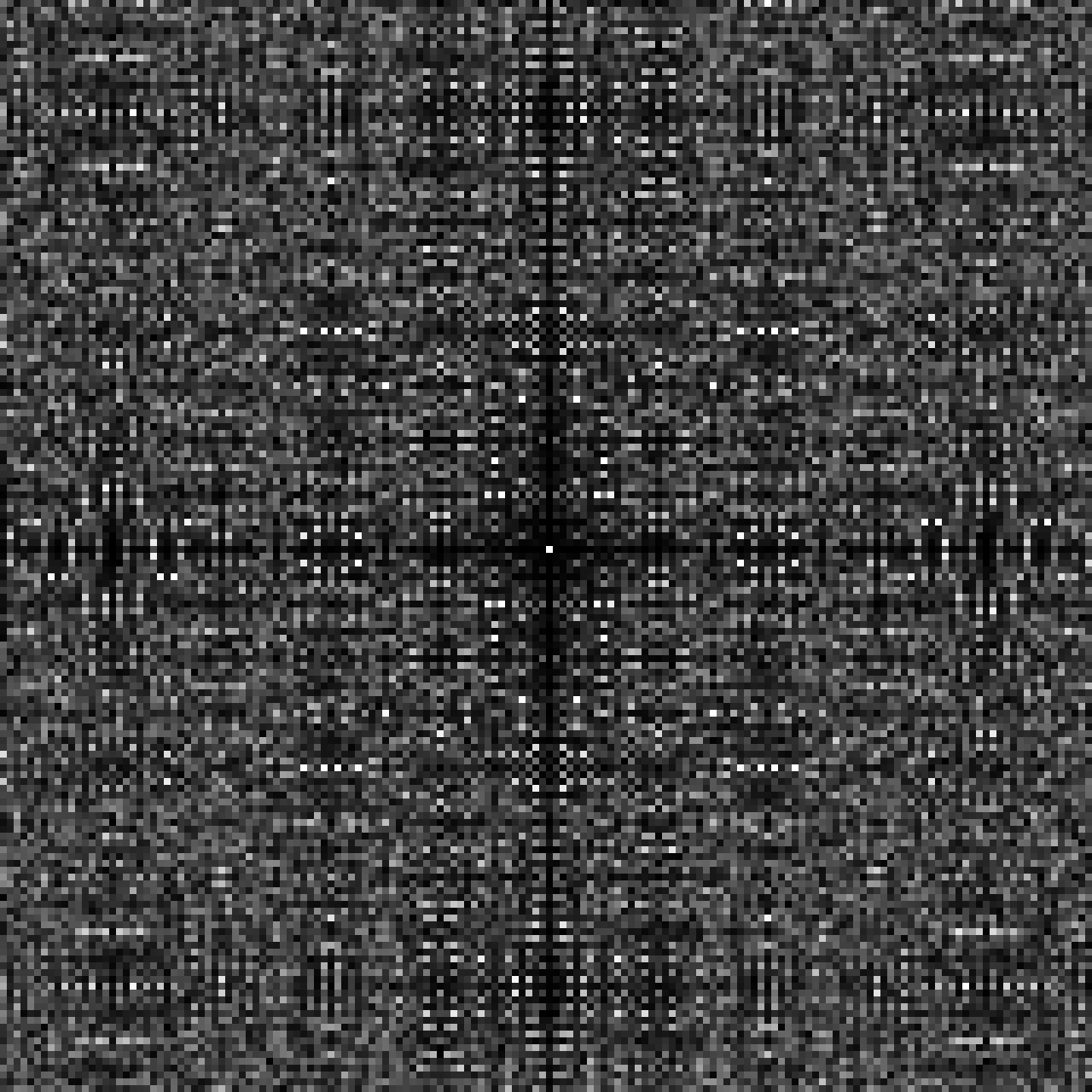}\\%
        \includegraphics[width=1\unit]{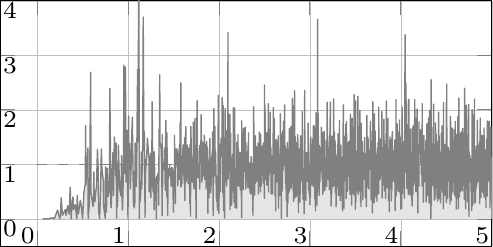}&%
        \includegraphics[width=1\unit]{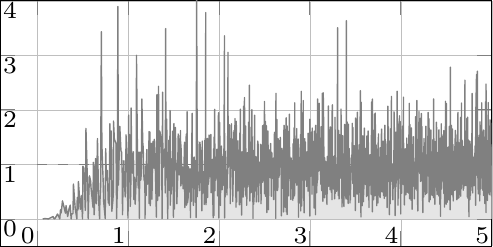}&%
        \includegraphics[width=1\unit]{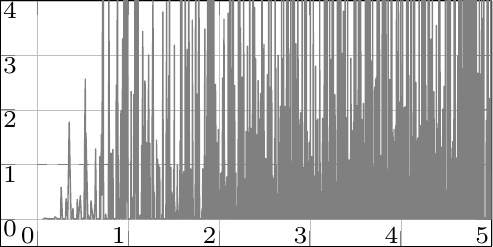}&%
        \includegraphics[width=1\unit]{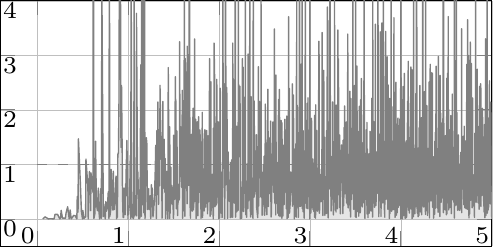}&%
        \includegraphics[width=1\unit]{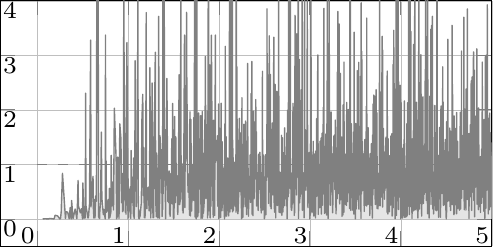}\\[2mm]%
        (a) Owen & (b) Burley & (c) XOR-Scrambled Sobol & (d) Random Digital & (e) $\xi$
    \end{tabular*}}
    \vspace{-2mm}
    \caption{%
        \label{fig:spectral}
        Two-dimensional and radial average spectral profiles of different dyadic sequences, showing (top) the frequency power spectrum of the point process, obtained by averaging 16k periodograms of 256 points, and (bottom) a periodogram of a randomly selected realization. We use radial profiles advocated by~\cite{Ahmed2022Gaussian}, where we average values for every distinct radius.
        The set for $\xi$-sequences includes all distinct sequences (up to XOR-scrambling) at 8-bit resolution.
        A randomly selected digital or $\xi$-sequence has a smoother periodogram than the XOR-Scrambled Sobol sequence but not Burley’s implementation of Owen-scrambling. However, the average of sufficiently many periodograms for $\xi$-sequence is smoother than both, and visually as smooth as Owen's scrambling.
    }
\end{figure*}


\section{Discussion and Conclusion}


In this paper, we presented new insights on digital dyadic sequences. \remove{}Our work provides a  comprehensive understanding of the design space and how to navigate it. 

We believe our work will lay the foundation for future work in systematically exploring the design space of higher-dimensional sequences and their impact on rendering. The importance of 2D sequences is that they often serve as building blocks for synthesizing high-dimensional sequences. Throughout the paper, we already discussed multiple implications of our work on rendering applications, and we are hopeful that the proposed $\xi$-sequences and Gray sequences will find their way into existing rendering frameworks.

\begin{acks}
  We are grateful to F.~Pillichshammer for bringing earlier proofs of Theorems~\ref{th-dyadic-pairs} and~\ref{th-progressive-pairs} (which we conceived independently) to our attention.
  Thanks to Mohanad Ahmed for his insightful discussions.
\end{acks}

\bibliographystyle{ACM-Reference-Format}
\bibliography{ld.bib}

\appendix

\section{Proof of Theorem~\ref{th-dyadic-pairs} (on the dyadic pairs)}
\label{app:proof of dyadic pairs}

We present \redit{the known proof} in detail to provide insight into the more complicated argument in the next appendix. We use two natural properties of dyadic pairs stated as lemmas below. \newremove{}

\begin{lemma} \label{l-5}
If a pair of matrices $(C_x,C_y)$ is dyadic and $M$ is any invertible matrix, then $(C_xM,C_yM)$ is also dyadic.
\end{lemma}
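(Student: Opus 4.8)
The plan is to unwind the definition of a dyadic pair directly. Recall that $(C_x,C_y)$ is dyadic precisely when, for every $r\le m$, the hybrid matrix $H_r$ formed from the first $m-r$ rows of $C_x$ and the first $r$ rows of $C_y$ is invertible. So to prove that $(C_xM,C_yM)$ is dyadic, I would fix an arbitrary $r\le m$ and examine the corresponding hybrid matrix $H_r'$ built from $C_xM$ and $C_yM$.

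The key observation is that right-multiplication by $M$ acts row by row: the $i$-th row of $C_xM$ is (the $i$-th row of $C_x$) times $M$, and likewise for $C_yM$. Therefore the hybrid matrix $H_r'$ is exactly $H_r M$, where $H_r$ is the hybrid matrix for the original pair $(C_x,C_y)$. This is the one small computation to carry out, and it is just the block/row form of the associativity of matrix multiplication — stacking rows and then multiplying on the right is the same as multiplying each row on the right and then stacking.

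Once that identity $H_r' = H_r M$ is in hand, the conclusion is immediate: $H_r$ is invertible because $(C_x,C_y)$ is dyadic, $M$ is invertible by hypothesis, hence the product $H_r M$ is invertible. Since $r\le m$ was arbitrary, every hybrid matrix of $(C_xM,C_yM)$ is invertible, so $(C_xM,C_yM)$ is dyadic.

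There is really no main obstacle here; the lemma is a routine consequence of the definition, and its only purpose (as the surrounding text says) is to be reused in the harder arguments for progressive pairs and for Theorem~\ref{th-dyadic-pairs}. The one point worth stating carefully is the row-wise compatibility $H_r' = H_r M$, since that is where the structure of the hybrid matrix interacts with the multiplication; everything else is ``invertible times invertible is invertible.''
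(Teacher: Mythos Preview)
Your proof is correct and essentially matches the paper's: both rest on the observation that the hybrid matrix of $(C_xM,C_yM)$ is obtained from that of $(C_x,C_y)$ by right-multiplication by $M$, hence stays invertible. The paper phrases this via elementary column operations rather than writing $H_r'=H_rM$ directly (a choice made to parallel the later argument for progressive pairs), but the content is the same.
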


\begin{proof} Recall that the right multiplication by an invertible binary matrix is equivalent to a sequence of \emph{elementary transformations}, each being a swap of two columns or adding a column to another one. The transformation $(C_x,C_y)\mapsto(C_xM,C_yM)$ is equivalent to applying those simultaneously to the columns of $C_x$ and $C_y$. This leads to elementary transformations of any hybrid matrix~\eqref{eq:hybrid matrix} formed by the first $m-r$ rows of $C_x$ and the first $r$ rows of $C_y$. Since elementary transformations preserve the determinant, the hybrid matrix remains invertible, hence the pair remains dyadic. 
\end{proof}

Applying this lemma for $M=C_x^{-1}$, we can make the first matrix in the pair the identity matrix. In the latter case, we have the following 
characterization of dyadic pairs 
\redit{\cite[Lemma~3.1]{kajiura2018}}. 

\begin{lemma} \label{l-dyadic-minors}
A pair of matrices $( {I},C_y)$ is dyadic if and only if all leading principal minors of $C_yJ$ are nonzero.
\end{lemma}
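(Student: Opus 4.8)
The plan is to reduce the dyadic condition for $(I,C_y)$ to a direct determinant computation, exploiting the block structure that the choice $C_x=I$ forces on the hybrid matrices.

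First I would note that when $C_x=I$, its first $m-r$ rows are the standard basis vectors $e_1,\dots,e_{m-r}$, so hybrid matrix~\eqref{eq:hybrid matrix} takes the block form
\[
H_r=\begin{pmatrix} I_{m-r} & 0 \\ B_1 & B_2\end{pmatrix},
\]
where $B_1$ and $B_2$ are the submatrices of $C_y$ formed by rows $1,\dots,r$ together with columns $1,\dots,m-r$ and columns $m-r+1,\dots,m$ respectively. Laplace expansion along the first $m-r$ rows (equivalently, the block-triangular determinant formula) gives $\det H_r=\det B_2$. Hence $(I,C_y)$ is dyadic if and only if the $r\times r$ matrix $B_2$ — consisting of the first $r$ rows and the last $r$ columns of $C_y$ — is invertible for every $r=0,1,\dots,m$.

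Second I would identify $B_2$ with a leading principal submatrix of $C_yJ$. Right multiplication by the anti-diagonal matrix $J$ reverses the order of columns, so columns $m-r+1,\dots,m$ of $C_y$ become, in reversed order, columns $1,\dots,r$ of $C_yJ$; therefore $B_2$ coincides, up to a permutation of its columns, with the $r\times r$ leading principal submatrix of $C_yJ$. A column permutation changes a determinant only by a sign, and signs are trivial over $GF(2)$, so $B_2$ is invertible if and only if the $r$-th leading principal minor of $C_yJ$ is nonzero. Combining the two steps, $(I,C_y)$ is dyadic if and only if all leading principal minors of $C_yJ$ are nonzero, which is the claim. The extreme cases are immediate: for $r=0$ we have $H_0=I$, and for $r=m$ we have $\det H_m=\det C_y=\det(C_yJ)$ since $\det J=1$.

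I do not expect a genuine obstacle here: the only thing that needs care is the index bookkeeping — keeping straight which rows and columns of $C_y$ enter $H_r$ for each $r$, and checking that the column-reversal correspondence with $C_yJ$ holds uniformly in $r$ — together with the routine remark that over the field $GF(2)$ a leading principal minor being nonzero is the same as the corresponding leading principal submatrix being invertible.
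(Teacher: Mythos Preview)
Your proposal is correct and follows essentially the same approach as the paper: compute the hybrid determinant via the block-triangular structure (the paper phrases this as Laplace expansion along the first rows) and identify the surviving block with a leading principal minor of $C_yJ$. Your treatment is in fact slightly more careful, since you make explicit that the columns appear in reversed order and that the resulting sign is immaterial over $GF(2)$, a point the paper leaves implicit.
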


\begin{proof} Let $C_y=(b_{ij})$. The determinants of all the possible hybrid matrices (formed by $r$ rows of $ {I}$ and $m-r$ rows of $C_y$)
\begin{multline*}
\left|
\begin{matrix}
  1         & \dots& 0        & 0          & \dots & 0          \\
  \vdots    &\ddots& \vdots   & \vdots     &\ddots & \vdots     \\
  0         & \dots& 1        & 0          &  \dots& 0          \\
  b_{11}    & \dots& b_{1r}   & b_{1,r+1}  & \dots & b_{1,m}    \\
  \vdots    &\ddots&\vdots    & \vdots     &\ddots & \vdots     \\
  b_{m-r,1} & \dots& b_{m-r,r}& b_{m-r,r+1}& \dots & b_{m-r,m}
\end{matrix}
\right|
=\\=
\left|
\begin{matrix}
   b_{1,r+1}  & \dots & b_{1,m}  \\
   \vdots     &\ddots & \vdots   \\
   b_{m-r,r+1}& \dots & b_{m-r,m}\\
\end{matrix}
\right|
\end{multline*}
are exactly the leading principal minors of $C_yJ$, for $r\ne m$. Here we applied the Laplace expansion along the first $r$ rows and used that $C_y {J}$ is obtained from $C_y$ by reversing the order of the columns.
\end{proof}


\begin{proof}[Proof of Theorem~\ref{th-dyadic-pairs}] By Lemma~\ref{l-5}, $(C_x,C_y)$ is dyadic if and only if  $( {I},C_yC_x^{-1})$ is dyadic. By Lemma~\ref{l-dyadic-minors}, the latter is dyadic if and only if $C_yC_x^{-1} {J}$ is progressive. This means that $C_yC_x^{-1} {J}=LU$ for some upper unitriangular matrix $U$ and lower unitriangular matrix $L$. The latter is equivalent to $C_y=LU {J}C_x$ with $C_x$ invertible.
\end{proof}



\begin{algorithm}[tb]
    \caption{
        \bluevar{LU-decomposition \cite{Cormen-etal}}
    }
    \label{alg:lu}
    \KwIn{
         \bluevar{an $m\times m$ progressive matrix $A=(a_{ij})$;}\\
    }
    \KwOut{\bluevar{lower and upper unitriangular matrices $L=(l_{ij})$ and $U=(u_{ij})$ such that $A=LU$;}
    }
    \bluevar{for $k\leftarrow1$ to $m$\\
    \qquad do for $i\leftarrow k$ to $m$\\
    \qquad\qquad\quad do   $l_{ik}\leftarrow a_{ik}$\\
    \qquad\qquad\qquad\hspace{0.5pt} $u_{ki}\leftarrow a_{ki}$ \\
    \qquad\quad\hspace{0.5pt} for $i\leftarrow k+1$ to $m$\\
    \qquad\qquad\quad do for $j\leftarrow k+1$ to $m$\\
    \qquad\qquad\qquad\quad do $a_{ij}\leftarrow a_{ij}-l_{ik}u_{kj}$\\
    Return $L$ and $U$.}
\end{algorithm}

\section{Proof of Theorem~\ref{th-progressive-pairs} (on the progressive pairs)}
\label{app:proof of progressive pairs}

\redr{We give a new short elementary proof of Theorem~\ref{th-progressive-pairs}. In contrast to the one by \cite{HoferSuzuki}, it does not involve 3D nets and works entirely in 2D. It is based on} 
properties of progressive pairs stated as lemmas below. 
\newremove{} Again, we start with the invariance under certain transformations, discovered by \cite{Faure02Another}.

\begin{lemma} \label{p-1}
If a pair of matrices $(C_x,C_y)$ is progressive,
$U$ is an upper unitriangular matrix, and $L_x,L_y$ are lower unitriangular matrices, then $(L_xC_xU,L_yC_yU)$ is also progressive.
\end{lemma}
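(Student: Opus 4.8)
The plan is to reduce the statement about the progressivity of $(L_xC_xU, L_yC_yU)$ to the progressivity of $(C_x,C_y)$ by tracking what happens to the hybrid matrices $H_{k,r}$ from Definition~\ref{thm:progressive pair} under each of the three types of multiplication separately. Recall that a pair is progressive exactly when, for every $r\le k\le m$, the matrix built from the first $k-r$ rows of the first matrix and the first $r$ rows of the second matrix, all truncated to the first $k$ columns, is invertible. So I would isolate three claims: (i) right multiplication by an upper unitriangular $U$ preserves progressivity; (ii) left multiplication of $C_x$ by a lower unitriangular $L_x$ preserves it; (iii) left multiplication of $C_y$ by a lower unitriangular $L_y$ preserves it. Composing the three gives the lemma.

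For (i): right multiplication by $U$ acts on all columns simultaneously in $C_x$ and $C_y$, hence on the columns of every hybrid matrix. The key point is that because $U$ is \emph{upper} unitriangular, its leading $k\times k$ principal submatrix is itself upper unitriangular (in particular invertible), and multiplying a $(k\times k)$ hybrid matrix $H_{k,r}(C_x,C_y)$ on the right by this leading block yields exactly the hybrid matrix $H_{k,r}(C_xU,C_yU)$ — the entries in columns beyond $k$ never feed back into the first $k$ columns, precisely because $U$ is upper triangular. Since the leading block of $U$ has determinant $1$, invertibility is preserved. For (ii) and (iii): left multiplication by a lower unitriangular matrix amounts to a sequence of elementary row operations, each of which either swaps two rows or adds an earlier row to a later one (``earlier'' because the matrix is \emph{lower} triangular). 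Acting on $C_x$ alone, such an operation either permutes or combines rows \emph{within the first $k-r$ rows} of $H_{k,r}$ only — it never mixes a $C_x$-row into a $C_y$-row, and it never pulls in a row of index $>k-r$ when producing the first $k-r$ rows, again by lower-triangularity. Hence it is an elementary transformation of $H_{k,r}$ itself, and determinants are unchanged; likewise for $L_y$ acting on the $C_y$-block. Therefore all hybrid minors of $(L_xC_xU, L_yC_yU)$ equal the corresponding minors of $(C_x,C_y)$, which are all $1$.

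I expect the main obstacle to be the bookkeeping in (ii)–(iii): one must argue carefully that a lower-unitriangular row operation on the \emph{full} $m\times m$ matrix $C_x$ descends to a \emph{legal} row operation on the truncated $(k-r)\times k$ block appearing inside $H_{k,r}$, i.e. that no row with index exceeding $k-r$ is ever needed. This is exactly where lower-triangularity is used, and it is worth stating cleanly as a sub-lemma (``a lower unitriangular matrix, restricted to any leading set of rows, can be written as a product of elementary operations involving only those rows''). Once that is in place, the three reductions are routine and the composition $C_x\mapsto L_xC_xU$, $C_y\mapsto L_yC_yU$ follows by applying them in sequence.
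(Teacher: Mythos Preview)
Your proposal is correct and follows essentially the same route as the paper: reduce to elementary column operations for the right multiplication by $U$ and elementary row operations for the left multiplications by $L_x,L_y$, and observe that triangularity forces each such operation to stay inside the relevant truncated block of the hybrid matrix $H_{k,r}$, so all hybrid determinants are preserved. One small slip: lower \emph{uni}triangular matrices factor as products of ``add row $i$ to row $j$'' operations with $i<j$ only---no row swaps arise---so you can drop the swap case from (ii)--(iii).
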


\begin{proof} We use the same idea as in the proof of Lemma~\ref{l-5}.
The right multiplication by an upper unitriangular matrix is equivalent to a sequence of elementary transformations, each being adding a column to another one located to the right from it. The transformation $(C_x,C_y)\mapsto(C_xU,C_yU)$ is equivalent to applying those simultaneously to the columns of $C_x$ and $C_y$. Since a column is always added to a one to the right, this leads to elementary transformations of any hybrid matrix formed by the top-left corner $r\times k$ submatrix of $C_x$ and the top-left corner $(k-r)\times k$ submatrix of $C_y$. Since elementary transformations preserve the determinant, the hybrid matrix remains invertible, hence the pair remains progressive. 

Similarly, the left multiplication by a lower unitriangular matrix is equivalent to a sequence of elementary transformations, each being adding a row to another one located below it. The transformation $(C_x,C_y)\mapsto(L_xC_x,L_yC_y)$ now means \emph{different} elementary transformations of $C_x$ and $C_y$. But since a row is always added to a one below, this still leads to elementary transformations of any hybrid matrix.
Hence the pair remains progressive. 
\end{proof}

Applying this lemma, we can make the first matrix in the pair the identity matrix and the second one an upper unitriangular matrix. In the latter case, we have the following characterization of progressive pairs in terms of minors. A \emph{shifted leading minor} 
is a minor formed by the first $r$ rows and some $r$ consecutive columns for some $r$.

\begin{lemma} \label{l-4}
A pair of matrices $( {I},C_y)$ is progressive
if and only if all shifted leading minors of $C_y$ are nonzero.
\end{lemma}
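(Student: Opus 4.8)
The plan is to mirror the proof of Lemma~\ref{l-dyadic-minors} almost verbatim, replacing the single column reversal by $J$ with a sliding window of columns. By Definition~\ref{thm:progressive pair} (together with Theorem~\ref{th-digital dyadic sequence}), the pair $(I,C_y)$ is progressive precisely when every hybrid matrix $H_{k,r}$ from~\eqref{eq:hybrid matrix 2}, built from the first $k-r$ rows of $C_x=I$ and the first $r$ rows of $C_y$ truncated to the first $k$ columns, is invertible for all $0\le r\le k\le m$. So the entire content of the lemma is to evaluate $\det H_{k,r}$ in the case $C_x=I$.

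First I would write $H_{k,r}$ in block form. Writing $C_y=(b_{ij})$, its top $k-r$ rows are the first $k-r$ rows of the $k\times k$ identity, i.e. $\bigl(\begin{smallmatrix} I_{k-r} & 0\end{smallmatrix}\bigr)$, and its bottom $r$ rows are $(b_{i,1},\dots,b_{i,k})$ for $i=1,\dots,r$. Applying the Laplace expansion along the first $k-r$ rows — where only the ``identity'' columns contribute — collapses the determinant to the $r\times r$ minor of $C_y$ formed by rows $1,\dots,r$ and columns $k-r+1,\dots,k$; that is, the first $r$ rows and a block of $r$ consecutive columns. This is exactly a shifted leading minor of $C_y$.

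Next I would check that the correspondence $(k,r)\mapsto(\text{window }[k-r+1,k])$ is onto: as $(k,r)$ ranges over $0\le r\le k\le m$, the window runs through every block of $r$ consecutive columns, for every $r$. The extreme cases behave as expected: $r=0$ gives the empty minor (value $1$, so the condition is vacuous — just as in the $r\neq m$ caveat of Lemma~\ref{l-dyadic-minors}), and $r=k$ gives the ordinary $k$-th leading principal minor of $C_y$. Therefore $(I,C_y)$ is progressive if and only if all shifted leading minors of $C_y$ are nonzero, i.e. all equal $1$ over $GF(2)$, proving Lemma~\ref{l-4}.

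I do not expect any genuine obstacle here; the only thing that needs care is the bookkeeping identifying which $(k,r)$ pair produces which window of columns, and explicitly noting the degenerate $r=0$ (and, symmetrically, the familiar $r=k$) case so that the stated equivalence is literally correct rather than merely morally so. The subsequent use of this lemma — combined with Lemma~\ref{p-1} and the LU/Pascal facts recalled in Section~\ref{ssec:definitions} — will then give Theorem~\ref{th-progressive-pairs}, but that is a separate step beyond the statement at hand.
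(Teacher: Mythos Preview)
Your proposal is correct and follows essentially the same approach as the paper: compute $\det H_{k,r}$ by Laplace expansion along the identity rows to collapse it to the $r\times r$ minor of $C_y$ on rows $1,\dots,r$ and columns $k-r+1,\dots,k$, then observe that these are exactly the shifted leading minors. Your extra bookkeeping (surjectivity of $(k,r)\mapsto[k-r+1,k]$ and the vacuous $r=0$ case) is a welcome bit of care the paper leaves implicit; the reference to Theorem~\ref{th-digital dyadic sequence} is unnecessary since only Definition~\ref{thm:progressive pair} is used.
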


\begin{proof} This is similar to Lemma~\ref{l-dyadic-minors}, but now the numbers $k$ and $r$ of rows taken from the matrices $I$ and $C_y$ need not sum up to $m$. Let $C_y=(b_{ij})$. The determinants of all the possible hybrid matrices
\begin{multline*}
\left|
\begin{matrix}
  1      & \dots& 0     & 0        & \dots & 0        \\
  \vdots &\ddots& \vdots& \vdots   &\ddots & \vdots   \\
  0      & \dots& 1     & 0        &  \dots& 0        \\
  b_{11} & \dots& b_{1k}& b_{1,k+1}& \dots & b_{1,k+r}\\
  \vdots &\ddots&\vdots & \vdots   &\ddots & \vdots   \\
  b_{r1} & \dots& b_{rk}& b_{r,k+1}& \dots & b_{r,k+r}
\end{matrix}
\right|
=
\left|
\begin{matrix}
   b_{1,k+1}& \dots & b_{1,k+r}\\
   \vdots   &\ddots & \vdots   \\
   b_{r,k+1}& \dots & b_{r,k+r}\\
\end{matrix}
\right|
\end{multline*}
are exactly the shifted leading minors of $C_y$.
\end{proof}

We now show that the condition that all shifted leading minors are non-zero is very restrictive \redit{\cite[Proposition~4]{hofer2010}.} This \redit{allows us to simplify the original proof of Theorem~\ref{th-progressive-pairs}}.

\begin{lemma} \label{p-2} 
There is a unique upper unitriangular matrix $U$ such that the pair $( {I},U)$ is progressive.
\end{lemma}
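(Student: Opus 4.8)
The plan is to reduce the statement to a question about determinants using Lemma~\ref{l-4}, and then pin down $U$ one row at a time. By Lemma~\ref{l-4}, a pair $(I,U)$ with $U$ upper unitriangular is progressive precisely when every shifted leading minor of $U$ equals $1$ (over $GF(2)$ the only nonzero value). Write $M_r(j)$ for the determinant of the submatrix of $U$ on rows $1,\dots,r$ and columns $j,j+1,\dots,j+r-1$. For $j=1$ this is the top-left $r\times r$ minor of an upper unitriangular matrix, hence automatically $1$, so the real content is the conditions $M_r(j)=1$ for $2\le j\le m-r+1$ --- and there are exactly $m(m-1)/2$ of these, matching the number of above-diagonal entries of $U$.

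For \emph{uniqueness}, I would induct on $r$, showing that rows $1,\dots,r$ of $U$ are forced once rows $1,\dots,r-1$ are. Laplace-expanding $M_r(j)$ along its last row, the cofactor of the last entry $u_{r,\,j+r-1}$ is exactly $M_{r-1}(j)$, which equals $1$ by the inductive hypothesis, while every other term involves only entries $u_{r,k}$ with $k<j+r-1$ together with entries from the already-fixed rows $1,\dots,r-1$. Hence the equation $M_r(j)=1$ determines $u_{r,\,j+r-1}$ uniquely in terms of $u_{r,j},\dots,u_{r,\,j+r-2}$. Sweeping $j=2,3,\dots,m-r+1$ in this order pins down $u_{r,r+1},u_{r,r+2},\dots,u_{r,m}$ in turn, since at each stage the entries $u_{r,k}$ with $k<r$ are $0$, $u_{r,r}=1$, and the other relevant ones were fixed at earlier stages. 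This determines row $r$, and the induction yields uniqueness.

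For \emph{existence}, I would show that the binary Pascal matrix $P$ has all shifted leading minors equal to $1$. It is upper unitriangular because $P_{ij}=\binom{j-1}{i-1}$ vanishes for $i>j$ and is $1$ for $i=j$. To evaluate $M_r(j)$ for $P$, I would apply the determinant-preserving column operations that replace, for $c=j+r-1,j+r-2,\dots,j+1$ in decreasing order, column $c$ by column $c$ plus column $c-1$; by Pascal's recurrence $P_{i,c}\equiv P_{i,c-1}+P_{i-1,c-1}\pmod2$ the first row of the transformed matrix becomes $(1,0,\dots,0)$ and the entries below turn into a copy of $P$ on rows $1,\dots,r-1$ and columns $j,\dots,j+r-2$, so expanding along the first row gives $M_r(j)=M_{r-1}(j)$; induction on $r$ (base $M_1(j)=P_{1,j}=1$) finishes it. This proves $(I,P)$ is progressive, so the matrix whose existence and uniqueness are asserted is in fact $U=P$.

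The step I expect to be the main obstacle is the bookkeeping inside the uniqueness induction: one must verify that in the Laplace expansion of $M_r(j)$ the coefficient of the single ``new'' unknown $u_{r,\,j+r-1}$ is invertible --- this is exactly where $M_{r-1}(j)=1$ from the outer induction is invoked --- and that all other contributions depend only on quantities already determined, so that the sweep over $j$ genuinely resolves the above-diagonal entries of row $r$ one by one. The existence half, by contrast, is a short self-contained column-operation computation with the Pascal recurrence.
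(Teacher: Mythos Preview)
Your proof is correct and follows essentially the same strategy as the paper: reduce to the shifted-leading-minor criterion via Lemma~\ref{l-4}, then determine the unknown entries one at a time by a Laplace expansion in which the cofactor of the new unknown is a previously-handled minor equal to~$1$. The only cosmetic differences are that the paper inducts on the matrix size (equivalently, fills in the last column via expansion along that column) whereas you induct on the row index and expand along the last row, and that for existence the paper defers to Corollary~\ref{c-3} citing \cite{GESSEL1985300}, whereas your column-operation argument with Pascal's recurrence $M_r(j)=M_{r-1}(j)$ is self-contained.
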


\begin{proof}[Proof of Lemma~\ref{p-2}]
The proof is by induction on the matrix size $m$. The base $m=1$ is obvious.

To prove the inductive step, use the characterization of progressive pairs  $( {I},U)$ from Lemma~\ref{l-4}.
Assume there is a unique upper unitriangular $(m-1)\times (m-1)$ matrix $U=(u_{ij})$ with all shifted leading minors nonzero (this is the inductive hypothesis). It suffices to prove that there is a unique way to add the entries $u_{k,m}$ for $k=1,\dots,m$ so that the resulting new shifted leading minors are also nonzero (the other added entries $u_{m,1}=\dots=u_{m,m-1}=0$). 

We do it by induction on $k$. The base $k=1$ holds because $u_{1,m}$ must be $1$ as a shifted leading minor of size $1\times 1$. Assume that $u_{1,m},\dots,u_{k,m}$ have already been determined. Consider the shifted leading minor $\Delta$ formed by the rows $1,\dots,k+1$ and columns $m-k,\dots,m$. Take the Laplace expansion along the last column:
$$
\Delta=
\Delta_1 u_{1,m}+\dots+\Delta_{k+1} u_{k+1,m},
$$
where the minor $\Delta_j$ is obtained from $\Delta$ by removing the $j$-th row and the last column.
Here
$
\Delta=\Delta_{k+1}=1
$
as shifted leading minors. Thus $$u_{k+1,m}=\Delta_1 u_{1,m}+\dots+\Delta_k u_{k,m}+1.$$ 
We have expressed $u_{k+1,m}$ in terms of the entries $u_{1,m},\dots,u_{k,m}$ and the entries of the top-left $(m-1)\times(m-1)$ submatrix. By the inductive hypothesis, all those entries are uniquely determined. Then $u_{k+1,m}$ is also 
determined, and the lemma follows by induction.
\end{proof}

This argument does not construct the unique matrix~$U$ explicitly. But we actually already know the answer.

\begin{corollary} \label{c-3} The Pascal matrix $ {P}$ is the unique upper unitriangular matrix such that the pair $( {I}, {P})$ is progressive.
\end{corollary}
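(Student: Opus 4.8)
The plan is to leverage Lemma~\ref{p-2}, which already guarantees that there is a \emph{unique} upper unitriangular matrix $U$ with $(I,U)$ progressive, together with Lemma~\ref{l-4}, which says that $(I,U)$ is progressive precisely when every shifted leading minor of $U$ is nonzero. Since $P$ is upper unitriangular, it therefore suffices to verify that every shifted leading minor of $P$ equals $1$ in $GF(2)$; the uniqueness part of the corollary is then immediate from Lemma~\ref{p-2}.

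So the real content is the following claim: for all $r\ge 1$ and $c\ge 0$, the $r\times r$ submatrix $P[1,\dots,r;\,c+1,\dots,c+r]$, whose $(i,k)$ entry is $\binom{c+k-1}{i-1}\bmod 2$, has determinant $1$. I would prove this by induction on $r$, in fact establishing the stronger integer statement that this determinant equals $+1$ over $\mathbb{Z}$ (which immediately reduces mod $2$). The base case $r=1$ is just $\binom{c}{0}=1$.

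For the inductive step, I would apply determinant-preserving column operations: replace column $k$ by (column $k$) $-$ (column $k-1$) for $k=r,r-1,\dots,2$, processed from right to left so that each column is altered using the still-unmodified column to its immediate left. By Pascal's rule $\binom{c+k-1}{i-1}-\binom{c+k-2}{i-1}=\binom{c+k-2}{i-2}$, the new $(i,k)$ entry for $k\ge 2$ becomes $\binom{c+k-2}{i-2}$, which vanishes when $i=1$ (with the convention $\binom{n}{-1}=0$), while the first column is unchanged with top entry $\binom{c}{0}=1$. Hence the first row of the transformed matrix is $(1,0,\dots,0)$, and Laplace expansion along it reduces the determinant to that of the bottom-right $(r-1)\times(r-1)$ block, whose $(i,k)$ entry for $2\le i,k\le r$ is $\binom{c+k-2}{i-2}$. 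Reindexing $i\mapsto i-1$ and $k\mapsto k-1$ identifies this block with $P[1,\dots,r-1;\,c+1,\dots,c+r-1]$, which has determinant $1$ by the inductive hypothesis. This closes the induction: every shifted leading minor of $P$ is $1$, hence $(I,P)$ is progressive, and by Lemma~\ref{p-2} the matrix $P$ is the unique upper unitriangular matrix with this property.

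I do not expect a serious obstacle; the only delicate point is the index bookkeeping — specifically the right-to-left order of the column operations, and checking that the reduced block is exactly a shifted leading minor of the next-smaller Pascal matrix with the \emph{same} offset $c$. An alternative would be to bypass the induction entirely by invoking the classical evaluation $\det\big(\binom{a_k}{i-1}\big)_{i,k=1}^{r}=\prod_{1\le k<l\le r}(a_l-a_k)\big/\prod_{j=0}^{r-1}j!$ with $a_k=c+k-1$, whose right-hand side telescopes to $1$; but the column-operation argument is fully elementary and stays entirely within $GF(2)$, matching the style of this appendix.
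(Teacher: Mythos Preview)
Your proof is correct and follows essentially the same approach as the paper: both invoke Lemma~\ref{p-2} for uniqueness and reduce existence via Lemma~\ref{l-4} to showing that every shifted leading minor of the integer Pascal matrix equals~$1$. The only difference is that the paper cites an external result (Gessel's lemma) for this minor evaluation, whereas you supply a self-contained inductive column-operation argument; your argument is sound and the index bookkeeping is correct.
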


\begin{proof}
The uniqueness has just been proved in Lemma~\ref{p-2}. The pair $( {I}, {P})$ is progressive by Theorem~\ref{th-digital dyadic sequence} and Table~\ref{tab:methods_overview_part2} but let us give a direct proof here. Consider the integer matrix $p_{ij}=\binom{j-1}{i-1}$. Applying \cite[Lemma~9]{GESSEL1985300} repeatedly, we conclude that each shifted leading minor of the matrix $(p_{ij})$ equals $1$. 
Hence by Lemma~\ref{l-4} the pair $( {I}, {P})$ is progressive. 
\end{proof}


\begin{proof}[Proof of Theorem~\ref{th-progressive-pairs}] 
Assume that $(C_x,C_y)$ is a progressive pair. Then 
both $C_x$ and $C_y$ are progressive matrices. Hence $C_x=L_xU_x$ and $C_y=L_yU_y$ for some upper unitriangular matrices $U_x,U_y$ and lower unitriangular matrices $L_x,L_y$.
By Lemma~\ref{p-1}, the pair 
$(L_x^{-1}C_xU_{x}^{-1},L_y^{-1}C_yU_{x}^{-1})=( {I},U_{y}U_{x}^{-1})$
is progressive. By Corollary~\ref{c-3}, $U_{y}U_{x}^{-1}= {P}$. We get  $C_x=L_xU_x$ and $C_y=L_y {P}U_x$, as required. 

Conversely,  each pair $(L_xU_x,L_y {P}U_x)$ is progressive by Corollary~\ref{c-3} and Lemma~\ref{p-1}.
\end{proof}


To count the number of possible characteristic matrices $ {C}=L_y {P}L_x^{-1}$, we need the following results.

\begin{lemma}\label{l-identities} 
We have $ {P}^2=( {P} {J})^3= {I}$ and
$ {P} {J} {P}= {J} {P} {J}$. 
\end{lemma}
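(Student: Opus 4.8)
The plan is to prove the three relations in the order $J^2=I$, then $P^2=I$, then $PJP=JPJ$, and finally to obtain $(PJ)^3=I$ as a purely formal consequence. The relation $J^2=I$ needs no work: $J$ is the permutation matrix of the involution $i\mapsto m+1-i$, so $J^2=I$ and $J^{-1}=J$.

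For $P^2=I$ I would identify $P$ with a substitution operator. Over $GF(2)$, let $V$ be the space of polynomials of degree $<m$ with basis $1,t,\dots,t^{m-1}$, and let $T\colon V\to V$ be given by $Tf(t)=f(t+1)$. Since $T(t^{j-1})=(t+1)^{j-1}=\sum_{i=1}^{j}\binom{j-1}{i-1}t^{i-1}$, the matrix of $T$ in this basis (with columns recording images of basis vectors) is precisely $P$. Then $T^2f(t)=f(t+2)=f(t)$ because $1+1=0$ in $GF(2)$, so $T^2$ is the identity operator and hence $P^2=I$. (Alternatively, $P^2=I$ follows from the Vandermonde-type identity $\sum_{j}\binom{k-1}{j-1}\binom{j-1}{i-1}=\binom{k-1}{i-1}2^{\,k-i}$, which is $\equiv\delta_{ik}\pmod 2$.)

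For $PJP=JPJ$ I would bring in the operator $R\colon V\to V$, $Rf(t)=t^{m-1}f(1/t)$ (the ``reciprocal polynomial''); since $R(t^{i-1})=t^{m-i}$, its matrix in the monomial basis is exactly $J$. Then I would compute the two composite operators directly. On one hand $RTR$ sends $f(t)$ successively to $t^{m-1}f(1/t)$, then to $(t+1)^{m-1}f\bigl(\tfrac1{t+1}\bigr)$, then to $(1+t)^{m-1}f\bigl(\tfrac{t}{1+t}\bigr)$. On the other hand $TRT$ sends $f(t)$ to $f(t+1)$, then to $t^{m-1}f\bigl(\tfrac{1+t}{t}\bigr)$, then to $(t+1)^{m-1}f\bigl(\tfrac{t}{t+1}\bigr)$, where the last step uses $1+(t+1)=t$ in $GF(2)$. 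In both cases the result is the operator $f(t)\mapsto(1+t)^{m-1}f\bigl(\tfrac{t}{1+t}\bigr)$, which is a genuine element of $V$ because multiplying by $(1+t)^{m-1}$ clears the denominator. Hence $RTR=TRT$, i.e. $JPJ=PJP$. Finally, $(PJ)^3=(PJP)(JPJ)=(JPJ)(JPJ)=JP\,J^2\,PJ=J\,P^2\,J=J^2=I$.

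The only step requiring genuine care is the identification of the monomial-basis matrices of $T$ and $R$ with $P$ and $J$ (so that operator identities translate back to matrix identities in the right order), together with the bookkeeping of the composed substitutions over $GF(2)$ — in particular the simplification $\tfrac1{t+1}+1=\tfrac{t}{t+1}$ that relies on $1+1=0$, and the check that each term of $(1+t)^{m-1}f\bigl(\tfrac{t}{1+t}\bigr)$ has degree $<m$, so that the composite is well defined on $V$. Everything else is formal once $P^2=I$, $J^2=I$, and $PJP=JPJ$ are in hand.
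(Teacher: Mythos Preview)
Your proof is correct and takes a genuinely different route from the paper. The paper proves each identity by direct combinatorial manipulation: $P^2=I$ via the Vandermonde-type sum $\sum_j\binom{j-1}{i-1}\binom{k-1}{j-1}=2^{k-i}\binom{k-1}{i-1}$, and $PJP=JPJ$ via the alternating-sign identity $\sum_j\binom{m-j}{i-1}\binom{k-1}{j-1}(-1)^{j-1}=\binom{m-k}{m-i}$, both reduced mod~$2$; $(PJ)^3=I$ is then deduced formally. You instead realize $P$ and $J$ as the matrices of the substitution operator $Tf(t)=f(t+1)$ and the reversal operator $Rf(t)=t^{m-1}f(1/t)$ on $GF(2)[t]_{<m}$, so that $P^2=I$ becomes the triviality $f(t+2)=f(t)$ and $PJP=JPJ$ becomes the equality of two M\"obius-type substitutions, checked by a one-line computation in $GF(2)(t)$. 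This is more conceptual: it explains the identities as a fragment of the $\mathrm{PGL}_2$-action on polynomials in characteristic~$2$, and it avoids appealing to external binomial identities. The paper's approach, on the other hand, is self-contained at the level of coefficients and requires no operator framework. One small remark: your care about ``each term of $(1+t)^{m-1}f(t/(1+t))$ having degree $<m$'' is in fact automatic, since $R$ and $T$ each preserve $V$ and hence so does any composite; the rational-function calculation is just a convenient bookkeeping device for what is already a well-defined linear map on $V$.
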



\begin{proof} 
The identity $ {P}^2= {I}$ by \cite{Faure1982} is a compact form of the well-known identity \cite[(5.21) and (5.12)]{GKP}
$$
\sum_{j=1}^m\binom{j-1}{i-1}\binom{k-1}{j-1}=2^{k-i}\binom{k-1}{i-1}
\overset{\mod2}{=} {I}_{ik}.
$$
The identity $ {P} {J} {P}= {J} {P} {J}$ \redr{by \cite{kajiura2018} is 
a compact form of the} 
well-known identity \cite[(5.14) and (5.25)]{GKP} 
$$
\sum_{j=1}^m\binom{m-j}{i-1}\binom{k-1}{j-1}(-1)^{j-1}=\binom{m-k}{m-i}.
$$
The identity $( {P} {J})^3= {I}$ is a consequence of the previous two.
\end{proof}

\begin{lemma}\label{l-unique-decomposition}
For distinct pairs $(L_x,L_y)$ of lower unitriangular matrices, 
the matrices $L_y {P}L_x^{-1}$ are pairwise distinct.
\end{lemma}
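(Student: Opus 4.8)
The plan is to prove injectivity of the map $\phi\colon(L_x,L_y)\mapsto L_yPL_x^{-1}$ by a counting argument, computing the size of its image in a second, independent way. Let $D$ be the set of pairs of $m\times m$ lower unitriangular matrices; since each such matrix is determined by its $m(m-1)/2$ entries strictly below the diagonal, $|D|=2^{m(m-1)/2}\cdot2^{m(m-1)/2}=2^{m(m-1)}$. I will show that $\phi(D)$ also has exactly $2^{m(m-1)}$ elements. Then $\phi\colon D\to\phi(D)$ is a surjection between finite sets of equal cardinality, hence a bijection, hence injective, which is exactly the lemma.

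First I would identify $\phi(D)$ with a set of characteristic matrices. By Theorem~\ref{th-progressive-pairs} every progressive pair has the form $(L_xU,L_yPU)$, with characteristic matrix $L_yPU\,U^{-1}L_x^{-1}=L_yPL_x^{-1}$; taking $U=I$ shows conversely that every $L_yPL_x^{-1}$ is the characteristic matrix of a progressive pair. Thus $\phi(D)$ is exactly the set of characteristic matrices of progressive pairs. A progressive pair is automatically a dyadic pair, because specializing $k=m$ in Definition~\ref{thm:progressive pair} makes $H_{m,r}$ coincide with the hybrid matrix~\eqref{eq:hybrid matrix} from the definition of a dyadic pair; conversely, by Theorem~\ref{th-characteristic} every dyadic pair $(C_x,C_y)$ has an invertible $M$ with $(C_xM,C_yM)$ progressive, and this progressive pair shares the characteristic matrix $C_yM(C_xM)^{-1}=C_yC_x^{-1}$. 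Hence progressive pairs and dyadic pairs realize the same characteristic matrices, and by Theorem~\ref{th-dyadic-pairs} these are precisely the matrices $LUJ$ with $L$ lower unitriangular and $U$ upper unitriangular. So $\phi(D)=\{\,LUJ: L,U\text{ unitriangular}\,\}$.

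Second I would count $\{LUJ\}$. The assignment $(L,U)\mapsto LUJ$ is injective: if $LUJ=L'U'J$ then $LU=L'U'$ after right-multiplying by $J$, and uniqueness of the $LU$-factorization over $GF(2)$ (stated in the excerpt) gives $L=L'$ and $U=U'$. Therefore $|\{LUJ\}|$ equals the number of pairs $(L,U)$ of unitriangular matrices, which is $2^{m(m-1)}$. Combining, $|\phi(D)|=2^{m(m-1)}=|D|$, and the lemma follows.

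Two remarks on the plan. The only thing to verify is non-circularity: Theorems~\ref{th-dyadic-pairs}, \ref{th-progressive-pairs}, \ref{th-characteristic} and the uniqueness of $LU$-factorization are all proved without reference to this lemma, so the argument is sound. A more computational alternative would reduce the lemma, via $L_yPL_x^{-1}=L_y'PL_x'^{-1}\iff AP=PB$ with $A=(L_y')^{-1}L_y$ and $B=L_x'^{-1}L_x$ lower unitriangular (so $B=PAP$ and both $A,B$ are lower unitriangular conjugates under $P$), to the statement that a lower unitriangular matrix whose $P$-conjugate is again lower triangular must be $I$; this is provable by induction on $m$ using the block decomposition of $P$ along its first row and column, but the rank-one cross-terms appearing there make it noticeably messier, so I would present the counting proof above, which I expect to be the cleanest route.
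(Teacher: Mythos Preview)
Your counting argument is correct and non-circular: Theorems~\ref{th-dyadic-pairs}, \ref{th-progressive-pairs}, and \ref{th-characteristic} are all established independently of this lemma, so chaining them to conclude $\phi(D)=\{LUJ\}$ and then invoking uniqueness of $LU$-factorization to count that set is legitimate.

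That said, the paper's proof is considerably more direct and does not route through Theorem~\ref{th-characteristic} at all. It simply right-multiplies $C=L_yPL_x^{-1}$ by $J$ and uses the identity $P=JPJ\cdot PJ$ (equivalently $JPJPJ=P$, from Lemma~\ref{l-identities}) to split
\[
CJ=L_yPL_x^{-1}J=\underbrace{L_y\,JPJ}_{L}\;\cdot\;\underbrace{PJL_x^{-1}J}_{U},
\]
where $L$ is lower unitriangular and $U$ is upper unitriangular. Uniqueness of the $LU$-factorization then recovers $L_y$ and $L_x$ from $C$ directly. In effect, the paper exhibits the explicit inverse of your map $\phi$, whereas you establish injectivity indirectly by matching cardinalities of domain and image. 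Your route has the merit of explaining \emph{why} the counts~\eqref{eq-counting-characteristic-dyadic} and~\eqref{eq-counting-characteristic-progressive} must agree, but it leans on heavier machinery (in particular Theorem~\ref{th-characteristic}) for a statement that admits a two-line algebraic proof. Your discarded ``computational alternative'' with $AP=PB$ is actually closer in spirit to what the paper does; the key trick you were missing is the right-multiplication by $J$, which turns the awkward conjugation condition into a plain $LU$-decomposition.
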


\begin{proof}
The matrix  ${C}=L_y {P}L_x^{-1}$ uniquely determines $L_x$ and $L_y$ because
$$
 {C} {J}=L_y {P}L_x^{-1} {J}
=\underbrace{L_y {J} {P} {J}}_{L}\cdot
 \underbrace{ {P} {J}L_x^{-1} {J}}_{U}
$$
is an $LU$-decomposition of $ {C} {J}$, which is unique. Here we used the identity $ {P}= {J} {P} {J} {P} {J}$ (see Lemma~\ref{l-identities}) and
observed that for any upper unitriangular matrix $U'$ the matrix $ {J}U' {J}$ is lower unitriangular. 
\end{proof}





\begin{corollary}\label{cor-lp}
If $m$ is a power of $2$, then under notation~\eqref{eq-C_LP} and~\eqref{eq-LP} we have $PJL^{-1}_{\bluevar{\mathrm{LP}}}=\bluevar{U_{\bluevar{\mathrm{LP}}}J}$.
\end{corollary}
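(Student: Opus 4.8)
The plan is to reduce the claimed matrix identity $PJL_{\mathrm{LP}}^{-1}=U_{\mathrm{LP}}J$ to a purely combinatorial statement about binomial coefficients modulo $2$, in which the hypothesis that $m$ is a power of $2$ enters in an essential and transparent way.

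First I would rewrite the target identity. Multiplying it on the right by $L_{\mathrm{LP}}$ and on the left by $J$, and using $J^{2}=I$, it becomes $JPJ=(JU_{\mathrm{LP}}J)\,L_{\mathrm{LP}}$. Since $U_{\mathrm{LP}}$ from~\eqref{eq-C_LP} has all entries equal to $1$ on and above the diagonal, the matrix $L_{1}:=JU_{\mathrm{LP}}J$ is the lower-triangular matrix with all entries $1$ on and below the diagonal. So it suffices to prove $JPJ=L_{1}L_{\mathrm{LP}}$, and I would do this by showing that both sides equal $P^{\top}$, the matrix with $(i,k)$ entry $\binom{i-1}{k-1}\bmod 2$ — with the caveat that the left-hand side equals $P^{\top}$ only when $m$ is a power of $2$.

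For the right-hand side: the $(i,k)$ entry of $L_{1}L_{\mathrm{LP}}$ is $\sum_{j=1}^{i}(L_{\mathrm{LP}})_{jk}$. Using~\eqref{eq-LP}, for $k\ge 2$ this is $\sum_{l=0}^{i-2}\binom{l}{k-2}$, which equals $\binom{i-1}{k-1}$ by the hockey-stick identity $\sum_{l=0}^{n}\binom{l}{r}=\binom{n+1}{r+1}$; for $k=1$ only the $j=1$ term survives, giving $1=\binom{i-1}{0}$. Hence $L_{1}L_{\mathrm{LP}}=P^{\top}$ over $GF(2)$, with no assumption on $m$. For the left-hand side: $(JPJ)_{ik}=P_{m+1-i,\,m+1-k}=\binom{m-k}{m-i}\bmod 2$, and I would show $\binom{m-k}{m-i}\equiv\binom{i-1}{k-1}\pmod 2$ exactly because $m=2^{t}$. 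Writing $m-k=(2^{t}-1)-(k-1)$ and $m-i=(2^{t}-1)-(i-1)$, these are the bitwise complements in $t$ bits of $k-1$ and $i-1$ respectively; by Lucas' theorem modulo $2$, $\binom{m-k}{m-i}$ is odd iff every binary digit of $m-i$ is at most the corresponding digit of $m-k$, which after complementing is equivalent to every digit of $k-1$ being at most the corresponding digit of $i-1$, i.e. to $\binom{i-1}{k-1}$ being odd. Combining, $JPJ=L_{1}L_{\mathrm{LP}}$, and unwinding the reduction gives the corollary.

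The main obstacle I anticipate is this last step: identifying the power-of-two hypothesis as precisely the condition under which the two triangular Pascal-type matrices $JPJ$ and $P^{\top}$ coincide modulo $2$, and translating the symmetry $\binom{m-k}{m-i}\equiv\binom{i-1}{k-1}$ into the clean complement/submask form via Lucas' theorem. The reduction in the first step is routine linear algebra, and the hockey-stick computation is bookkeeping, with the only care needed being the degenerate first row and column of $L_{\mathrm{LP}}$. As a sanity check I would verify the final identity by hand for $m=2$ and $m=4$ and confirm that it fails for $m=3$, pinpointing the failure to the non-power-of-two entry $\binom{m-k}{m-i}$.
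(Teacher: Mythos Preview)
Your proof is correct and takes a genuinely different route from the paper's. The paper first shows that $L_{\mathrm{LP}}$ is its own inverse over $GF(2)$ (via $\sum_j\binom{i-2}{j-2}\binom{j-2}{k-2}=2^{i-k}\binom{i-2}{k-2}$), and then computes $PJL_{\mathrm{LP}}$ entrywise using a Vandermonde-type convolution $\sum_j\binom{m-j}{i-1}\binom{j-2}{k-2}=\binom{m-1}{i+k-2}$, after which the power-of-two hypothesis enters through the fact that $\binom{2^t-1}{r}$ is odd for all $0\le r\le 2^t-1$. You instead sidestep the inversion entirely by the rewriting $JPJ=(JU_{\mathrm{LP}}J)L_{\mathrm{LP}}$, identify the right-hand side with $P^{\top}$ via the hockey-stick identity (independently of $m$), and then use Lucas' theorem and bitwise complementation to show $JPJ=P^{\top}$ precisely when $m=2^t$. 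Your approach is a bit more elementary and isolates the role of the hypothesis in a single self-contained symmetry $\binom{m-k}{m-i}\equiv\binom{i-1}{k-1}\pmod 2$; the paper's approach has the side benefit of recording that $L_{\mathrm{LP}}$ is an involution. Both arguments are short and clean; your handling of the degenerate first row/column of $L_{\mathrm{LP}}$ is fine.
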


\begin{proof}
First, analogously to the proof of Lemma~\ref{l-identities}, we get $L^{-1}_{\bluevar{\mathrm{LP}}}=L_{\bluevar{\mathrm{LP}}}$ by the identity
\cite[(5.21) and (5.12)]{GKP} 
$$
\sum_{j=2}^{m}\binom{i-2}{j-2} \binom{j-2}{k-2}=2^{i-k} \binom{i-2}{k-2}\overset{\mod2}{=}I_{ik}.
$$
Second, $PJL_{\bluevar{\mathrm{LP}}}=\bluevar{U_{\bluevar{\mathrm{LP}}}J}$ by the identity \cite[(5.26)]{GKP} 
$$
\sum_{j=2}^{m}\binom{m-j}{i-1} \binom{j-2}{k-2}=\binom{m-1}{i+k-2}\overset{\mod2}{=}\bluevar{(U_{\bluevar{\mathrm{LP}}})_{i,m-k+1}},
$$
where the latter holds since $m$ is a power of $2$.
So $PJL^{-1}_{\bluevar{\mathrm{LP}}}=\bluevar{U_{\bluevar{\mathrm{LP}}}J}$. 
\end{proof}

\end{document}